\newcommand{\rn}[1]{{\textcolor{black}{#1}}}
\newcommand{\lin}{{{\mbox{\bf Lin}}}}
\newcommand{\leon}{{{\mbox{\bf Leo}}}}
\newcommand{\ssigma}{\mbox{\boldmath $\sigma$}}
\newcommand{\pp}{{\mbox{\boldmath $p$}}}
\newcommand{\vv}{{\mbox{\boldmath $v$}}}
\newcommand{\bb}{{\mbox{\boldmath $\bid$}}}
\newcommand{\ttt}{{\mbox{\boldmath $t$}}}
\newcommand{\xx}{\mbox{\boldmath $x$}}
\newcommand{\zz}{\mbox{\boldmath $z$}}
\newcommand{\yy}{\mbox{\boldmath $y$}}
\newcommand{\sss}{\mbox{\boldmath $s$}}
\newcommand{\CA}{\mbox{${\mathcal A}$}}
\newcommand{\CM}{\mbox{${\mathcal M}$}}
\newcommand{\CL}{\mbox{${\mathcal L}$}}
\newcommand{\smi}{{\mbox{${\ssigma_{-i}}$}}}
\newcommand{\Real}{\mbox{${\mathbb R}$}}
\newcommand{\Rplus}{\mbox{${\mathbb R_+}$}}
\newtheorem{claim}{Claim}
\newcommand{\cal}{\mathcal}
\newcommand{\thmref}[1]{Theorem~\ref{#1}}
\newcommand{\lemref}[1]{Lemma~\ref{#1}}
\renewcommand{\vec}[1]{\mathbf{#1}}
\newcommand{\budget}{B}
\newcommand{\totbudget}{{\cal{B}}}
\newcommand{\bid}{b}
\newcommand{\eqx}{\tilde{\vec{x}}}
\newcommand{\eqp}{\tilde{p}}
\newcommand{\eqbb}{\tilde{\bb}}
\newcommand{\eqbid}{\tilde{\bid}}
\newcommand{\optx}{\vec{x}^*}
\newcommand{\devx}{\vec{x}'}
\newcommand{\eps}{\epsilon}
\newcommand{\simina}[1]{{\color{red}\noindent\textbf{SIMINA says }\marginpar{****}\textit{{#1}}}}
\newcommand{\vasilis}[1]{}
\newcommand{\ruta}[1]{{\color{blue}\noindent\textbf{RUTA says }\marginpar{****}\textit{{#1}}}}
\begin{document}

\title[Nash Social Welfare Approximation for Strategic Agents]{Nash Social Welfare Approximation for Strategic Agents}


\author{Simina Br\^{a}nzei}
\affiliation{
	\institution{Hebrew University of Jerusalem}
	\department{Computer Science and Engineering}
	\city{Jerusalem}
	\country{Israel}
}

\author{Vasilis Gkatzelis}
\affiliation{%
	\institution{Drexel University}
	\department{Computer Science}
	\streetaddress{}
	\city{Philadelphia}
	\state{PA}
	\country{USA}}
\author{Ruta Mehta}
\affiliation{%
	\institution{University of Illinois, Urbana-Champaign}
	\department{Computer Science}
	\city{Urbana}
	\state{IL}
	\country{USA}
}

\begin{abstract}
	The fair division of resources among strategic agents is an important age-old problem that has led to a rich body of literature. At the
	center of this literature lies the question of whether there exist mechanisms that can implement fair outcomes, despite the agents'
	strategic behavior. A fundamental objective function used for measuring the fairness of an allocation is the geometric mean of the 
	agents' values, known as the {\em Nash social welfare} (NSW). This objective function is maximized by
	widely known solution concepts such as Nash bargaining and the competitive equilibrium with equal incomes. 
	
	In this work we focus on the question of (approximately) implementing this objective. The starting point of our analysis is the Fisher
	market, a fundamental model of an economy, whose benchmark is precisely the (weighted) Nash social welfare. We begin by studying two extreme
	classes of valuations functions, namely perfect substitutes and perfect complements, and find that for perfect substitutes, the Fisher
	market mechanism yields a constant approximation: at most 2 and at least $e^{\frac{1}{e}}$ ($\approx$ 1.44). However, for perfect
	complements, the Fisher market mechanism does not work well, its bound degrading linearly with the number of players.
	
	Strikingly, the Trading Post mechanism---an indirect market mechanism also known as the Shapley-Shubik game---has significantly better
	performance than the Fisher market on its own benchmark. Not only does Trading Post achieve an approximation of 2 for perfect substitutes, but this bound holds for any concave utilities, and it becomes essentially optimal for perfect complements, where it reaches $(1+\epsilon)$ for
	any $\epsilon>0$. Moreover, we show that all the Nash equilibria of the Trading Post mechanism are pure (hence the approximation factors extend to all Nash equilibria), and satisfy an important notion of individual fairness known as proportionality. 
\end{abstract}

\maketitle

\section{Introduction}

The question of allocating resources among multiple participants in a way that is fair is as old as human society itself \cite{Moulin03}, with some of
the earliest recorded instances dating back to more than 2500 years ago\footnote{See, e.g., Hesiod's Theogony, 
	where a protocol known as \emph{Cut and Choose} is mentioned.}. The mathematical study of fair division began with the work of 
Steinhaus during the second world war, which led to an extensive and growing body of work on fair division protocols 
within economics and political science, e.g., \cite{Young94,BramsT96,RobertsonWebb98,Moulin03,Barbanel04}. Recent years have
seen an increased amount of work on fair division coming from computer science (see, e.g., \cite[Part II]{COMSOC}), partly 
motivated by problems related to allocating computational resources---such as CPU, memory, and bandwidth---among the users of 
a computing system. This work has focused on settings with both divisible goods (e.g., \cite{CLPP10, Pro13, CGG13a, BM15, PPS12,BCKP16}) 
and indivisible ones (e.g., Procaccia and Wang \citeyear{PW14}, \citet*{CCK09}, \citet*{CG15}, Aziz and Mackenzie \citeyear{AM16}).


One of the basic questions underlying the fair division problem is that of \emph{defining fairness} 
to begin with, and a large body of work in economics, particularly social choice theory, is concerned with this 
very question, with numerous solution concepts proposed in response. Our fairness concept of choice 
herein is the \emph{Nash social welfare} (NSW), which dates back to the fifties~\cite{Nash50} (also \cite{KN79}) and
has been proposed by Nash as a solution for bargaining problems, using an axiomatic approach. This objective 
aims to choose an outcome $\vec{x}$ maximizing the geometric mean of the utilities ($u_i(\vec{x})$) of the 
$n$ participating agents and, like other standard welfare objectives, it is captured by a family 
of functions known as generalized (power) means:
\begin{small}
	\[ M_p(\vec{x}) ~=~ \left(\frac{1}{n}\sum_i \left[u_i(\vec{x})\right]^p\right)^{1/p}.\]
\end{small}
In particular, the NSW corresponds to $M_0(\vec{x})$, the limit of $M_p(\vec{x})$ as $p$ goes to zero, i.e., $\left( \Pi_i
u_i(\vec{x})\right)^\frac{1}{n}$. 

While an extended treatment of the NSW can be found, for example, in~\cite{Moulin03}, we highlight a fundamental property of the NSW
objective, namely that it achieves \emph{a natural compromise between individual fairness and efficiency}.  Two other well-studied
functions captured by $M_p(\vec{x})$ are the $(i)$ {\em egalitarian} (max-min) objective attained as $p\rightarrow -\infty$, and $(ii)$
{\em utilitarian} (average) objective attained at $p=1$, which correspond to extreme fairness and extreme efficiency, respectively.
However, the former may cause vast inefficiencies, while the latter can completely neglect how unhappy some agents might be. The NSW
objective lies between these two extremes and strikes a natural balance between them, since maximizing the geometric mean leads to more
balanced valuations, but without neglecting efficiency.

The highly desired fairness and efficiency trade-off that the NSW objective provides can be verified via
its close connection with market equilibrium outcomes in the \emph{Fisher market} model---one of the fundamental 
resource allocation models in mathematical economics. This model was developed by Fisher~\cite{BSAD} 
and studied in an extensive body of literature~\cite{gale,EG,Ga76,Ea,DPSV,orlin,CDSYao04,BrunoVICALP04,homothetic,auction.gross}. 
The basic setting involves a seller who brings multiple divisible goods to the market and a set of buyers equipped  
with monetary endowments (budgets). The goal of the seller is to extract as much money as possible from the buyers, by charging money for the goods (through prices), while each buyer aims to acquire the best possible bundle of goods at the given prices.
A {\em market equilibrium} is an outcome where supply meets demand, and has been shown to exist for very general models of an economy. When the buyers have the same budgets, this outcome is known as a \emph{competitive equilibrium
	from equal incomes} (CEEI)~\cite{Varian74}. For a broad family of valuations, including the ones considered in this 
paper, the market equilibrium allocation is known to maximize the NSW objective when the budgets are equal. In other 
words, the seller's goal can be achieved by computing the allocation of the goods that maximizes the NSW which, 
in turn, implies the desired price for each good. 

A fundamental problem in implementing the Nash social welfare objective is informational: the seller needs to know the valuations of the participants.
When these valuations are private information of the buyers, a natural candidate is the mechanism induced by the Fisher market, known as the \emph{Fisher
	market mechanism}: ask the buyers to report their valuations, and then compute the NSW maximizing allocation based on
the reports. Unfortunately, it is well known that buyers can feign different interests and eventually 
get better allocations~\cite{ABGMS,CDZ11,CDZZ12,BBNR,BCDFFZ14}. This strategic behavior can result in the mechanism computing 
market equilibria with respect to preferences having little to do with reality, leading
to unfair allocations. 
In this work we address the following basic question:
\medskip
\begin{quote}
	\emph{How well does the Fisher market mechanism optimize its own objective---the Nash social welfare---when the participants 
		are strategic? Are there better mechanisms?}
\end{quote}
\medskip

This question falls under the general umbrella of implementation theory \cite{DHM}, and particularly, of implementing markets
\cite{SN,BCW,DS,GG,PS}. In this literature, the goal is to
identify mechanisms (game forms) for which the set of Nash equilibria coincides with the set of market equilibrium
allocations for every possible state of the world \cite{DHM}. In general this can be achieved only in the limit, as 
the number of players goes to infinity and every player is infinitesimal compared to the entire economy~\cite{DG}. 

In this work we show that even for small markets there exist classic mechanisms achieving outcomes that
closely approximate the optimal NSW on every instance while simultaneously guaranteeing individual fairness.
We measure the quality of a mechanism using the \emph{Price of Anarchy} (PoA)~\cite{NRTV07}, defined as the ratio 
between the optimal NSW 
and the NSW of the worst Nash equilibrium outcome obtained by the mechanism.

\subsection{Our Results}

We study the question of approximately implementing the NSW objective 
starting with two extensively studied classes of valuations, namely linear (or additive) and Leontief (\cite{NRTV07}, \cite{gale}, \cite{BrunoVICALP04}), two
extremes, and generalize many of our results to arbitrary concave valuations. Recall that additive valuations capture goods that are \emph{perfect substitutes}, i.e., that can replace each other in consumption, such
as Pepsi and Coca-Cola. Leontief valuations capture \emph{perfect complements}, i.e., goods that have no value without each other, such
as left and right shoes. 

In addition to the aggregate measure of wealth captured through the Nash welfare, we will analyze mechanisms that guarantee \emph{proportionality}, one of the fundamental fairness notions in fair division. Proportionality requires that every agent $i$ gets at least a fraction $B_i/\mathcal{B}$ of its utility for everything, where $B_i$ is its budget and $\mathcal{B}$ the sum of all budgets.

Our first set of results concerns the Fisher market mechanism, which collects the bidders valuations in the form of bids and then
computes the market equilibrium based on those bids: 

\bigskip

\noindent \textsc{Theorem} (informal) \emph{
	Any Nash equilibrium of the Fisher market mechanism approximate the  optimal Nash social welfare within a factor of 2
	for linear valuations. For Leontief valuations, the approximation degrades linearly with the number of players. Every Nash equilibrium of the mechanism is proportional for all concave valuations.
} 
\bigskip

These bounds reveal significant differences between the quality of the Nash equilibria of the Fisher market mechanism 
for complements and substitutes.

Much more strikingly, we find that a classic mechanism known as \emph{Trading Post}, originally introduced by Shapley and Shubik
\cite{ss-tp} and studied in a long line of work in different scenarios \cite{JP,DG,MS,KS}, offers very strong guarantees. At a high
level, rather than collecting the players' preferences and computing the market equilibrium, the Trading Post mechanism gives 
each participant direct control over how to spend its budget. Once the agents choose how to distribute their budget over the
available goods, they receive a fraction from each good that is proportional to the amount they spent on it.
Our results for the Trading Post mechanism are: 

\bigskip
\noindent \textsc{Theorem} (informal) \emph{
	Any Nash equilibrium of the Trading Post mechanism approximates the optimal Nash social welfare within a factor of 2 
	for all concave valuations. For Leontief valuations, the Trading Post mechanism achieves in the equilibrium an approximation of $1 + \epsilon$ for every
	$\epsilon > 0$. Moreover, all the Nash equilibria of Trading Post are pure for all concave valuations, proportional when the valuations are concave and increasing,
	and their existence is guaranteed for all CES valuations.} 
\bigskip

In other words, not only does the Trading Post mechanism achieve the same approximation as the Fisher market mechanism 
for additive valuations, but the bound holds for all concave valuations, and is arbitrarily close to the optimum for Leontief valuations!
We view this as an important result that testifies 
to the usefulness and robustness of the Trading Post mechanism. 
A good Nash welfare approximation implies the geometric mean of values is high, and so utility is well distributed across the participants. Trading Post also ensures proportionality in any equilibrium outcome, thus the mechanism guarantees a surprising combination of both individual and aggregate fairness for very general utilities. 


An interpretation of the Leontief result is that the Trading Post mechanism limits the extent to which an agent can
affect the outcome, thus also limiting the extent to which things can go awry. Specifically, when an agent deviates in
the Trading Post mechanism, this deviation has no effect on the way that the other agents are spending their money. On
the other hand, an agent's deviation in the Fisher market mechanism can lead to a market equilibrium where the other
agents' spending and allocation has changed significantly. In addition to this, in the Fisher market mechanism an agent 
can affect the price of an item even if the agent does not end up spending on that item in the final outcome. This is in
contrast to the Trading Post mechanism where an agent can affect only the prices of the items that this agent is spending
on, so the agents are forced to ``put their money where their mouth is''.

Finally, we prove that the set of mixed Nash equilibria of the Trading Post mechanism coincides with the set of pure Nash 
equilibria \rn{even with concave valuations}, which extends our approximation bounds for this mechanism to mixed PoA. 
Moreover, both mechanisms achieve a classic notion of
individual fairness in the equilibrium, namely proportionality. 
All of the above results work for the weighted version of the Nash social welfare, where
agent $i$ has a $w_i$ fraction of the total budget, and the NSW objective becomes $\Pi_i \left[u_i(\vec{x})\right]^{w_i}$. 
\rn{In the process of obtaining a near optimal bound for Leontief in the Trading Post mechanism, we show that $\epsilon$-approximate 
	market equilibria for Leontief utilities approximate its Nash social welfare by a factor of $\frac{1}{(1+\epsilon)}$. 
	We believe that this as well as the technique that we developed to show the bounds of 2 may be of independent interest.
}

\subsection{Related Work}
The paper most closely related to our work is that of~\cite{CGG13a} which proposes \emph{truthful}
mechanisms for approximately maximizing the Nash social welfare objective. One of the truthful
mechanisms that they propose, the Partial Allocation mechanism, guarantees a $2.718$ approximation 
of the optimal NSW for both linear and Leontief valuations. In fact, the Partial Allocation 
mechanism guarantees that \emph{every agent} receives a $2.718$ approximation of the value that it 
would receive in the market equilibrium. But, in order to ensure truthfulness, this mechanism is 
forced to keep some of the goods unallocated, which makes it inapplicable for many real world settings. 
Complementing this mechanism, our work analyzes simple and well-studied mechanisms that allocate everything.

Most of the literature on fair division starting from the 1940's deals with the cake-cutting 
problem, which models the allocation of a divisible heterogeneous resource such as land, time, and
mineral deposits, among agents with different preferences~\cite{Young94,BramsT96,RobertsonWebb98,Moulin03,Barbanel04}.
Some recent work has studied the agents' incentives in cake cutting.
In particular, \cite{CLPP10} study truthful cake-cutting with agents 
having piecewise uniform valuations and provide a polynomial-time mechanism that is truthful, 
proportional, and envy-free, while \cite{MosselT10} shows that for general valuations there exists a protocol that is truthful in expectation,
envy-free, and proportional for any number of players.
The work of \cite{MN12} shows that truthfulness comes at a significant
cost in terms of efficiency for direct revelation mechanisms, while \cite{BM15} show that 
the only strategyproof mechanisms in the standard query model for cake cutting are dictatorships (even for two players; a similar
impossiblity holds for $n > 2$).  The standard cake cutting model assumes additive valuations, and so it does not capture resources
with Leontief valuations, which we also analyze in this paper.


The resource allocation literature has seen a resurgence of work studying fair and efficient
allocation for Leontief valuations~\cite{GZH11,DFH12,PPS12,GN12}. These valuations exhibit perfect
complements and they are considered to be natural valuation abstractions for computing settings where
jobs need resources in fixed ratios. \cite{GZH11} defined the notion of Dominant Resource Fairness (DRF), 
which is a generalization of the egalitarian social welfare to multiple types of resources. This 
solution has the advantage that it can be implemented truthfully for this specific class of valuations.
\cite{PPS12} assessed DRF in terms of the resulting efficiency, showing that it performs
poorly. \cite{DFH12} proposed an alternate fairness criterion called Bottleneck Based Fairness, 
which was subsequently showed by \cite{GN12} to be satisfied by the proportionally fair allocation. \cite{GN12}
also posed the study of incentives related to this latter notion as an interesting open problem.
It is worth noting that \cite{GZH11} acknowledge that the CEEI, i.e., the NSW maximizing allocation
would actually be the preferred fair division mechanism in their setting, and that the main drawback 
of this solution is the fact that it cannot be implemented truthfully. Our results show that the
Trading Post mechanism can, in fact, approximate the CEEI outcome arbitrarily well, thus shedding
new light on this setting.

The Trading Post mechanism, also known as the Shapley-Shubik game~\cite{ss-tp}, 
has been studied in an extensive body of literature over the years, sometimes under very different names, 
such as Chinese auction~\cite{Matros07}, proportional sharing mechanism (see, e.g., \cite{FLZ09}), and the 
Tullock contest in rent seeking~\cite{Tullock80,Fang02,Moldovanu01}, the latter being a variant 
of the game with a different success probability for items that nobody bid on. Trading Post can 
also be interpreted as a congestion game (see, e.g., ~\cite{GPP06}), or an all-pay auction 
when the budgets are intrinsically valuable to the players.

The fact that, facing the Fisher mechanism, the agents may gain by bidding strategically is well known.
\cite{ABGMS} studied the agents' incentives and proved existence and structural properties 
of Nash equilibria for this mechanism. Extending this work, \cite{CDZ11, CDZZ12} proved 
bounds on the extent to which an agent can gain by misreporting for various classes of valuation 
functions, including additive and Leontief. Finally, \cite{BCDFFZ14} showed bounds for the price of anarchy
of this mechanism with respect to the social welfare objective, and \cite{CT} studied large markets 
under mild randomness and showed that this price of anarchy converges to one. 

%
Finally, recent work on the NSW has revealed additional appealing properties of this objective. For indivisible
items the NSW can be approximated in polynomial time~\cite{CG15, CDGJMVY17, Anari1, Anari2}
and its optimal allocation is approximately envy-free~\cite{rCKMPSW16}. On the other hand, for divisible items, it 
can be used as an intermediate step toward approximating the normalized social welfare objective~\cite{CGG13b}.

%
%
%



\section{Preliminaries}\label{sec:prel}

Let $N = \{1, \ldots, n\}$ be a set of players (agents) and $M = \{1, \ldots, m\}$ a set of divisible goods. 
Player $i$'s utility for a bundle of goods is represented by a non-decreasing \rn{non-negative concave} valuation function $u_i:[0,1]^m
\rightarrow \Rplus$. An allocation $\vec{x}$ is a partition of the goods to the players such that $x_{i,j}$ 
represents the amount of good $j$ received by player $i$. Our goal will be to allocate all the resources fully; 
it is without loss of generality to assume that a single unit of each good is available, thus the set of
feasible allocations is $\mathcal{F}=\left\{\vec{x}~|~x_{i,j}\geq 0 \text{ and } \sum_{i=1}^{n} x_{i,j}= 1\right\}$.

Our measure for assessing the quality of an allocation is its Nash social welfare. 
At a given allocation $\vec{x}$ it is defined as follows 
\[
\mbox{NSW}(\vec{x}) = \left( \prod_{i=1}^{n} u_i(\vec{x}_i)\right)^{\frac{1}{n}}.
\]

In order to also capture situations where the 
agents may have different importance or priority, such as clout in bargaining scenarios, we also consider the weighted version of the Nash social welfare objective. Note this is the objective maximized by the Fisher market equilibrium solution when the buyers have different budgets. We slightly abuse notation and refer to the weighted objective as the Nash social welfare (NSW) as well. If $\budget_i\geq 1$ is the budget of agent $i$ and 
$\totbudget =\sum_{i=1}^n \budget_i$ is the total budget, the induced market equilibrium in Fisher's model 
maximizes the objective:
\begin{equation*}
\mbox{NSW}(\vec{x}) = \left( \prod_{i=1}^{n} u_i(\vec{x}_i)^{\budget_i}\right)^{\frac{1}{\totbudget}}.
\end{equation*}

Note that we get back the original definition when all players have the same budget. 
We would like to find mechanisms that maximize the NSW objective in the presence of strategic 
agents whose goal is to maximize their own utility.

We measure the quality of the mechanisms using the {\em price of anarchy}~\cite{NRTV07} with respect
to the NSW objective. Given a problem instance $\cal I$ and some mechanism $\cal M$ that yields 
a set of pure Nash equilibria $E$, the price of anarchy (PoA) of $\cal M$ for $\cal I$ 
is the maximum ratio between the optimal NSW---obtained at some allocation $\vec{x}^{*}$---and 
the NSW at an allocation $\vec{x} \in E$:
$
\mbox{PoA}({\cal M, I}) ~=~ \max_{\vec{x}\in E}\left\{\frac{\text{NSW}(\vec{x}^*)}{\text{NSW}(\vec{x})}\right\} 
$. The price of anarchy of $\cal M$ is the maximum of this value over all possible instances: $\max_{\cal I}\{ \text{PoA}(\cal M, \cal I)\}$.

\medskip

\noindent{\bf Valuation Functions.} We start with two very common and extensively studied valuation functions that lie at
the two extremes of a spectrum: {\em perfect substitutes} and {\em perfect complements}. For both, let $\vv_i=(v_{i,1},\dots, v_{i,m}) \in
\mathbb{R}_+^m$ be a vector of valuations for agent $i$, where $v_{i,j}$ captures the liking of agent $i$ for good $j$.  


{\em Perfect substitutes}, defined mathematically through \emph{linear (additive) valuations}, 
represent goods that can replace each other in
consumption, such as Pepsi and Coca-Cola. In the additive model, the utility of a player $i$ for bundle $\vec{x}_i$ is $u_i(\vec{x}_i)
= \sum_{j=1}^{m} v_{i,j} \cdot x_{i,j}$. 

\emph{Perfect complements}, represented by \emph{Leontief utilities}, capture scenarios where one good
may have no value without the other, such as a left and a right shoe, or the CPU time and computer memory required for the
completion of a computing task. In the Leontief model, the utility of a player $i$ for a bundle $\vec{x}_i$ is $u_i(\vec{x}_i) = \min_{j=1}^{m} \left\{x_{i,j}/v_{i,j}\right\}$; that is, player $i$ desires the items in the ratio $v_{i,1}:v_{i,2}: \ldots :v_{i,m}$. Coefficients can be rescaled freely in the Leontief model, so
w.l.o.g. $v_{i,j} \geq 1$.

Finally, perfect substitutes and complements are extreme points of a much more general class of valuations known as CES (constant elasticity of substitution), mathematically defined as 
\[
u_i(\vec{x}_i)=\big(\sum_{j=1}^m
v_{ij} \cdot x_{ij}^\rho\big)^\frac{1}{\rho}
\]
where $\rho$ parameterizes the family, and $-\infty <\rho\le 1, \;
\rho\neq 0$. The Leontief and additive utilities are obtained when $\rho$ approaches $-\infty$ and equals 1, respectively.

\section{The Fisher Market Mechanism}


In the Fisher market model, given prices $\pp=(p_1,\dots,p_m)$, each buyer $i$ demands a bundle $\xx_i$ 
that maximizes her utility subject to her budget constraints; we call this an optimal bundle of buyer $i$ 
at prices $\pp$. Prices $\pp$ induce a market equilibrium if buyers get their optimal bundle and market 
clears. Formally, prices $\pp$ and allocation $\xx$ constitute a market equilibrium if 
\begin{enumerate}
	\item Optimal bundle: $\forall i \in N$ and $\forall \yy: \yy\cdot \pp \le \budget_i$, $\; u_i(\vec{x}_i)\geq u_i(\yy)$
	\item Market clearing: Each good is fully sold or has price zero, i.e., $\forall j \in M$, $\sum_{j=1}^m x_{i,j}\le 1$, and equality
	holds if $p_j>0$. Each buyer exhausts all its budget, i.e., $\forall i \in N, \ \sum_{j=1}^m x_{i,j}p_j = \budget_i$.
\end{enumerate}

For linear and Leontief valuations, the market equilibria can be computed using
the Eisenberg-Gale (EG) convex program formulations that follow.

\begin{equation}\label{eq:eg}
\begin{array}{cc}
\mbox{\bf Linear} & \mbox{\bf Leontief}\\
\begin{array}{rcl}
\mbox{max } \ \ & & \sum^{n}_{i=1} \budget_i \cdot \log u_i\\
\mbox{s.t. } \ \ & & u_i = \sum_{j=1}^m v_{i,j} x_{i,j}, \forall i \in N\\
& & \sum^n_{i=1} x_{i,j}\leq 1,\ \ \forall \ j \in M \\
& & x_{i,j}\geq 0,\ \ \forall \ i \in N, j \in M
\end{array}
\ \ \ \ \ \ \ & \ \ \ \ \ \ \ 
\begin{array}{rcl}
\mbox{max } \ \ & & \sum^{n}_{i=1} \budget_i \cdot \log u_i \\
\mbox{s.t. } \ \ & & u_i \le \frac{x_{i,j}}{v_{i,j}},\ \forall i \in N, j \in M \\
& & \sum^n_{i=1} x_{i,j}\leq 1,\ \ \forall \ j \in M \\
& & x_{i,j}\geq 0,\ \ \forall \ i \in N, j \in M
\end{array}
\end{array}
\end{equation}

Let $p_j$ be the dual variable of the second inequality (for good $j$) in both cases, which corresponds to price of good $j$. 
Since due to strong duality Karush-Kuhn-Tucker (KKT) conditions capture solutions of the formulations, we get the following
characterization for market equilibria.

For linear valuations, an outcome $(\xx,\pp)$ is a market equilibrium if and only if, 
\begin{itemize}
	\item[$\lin_1$] $\forall i \in N$ and $\forall j \in M$, $x_{i,j}>0 \Rightarrow \frac{v_{i,j}}{p_j} = \max_{k \in M} \frac{v_{i,k}}{p_k}$.
	\item[$\lin_2$] $\forall i \in N$, $\sum_{j\in M} x_{i,j} p_j=\budget_i$. $\forall j \in M$, either $\sum_{i \in N} x_{i,j} = 1$ or
	$p_j=0$.
\end{itemize}
\medskip

For Leontief valuations, an outcome $(\xx,\pp)$ is a market equilibrium if and only if.
\begin{itemize}
	\item[$\leon_1$] $\forall i \in N$ and $\forall j \in M$, $v_{i,j}>0 \Rightarrow u_i=\frac{x_{i,j}}{v_{i,j}}=\frac{\budget_i}{\sum_{j \in M}
		v_{i,j}p_j}$. Note that $\sum_{j \in M} v_{i,j}p_j$ is the amount buyer $i$ has to spend to get unit utility.
	\item[$\leon_2$] $\forall i \in N$, $\sum_{j\in M} x_{i,j} p_j=\budget_i$. $\forall j \in M$, either $\sum_{i \in N} x_{i,j} = 1$ or
	$p_j=0$. 
\end{itemize}

The Fisher market mechanism asks that the agents report their valuations and then
it computes the market equilibrium allocation with respect to the reported valuations
using the EG formulations.

\begin{definition}[Fisher Market Mechanism]
	The \emph{Fisher Market Mechanism} is such that:
	\begin{itemize}
		\item The strategy space of each agent $i$ consists of all possible valuations the agent may pose: $S_i = \{\vec{s}_i \; | \; \vec{s}_i \in \mathbb{R}^{m}_{\ge 0}\}$. We refer to an agent's strategy as a \emph{report}.
		\item Given a strategy profile $\vec{s} = (\vec{s}_i)_{i=1}^{n}$, the outcome of the game is a market equilibrium of the Fisher market given by $\langle \budget_i, \vec{s}_i \rangle$, after removing the items $j$ for which $\sum_{i \in N} s_i(j)=0$. 
		If there exists a market equilibrium $\mathcal{E}$ preferred by all the agents (with respect to their true valuations), then $\mathcal{E}$ is the outcome of the game on $\langle \budget_i, \vec{s}_i\rangle$. Otherwise,
		the outcome is any fixed market equilibrium. 
	\end{itemize}
\end{definition}

It is well known that the buyers have incentives to hide their true valuations in a Fisher market mechanism. We illustrate this
phenomenon through an example.

\begin{example}
	Consider a Fisher market with players $N = \{1, 2\}$, items $M= \{1,2\}$,
	additive valuations $v_{1,1} = 1$, $v_{1,2} = 0$, $v_{2,1} = v_{2,2} = 0.5$, and budgets equal to $1$.
	If the players are truthful, the market equilibrium allocation is $\vec{x}_1 = (1, 0)$, $\vec{x}_2 = (0,1)$.
	However, if player $2$ pretended that its value for item $2$ is a very small $v_{2,2}' = \epsilon > 0$,
	then player $2$ would not only get item $2$, but also a fraction of item $1$.
\end{example}

In the rest of this section we study the performance of the Fisher mechanism when the goods are substitutes and complements, respectively.

\subsection{Fisher Market: Perfect Substitutes}
In this section we study efficiency loss in the Fisher market due to strategic agents with additive valuations. 
We note that pure Nash equilibria in the induced game are known to always exist due to \cite{ABGMS}.\footnote{The 
	existence is shown under a {\em conflict-free} tie breaking rule, which tries to allocate best bundle to as many 
	agents as possible when there is a choice.}
Our first main result states that the Fisher market approximates the Nash Social Welfare 
within a small constant factor, even when the players are strategic.

\begin{theorem} \label{thm:Fisher_additive_upper_bound}
	The Fisher Market Mechanism with linear valuations has price of anarchy at most 2.
\end{theorem}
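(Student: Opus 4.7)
My plan is to bound the NSW approximation at any pure Nash equilibrium via a best-response deviation for each player combined with an aggregation on the log-scale. Let $\tilde{\vec{s}}$ be a pure Nash equilibrium of the Fisher Market Mechanism, inducing the market equilibrium $(\tilde{\vec{x}},\tilde{\vec{p}})$ of the Fisher market on reports $\tilde{\vec{s}}$, and let $\vec{x}^{\star}$ be the NSW-maximizing allocation. By the Eisenberg--Gale formulation in~(\ref{eq:eg}), $\vec{x}^{\star}$ coincides with the equilibrium allocation of the true-valuation Fisher market at some prices $\vec{p}^{\star}$ satisfying $u_i(\vec{x}^{\star}_i) = \alpha^{\star}_i \budget_i$ for $\alpha^{\star}_i := \max_k v_{i,k}/p^{\star}_k$ and $\sum_k p^{\star}_k = \totbudget$. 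Writing $\tilde{u}_i := u_i(\tilde{\vec{x}}_i)$ and $u_i^{\star} := u_i(\vec{x}^{\star}_i)$, the goal is to show $\prod_i (u_i^{\star}/\tilde{u}_i)^{\budget_i} \leq 2^{\totbudget}$.

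For each player $i$ I would consider the deviation in which $i$ reports truthfully, $\vec{s}'_i = \vec{v}_i$. Let $(\vec{x}',\vec{p}')$ denote the market equilibrium of the Fisher market on reports $(\vec{v}_i, \tilde{\vec{s}}_{-i})$. By the linear market-equilibrium conditions $\lin_1$ and $\lin_2$ applied to $i$, its true utility at the deviation equals $\alpha'_i \budget_i$ with $\alpha'_i = \max_k v_{i,k}/p'_k$, so the Nash best-response inequality yields $\tilde{u}_i \geq v_{i,k}\budget_i/p'_k$ for every item $k$. Choosing $k_i$ to be an item in the support of $\vec{x}^{\star}_i$ (so that $v_{i,k_i} = \alpha^{\star}_i p^{\star}_{k_i}$) produces the per-player comparison $\tilde{u}_i \geq u_i^{\star}\cdot (p^{\star}_{k_i}/p'_{k_i})$. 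Taking logs and summing with weights $\budget_i$, the problem reduces to showing $\sum_i \budget_i \log(p'_{k_i}/p^{\star}_{k_i}) \leq \totbudget \log 2$, which I would attempt to establish using total-budget conservation ($\sum_k p'_k = \sum_k p^{\star}_k = \totbudget$), concavity of $\log$, and a weighted choice of $k_i$ matched to $x^{\star}_{i,k}$.

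The main obstacle I expect is exactly this price-shift step: when player $i$ deviates from $\tilde{\vec{s}}_i$ to $\vec{v}_i$, the equilibrium prices $\vec{p}'$ can re-adjust in coupled ways across items because the other players' Nash reports $\tilde{\vec{s}}_{-i}$ are strategically tuned and not truthful, so there is no a priori monotone relationship between $\vec{p}'$ and $\vec{p}^{\star}$. An alternative route that avoids tracking $\vec{p}'$ on all coordinates is to use a single-item deviation in which player $i$ bids all of its budget on one strategically-chosen good $k_i$: then player $i$'s share of $k_i$ is $\budget_i/p'_{k_i}$, and an upper bound of the form $p'_{k_i} \leq \budget_i + (\text{others' spending on } k_i)$ from $\lin_2$ converts the price-shift problem into a direct spending bound. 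In either formulation, the universal constant $2$ must emerge from a tight trade-off between the cost of player $i$'s optimal bundle (captured by $p^{\star}_{k_i}$) and the spending of the remaining players on the targeted items, summed on the budget-weighted log-scale; pinning this trade-off down is the technical heart of the argument.
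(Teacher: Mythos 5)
Your high-level plan---bound each player's optimal/equilibrium utility ratio via a per-player best-response deviation and then aggregate with the budget-weighted AM--GM inequality to turn a sum bound into a geometric-mean bound---matches the paper's strategy. However, the proposal stops exactly where the hard work begins and the sketch you give has two genuine gaps that the paper resolves with different technical choices.

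First, you correctly flag the coupled price-readjustment problem and leave it open; this is precisely what the paper's Lemma~\ref{lem:price_change} handles. That lemma says that when player $i$ unilaterally changes its report, so prices move from $\vec{p}$ to $\vec{p}'$, the total price increase over the goods that strictly increased is at most $B_i$: $\sum_{j: p'_j > p_j} p'_j \le B_i + \sum_{j: p'_j > p_j} p_j$. Its proof observes (via the KKT condition $\lin_1$) that any agent $k\neq i$ buying a good in $M^+$ at the new prices cannot be buying anything from $M^-\cup M^=$, so all other agents buying from $M^+$ at $\vec{p}'$ were already spending their entire budget on $M^+$ at $\vec{p}$; hence the only fresh money that can flow into $M^+$ is $i$'s own budget. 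Without this (or an equivalent) control on how prices can shift when a single report changes, neither of your two routes closes.

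Second, your single-item comparison introduces a $1/p^{\star}_{k_i}$ factor into the per-player bound ($u_i^{\star}/\tilde u_i \le (B_i+\tilde p_{k_i})/p^{\star}_{k_i}$, say), and this does not aggregate to a universal constant: if many players share the same support item, $\tilde p_{k_i}$ can be as large as $\mathcal{B}$ while $p^{\star}_{k_i}$ is small, and there is no telescoping. The paper instead compares $i$'s deviation bundle against the \emph{entire scaled-down} optimal bundle $x'_{ij}=B_i x^{\star}_{ij}/\sum_k x^{\star}_{ik}p^i_k$, which is affordable at the deviation prices $\vec{p}^i$ and lives exclusively on the support of $\vec{x}^{\star}_i$; this yields the additive form $\rho_i B_i \le \sum_j x^{\star}_{ij} p^i_j \le B_i + \sum_j x^{\star}_{ij}\tilde p_j$ (the second inequality via the price-change lemma). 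Summing over $i$, the cross term telescopes cleanly: $\sum_i \sum_j x^{\star}_{ij}\tilde p_j = \sum_j \tilde p_j = \mathcal{B}$, giving $\sum_i \rho_i B_i \le 2\mathcal{B}$, and then AM--GM finishes. So the bundle-level comparison, not the single-item comparison, is what makes the constant $2$ drop out. A minor additional difference: the paper's deviation reports true values only on the support of $\vec{x}^{\star}_i$ and zero elsewhere (rather than fully truthfully), though either choice supports the affordability argument.
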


We first prove a useful lemma, which bounds the change in prices due to a unilateral deviation in the Fisher market with additive valuations.

\begin{lemma}\label{lem:price_change}
	Let $\vec{p}$ be the prices in a Fisher market equilibrium. Suppose that buyer $i$ unilaterally changes its reported values to
	$\vec{v}_i'$, leading to new market equilibrium prices $\vec{p}'$. Then, 
	\[\sum_{j:\; p'_j >p_j}p'_j ~\leq~ \budget_i+\sum_{j:\; p'_j >p_j}p_j.\]
\end{lemma}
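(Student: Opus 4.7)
The plan is to let $S = \{j : p'_j > p_j\}$ be the set of items whose prices strictly increased, and to show that the buyers other than $i$ collectively do not increase their spending on $S$ when going from the old to the new equilibrium. Then the entire aggregate increase $\sum_{j \in S}(p'_j - p_j)$ in spending on $S$ must be attributable to buyer $i$, whose total spending on $S$ is bounded by $\budget_i$.

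First I would verify that for each $j \in S$, the total spending (summed over all buyers) equals $p_j$ in the old equilibrium and $p'_j$ in the new. This is immediate when prices are positive by the market-clearing condition $\lin_2$, and when $p_j = 0$ the total spending on $j$ is trivially zero. Write $y_{k,j}$ and $y'_{k,j}$ for the spending of buyer $k$ on good $j$ in the two equilibria.

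The heart of the argument is the claim: for every buyer $k \neq i$, $\sum_{j \in S} y'_{k,j} \leq \sum_{j \in S} y_{k,j}$. To prove this I would invoke $\lin_1$: pick any $j_1 \in S$ with $y'_{k,j_1} > 0$ (so $j_1$ is in $k$'s MBB set under $\vec{p}'$) and any $j_2 \notin S$. Since $k$'s valuations are unchanged and $p'_{j_1} > p_{j_1}$ while $p'_{j_2} \leq p_{j_2}$, we obtain the chain
\[
\frac{v_{k,j_1}}{p_{j_1}} \;>\; \frac{v_{k,j_1}}{p'_{j_1}} \;\geq\; \frac{v_{k,j_2}}{p'_{j_2}} \;\geq\; \frac{v_{k,j_2}}{p_{j_2}},
\]
so $j_1$ strictly dominates $j_2$ on bang-per-buck in the old equilibrium. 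Hence $k$'s old MBB set lies entirely in $S$, which forces $\sum_{j \in S} y_{k,j} = \budget_k \geq \sum_{j \in S} y'_{k,j}$. (If $k$ does not spend on $S$ in the new equilibrium, the inequality is trivial.)

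Summing over $k \neq i$ and using $\sum_{j \in S} y'_{i,j} \leq \budget_i$ yields
\[
\sum_{j \in S} p'_j \;=\; \sum_{j \in S} y'_{i,j} + \sum_{k \neq i} \sum_{j \in S} y'_{k,j} \;\leq\; \budget_i + \sum_{k \neq i} \sum_{j \in S} y_{k,j} \;\leq\; \budget_i + \sum_{j \in S} p_j,
\]
as desired. The main subtlety is handling zero-price edge cases, which is where I expect the argument to require care: for example, if $p_{j_2} = 0$ and $v_{k,j_2} > 0$ for some $j_2 \notin S$, then $p'_{j_2} = 0$ too, so $k$'s new MBB is infinite and attained only outside $S$, making the key claim vacuous; a parallel check disposes of items that $k$ does not value, so the bang-per-buck comparison remains valid throughout.
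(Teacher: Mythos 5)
Your proof is correct and takes essentially the same approach as the paper's: both use the KKT/bang-per-buck condition $\lin_1$ to show that any buyer $k\neq i$ who spends on the increased-price set in the new equilibrium must already have been spending its entire budget there in the old equilibrium, so the aggregate increase on that set is attributable only to $i$. Your presentation (directly exhibiting the strict domination chain) and the paper's (deriving a contradiction from cross-spending) are the same argument stated two ways, and your explicit handling of the zero-price and zero-value corner cases is a mild tightening rather than a departure.
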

\begin{proof}
	
	Let $M^+$, $M^-$, and $M^=$ be the sets of goods whose prices have strictly increased, strictly decreased, and 
	remained unchanged, respectively, when transitioning from $\vec{p}$ to $\vec{p}'$. 
	If some player $k\neq i$ is buying any fraction of a good from $M^-$ or $M^=$ at $\vec{p}'$, then the player 
	cannot be buying anything from $M^+$ at $\vec{p}'$. To verify this fact, assume that player $k$ was 
	spending on some item $\alpha\in M^- \cup M^=$ at prices $\vec{p}$ and is now spending on some item $\beta\in M^+$
	at prices $\vec{p}'$. Since both these allocations are market equilibria, $\lin_1$ implies that 
	$\frac{v_{k,\alpha}}{p_{\alpha}} \geq \frac{v_{k,\beta}}{p_{\beta}}$ at prices $\vec{p}$ and
	$\frac{v_{k,\alpha}}{p'_{\alpha}} \leq \frac{v_{k,\beta}}{p'_{\beta}}$ at prices $\vec{p}'$. Since $p'_\alpha \leq p_\alpha$
	and $p'_\beta > p_\beta$, this leads to the contradiction that $v_{k,\alpha}> v_{k,\alpha}$.
	
	Thus, any player, other than $i$, who is buying goods from $M^+$ at prices $\vec{p}'$ had to be spending all
	its budget on these items at prices $\vec{p}$. This implies that the only reason why the sum of the prices
	of the goods in $M^+$ could increase is because player $i$ is contributing more money on these items. Since the budget of $i$ is $B_i$, the total increase in these prices is at most $B_i$. This concludes the proof.
\end{proof}

In addition to this, we will be using a folklore weighted arithmetic and geometric mean inequality stated in the following lemma.
\begin{lemma}\label{lem:budgets_bound}
	For any nonnegative numbers $\rho_1, \rho_2,\dots, \rho_n$ and $w_1, w_2, \dots, w_n$ with
	$W = \sum_{i=1}^n w_i$, we have 
	$\left(\prod_{i=1}^n \rho_i^{w_i}\right)^{1/W} ~\leq~ \frac{\sum_{i=1}^n \rho_i w_i}{W}.$
\end{lemma}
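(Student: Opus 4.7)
The plan is to deduce this folklore inequality from the concavity of the logarithm via Jensen's inequality. First I would handle the degenerate cases: if $W=0$ then all $w_i=0$ and both sides equal the empty product/sum (interpreted as $1$ and $0$ respectively, under the convention $0^0=1$), or we can take $W>0$ without loss. If some $\rho_i=0$ with $w_i>0$, the left-hand side is $0$ while the right-hand side is a sum of nonnegative terms, so the inequality is trivial. We may therefore assume $W>0$ and $\rho_i>0$ for all $i$ with $w_i>0$ (indices with $w_i=0$ contribute nothing to either side and can be dropped).

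Next, normalize by setting $\alpha_i = w_i/W$, so that $\alpha_i \geq 0$ and $\sum_{i=1}^n \alpha_i = 1$. Since $\log$ is a concave function on $(0,\infty)$, Jensen's inequality gives
\[
\log\!\left(\sum_{i=1}^n \alpha_i \rho_i\right) \;\geq\; \sum_{i=1}^n \alpha_i \log \rho_i \;=\; \log\!\left(\prod_{i=1}^n \rho_i^{\alpha_i}\right).
\]
Exponentiating both sides and substituting $\alpha_i = w_i/W$ yields
\[
\frac{\sum_{i=1}^n w_i \rho_i}{W} \;\geq\; \prod_{i=1}^n \rho_i^{w_i/W} \;=\; \left(\prod_{i=1}^n \rho_i^{w_i}\right)^{1/W},
\]
which is exactly the claim.

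There is no real obstacle here; the only subtlety is bookkeeping the boundary cases $\rho_i=0$ and $w_i=0$ so that the invocation of Jensen's inequality is justified on the positive reals where $\log$ is concave. An equivalent route would be to apply the elementary inequality $x \leq e^{x-1}$ termwise after renormalization, but the Jensen-based argument is the most compact.
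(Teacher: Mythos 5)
Your proof is correct, and it is the standard route: Jensen's inequality applied to the concave logarithm, with the boundary cases $W=0$, $w_i=0$, and $\rho_i=0$ handled explicitly. The paper itself does not supply a proof --- it states the lemma as a ``folklore weighted arithmetic and geometric mean inequality'' and invokes it directly --- so there is no alternative argument to compare against; your write-up is a clean, self-contained justification of exactly the fact the authors assume.
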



We can now prove the main theorem.

\begin{proof}(of \thmref{thm:Fisher_additive_upper_bound})
	Given a problem instance with additive valuations, 
	let $\vec{x}^*$ be the allocation that maximizes the Nash social welfare---i.e., the market equilibrium allocation
	with respect to the true valuations---and $\tilde{\vec{x}}$ and $\tilde{\vec{p}}$ the allocation 
	and prices respectively obtained under some Nash equilibrium of the market, where the players report fake valuations
	$\tilde{\vec{v}}$. Additionally, for each player $i$, let $\vec{x}^i$ be the allocation that would 
	arise if every player $k\neq i$ reported $\tilde{\vec{v}}_k$ while $i$ reported its true 
	value $v_{ij}$ for every item $j$ with $x^*_{ij}>0$, and zero elsewhere. 
	
	Since $\tilde{\vec{v}}$ is an equilibrium, this unilateral deviation of $i$ cannot
	increase its utility:
	\begin{equation}\label{ineq:NashEq}
	u_i(\tilde{\vec{x}}_i) ~\geq~ u_i(\vec{x}^i_{i}) ~=~ \sum_{j=1}^{m} x^i_{ij} v_{ij}.
	\end{equation}
	
	Since $\vec{x}^i$ is a market equilibrium with respect to the reported values,
	according to the KKT condition ($\lin_1$), $x^i_{ij}>0$ implies 
	$v_{ij}/p^i_j \geq v_{ik}/p^i_k$ for any other item $k$, where $p^i_j$ is 
	the price of item $j$ in $\vec{x}^i$. Also, the KKT condition ($\lin_2$) imply that 
	$\sum_j x^i_{ij}p^i_j =\budget_i$ for every bidder $i$. 
	Therefore, $\vec{x}^i_i$ provides $i$ at least as much value as any other bundle 
	that $i$ can afford facing prices $p^i$, i.e., any bundle that costs at most $B_i$. 
	In particular, consider the allocation $\vec{x}'$ such that
	\[
	x'_{ij} = \frac{\budget_i x^*_{ij}}{\sum_{k=1}^{m} x^*_{ik}p^i_k}.
	\]
	For this allocation we have:
	\[
	\sum_{j=1}^{m} x'_{ij}p^i_j ~=~ \sum_{j=1}^{m} \left( \frac{\budget_i x^*_{ij}}{\sum_{k=1}^{m} x^*_{ik}p^i_k} \right) \cdot p^i_j
	~=~ \frac{\budget_i \sum_{j=1}^m x^*_{ij}p^i_j}{\sum_{k=1}^{m} x^*_{ik}p^i_k} ~=~\budget_i.
	\]
	Therefore, player $i$ can afford this bundle of items using the budget $\budget_i$,
	which implies that $\vec{x}^i_i$ provides $i$ at least as much value, i.e.,
	\[u_i(\vec{x}^i_{i}) ~\geq~ u_i(\vec{x}'_i) ~=~ \sum_{j=1}^{m} v_{i,j}x'_{i,j} = \frac{\budget_i u_i(\vec{x}^*_{i})}{\sum_{j=1}^{m}
		x^*_{i,j}p^i_j}.\]
	Let $\rho_i$ denote the ratio $\frac{u_i(\vec{x}^*)}{u_i(\tilde{\vec{x}}_i)}$. 
	Using the Nash equilibrium inequality~\eqref{ineq:NashEq}, we get:
	\begin{equation*}
	u_i(\tilde{\vec{x}}_i) \geq \frac{\budget_i u_i(\vec{x}^*_i)}{\sum_{j=1}^{m} x^*_{i,j}p^i_j} ~\Rightarrow~ 
	\frac{u_i(\vec{x}^*)}{u_i(\tilde{\vec{x}}_i)}\leq \frac{\sum_{j=1}^{m} x^*_{i,j}p^i_j}{\budget_i} ~\Rightarrow~ \rho_i \budget_i \le
	\sum_{j=1}^m x^*_{i,j}p^i_j.
	\end{equation*}
	Using \lemref{lem:price_change}, we get:
	\begin{equation*}
	\rho_i\budget_i ~\leq~ 
	\sum_{j=1}^{m} x^*_{i,j}p^i_j ~\leq~ 
	\sum_{j:\; p^i_j \leq \eqp_j} x^*_{i,j}\eqp_j + \sum_{j:\; p^i_j > \eqp_j} x^*_{i,j}p^i_j ~\leq~ 
	\budget_i +\sum_{j=1}^{m} x^*_{i,j}\eqp_j, 
	\end{equation*}
	and summing over all players:
	\begin{equation}\label{ineq:RatioBound}
	\sum_{i=1}^{n}\rho_i\budget_i ~\leq~ 
	\sum_{i=1}^{n}\left(\budget_i +\sum_{j=1}^{m} x^*_{i,j}\eqp_j\right) ~\leq~  
	2\totbudget.
	\end{equation}

	
	Substituting $\budget_i$ for $w_i$ in~\lemref{lem:budgets_bound} and using Inequality~\eqref{ineq:RatioBound} yields:
	\[
	\left(\prod_{i=1}^n \rho_i^{\budget_i}\right)^{1/\totbudget} ~\le~ \frac{\sum_{i=1}^n \rho_i \budget_i}{\totbudget} ~\le~ 2.
	\] 
\end{proof}

Next we show a lower bound for the price of anarchy of the Fisher mechanism. 
We construct a collection of problem instances whose PoA goes to $e^{1/e}$
as number of players grows.

\begin{theorem}\label{thm:FisherLB}
	The Fisher Market mechanism with additive valuations has a price of anarchy no better than $e^{1/e}\approx 1.445$.
\end{theorem}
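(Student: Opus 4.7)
The plan is to construct an explicit family of Fisher market instances, parameterized by the number of players $n$, each with linear valuations, and to exhibit a Nash equilibrium in each instance whose Nash social welfare falls below the optimum by a factor approaching $e^{1/e}$ as $n\to\infty$. The value $e^{1/e}=\max_{\rho\ge 1}\rho^{1/\rho}$ suggests targeting an equilibrium structure in which roughly a $1/e$ fraction of the players end up with utility reduced by a factor of $e$ relative to the truthful (optimal) allocation, while the remaining $1-1/e$ fraction achieve essentially their full optimal utility. This configuration is compatible with the inequality $\sum_i \rho_i\budget_i\le 2\totbudget$ derived in the proof of \thmref{thm:Fisher_additive_upper_bound}, since $(1/e)\cdot e+(1-1/e)\cdot 1=2-1/e<2$, so the upper bound machinery has no obstruction to realizing this profile.

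Concretely, I would look for a quasi-symmetric instance partitioning the $n$ players into a ``losing'' group of size approximately $n/e$ and a ``winning'' group of size approximately $(1-1/e)n$, together with a small collection of goods some of which are contested across groups. The proof would then proceed in four steps. First, write down the truthful Fisher market equilibrium using the Eisenberg--Gale convex program to obtain the optimal NSW in closed form. Second, specify a carefully chosen non-truthful reporting profile designed to concentrate allocation in the winning group at the expense of the losing group. Third, using the KKT conditions $\lin_1$--$\lin_2$ from the Fisher Market Mechanism section, solve for the induced equilibrium prices and allocations, and verify Nash equilibrium by ruling out profitable unilateral deviations (which can be done using the same best-response analysis that underlies \lemref{lem:price_change}). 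Fourth, compute the NSW at the equilibrium and form the ratio to the optimal NSW, and show that in the limit $n\to\infty$ this ratio is $e^{1/e}$.

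The main obstacle is finding the precise instance that realizes this structure. Overly symmetric constructions tend to collapse, since truthful reporting is itself a Nash equilibrium in the linear Fisher market and many natural over-reporting profiles produce allocations that coincide, in the limit, with the truthful one. The construction must therefore break symmetry in a carefully chosen way---through asymmetric valuations (e.g., one or more distinguished goods whose valuation profile differs sharply between the two groups) or through heterogeneous budgets---so that players in the losing group are structurally unable to defend their share of the contested goods against the winning group's strategic reports, while winners are able to absorb essentially the entire contested supply. Calibrating the valuations, group sizes, and number of goods so that the equilibrium utility ratios combine to $e^{1/e}$ in the large-$n$ limit is the bulk of the technical work; once the instance is pinned down, the equilibrium verification and NSW computation are routine algebraic manipulations using the Eisenberg--Gale characterization.
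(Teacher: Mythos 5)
Your high-level plan does match the paper's actual approach: the paper's construction does feature a ``losing'' group of $k$ agents each receiving roughly a $k/n$ fraction of their best utility, a ``winning'' group of $n-k$ agents receiving roughly their full utility, and the PoA is $(n/k)^{k/n}$, maximized at $k/n = 1/e$. Your compatibility check against the $\sum_i \rho_i\budget_i \le 2\totbudget$ inequality is also a sound sanity test. So the strategic outline is right.

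However, what you have written is not a proof but a research plan, and the part you defer as ``routine algebraic manipulations'' is precisely where the substance of the argument lies. There are two concrete gaps.

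First, there is no instance. You correctly observe that the calibration of valuations and group sizes is ``the bulk of the technical work,'' but you never carry it out. For reference, the paper's instance has $n+2$ agents and $n+1$ goods: agents $1,\ldots,k$ value only their own good; agents $k+1,\ldots,n$ value their own good plus all of goods $1,\ldots,k$ at a tiny $\epsilon$; and, crucially, there are two additional agents $n+1$ and $n+2$ whose role is to ``police'' the equilibrium. Agent $n+1$ has a small value $\epsilon'$ for goods $k+1,\ldots,n$ and a large value for good $n+1$; agent $n+2$ values only good $n+1$. This two-tiered policing structure is not something your outline anticipates.

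Second, you underestimate the difficulty of verifying that the constructed profile is a Nash equilibrium. In the candidate equilibrium, each agent $i\in[k+1,n]$ bids only a carefully tuned amount $\delta=2\epsilon'$ on its own good $i$ and redirects the rest toward goods $1,\ldots,k$. The natural deviation for such an agent is to bid even less on good $i$ (it already wins all of it) and shift the savings to the contested goods. What prevents this is that the moment agent $i$ drops its spending on good $i$ below $\delta$, agent $n+1$ becomes willing to spend on good $i$ and splits it away, and the loss in utility exceeds the gain on goods $1,\ldots,k$. Establishing this requires pinning down the indifference threshold for agent $n+1$ (the paper's Lemma~\ref{lem:flb1}) and then a separate deviation bound for the middle agents (Lemma~\ref{lem:flb2}). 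Without agents $n+1$ and $n+2$, the middle agents would simply drift back to truthful reporting and the construction would collapse to the trivial equilibrium you worried about. In short: your plan correctly identifies the target value $e^{1/e}$ and the $1/e$ fraction structure, but the construction that sustains it as an equilibrium---and its verification---is missing, and that is the actual content of the theorem.
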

Given some value of $n$, we construct a market with $n+2$ agents and $n+1$ goods. 
Fix an integer $k\le n$; we will set its value later. Each player $i\le k$ likes 
only good $i$, {\em i.e.,} $v_{i,i}=1$ and all other $v_{i,j}=0$. On the other hand, 
every agent $i\in [k+1, n]$, apart from having $v_{i,i}=1$, also has some small, but 
positive, value $v_{i,j}=\epsilon$ for all items $j\leq k$. The rest of that agent's 
$v_{i,j}$ values are zero. Agent $n+1$ has a small but positive value $\epsilon'$ for 
goods $j \in [k+1, n]$ and value 2 for good $n+1$. Finally, agent $n+2$ values only 
good $n+1$ at value 2. Here $\epsilon < < \epsilon'$, and we will set their values later. 
In the allocation where every agent $i \in [1, n]$ gets all of good $i$, while agents 
$n+1$ and $n+2$ share good $n+1$ equally, the NSW is equal to 1. Next, we construct a 
Nash equilibrium strategy profile $\sss$ of the above market where the NSW approaches 
$(1/e)^{1/e}$ as $n \rightarrow \infty$. 

We define a strategy profile $\sss$ where the first $k$ agents and the last agent,
i.e., agent $n+2$, bid truthfully, while every bidder $i\in[k+1,n]$ misreports in 
a way such that it ends up spending some small amount $\delta=2\epsilon'$ on item 
$i$ and the rest of its budget, namely $(1-\delta)$, is equally divided on items in $[1,k]$. 
We later set a value for $\delta$ such that agent $n+1$ would want to buy only good $n+1$. 

We now show that the above profile is a Nash equilibrium for carefully chosen values of 
$\epsilon$, and $\epsilon'$. Note that, since bidders $i\in [1,k]$ and bidder $n+2$ bids 
truthfully and values just one item, each one of these bidders will spend all of its budget 
on the corresponding items, no matter what the remaining bidders report. Therefore, the 
price of items $j\in [1,k]$ and item $n+1$ will be exactly 1 if we exclude the spending 
of bidders $i\in [k+1,n+1]$, no matter what these bidders report. Also, if the price of 
items $j\in [k+1, n]$ is equal to $\delta$ in profile $\sss$, then this price will not
drop below $\delta$, irrespective of what bidder $n+1$ reports. In the next two lemmas,  
we show that a deviation from $\sss$ does not help the bidders, even when the prices are 
held constant for the goods where an agent starts spending more money. 

\begin{lemma}\label{lem:flb1}
	If $\delta= 2\epsilon'$ then, even if the prices of goods $j\in [k+1, n]$ are fixed at $\delta$, the 
	$(n+1)^{th}$ bidder has no incentive to spend money on any good other than good $n+1$.  
\end{lemma}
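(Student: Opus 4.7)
The plan is to directly compute agent $n+1$'s utility as a function of how much of its unit budget it diverts away from good $n+1$, and show this quantity is maximized by zero diversion. Deviations into goods $j\in[1,k]$ can be dismissed immediately, since $v_{n+1,j}=0$ there, so such spending is strictly wasteful; the only substantive case to rule out is deviations into goods $j\in[k+1,n]$.

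I would parametrize the deviation by letting $y\in[0,1]$ denote the total amount the agent spends on items in $[k+1,n]$, with the remaining $1-y$ spent on good $n+1$. Because the prices on items in $[k+1,n]$ are held fixed at $\delta=2\epsilon'$, the bang-per-buck on any such item is $v_{n+1,j}/\delta=\epsilon'/(2\epsilon')=1/2$, so regardless of how $y$ is split across those items, the utility contribution from this portion of spending is exactly $y/2$.

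Next I would compute the endogenous price of good $n+1$. Since agent $n+2$ spends its full budget of $1$ on good $n+1$ and no other agent spends there, we have $p_{n+1}=1+(1-y)=2-y$, and agent $n+1$'s share of this good is $(1-y)/(2-y)$, contributing $2(1-y)/(2-y)$ to its utility. Combining, the agent's total utility is
\[
f(y) \;=\; \frac{2(1-y)}{2-y} \;+\; \frac{y}{2}.
\]

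Finally I would differentiate to obtain $f'(y)=-\frac{2}{(2-y)^2}+\frac{1}{2}$. At $y=0$ this vanishes; for any $y\in(0,1]$ we have $(2-y)^2<4$, forcing $f'(y)<0$. Hence $f$ is strictly decreasing on $(0,1]$, so $f(y)<f(0)$ for every $y>0$, confirming the agent has no profitable deviation. The one delicate point is that the first-order condition is tight at $y=0$: the specific choice $\delta=2\epsilon'$ is precisely the threshold that equates the two marginal bang-per-buck rates there, and handling this boundary behavior (rather than any deeper combinatorial obstacle) is the main care needed in the argument.
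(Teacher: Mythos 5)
Your proposal is correct and follows essentially the same approach as the paper: parametrize by the amount $y$ diverted from good $n+1$, observe that the fixed-price bang-per-buck on goods in $[k+1,n]$ yields a linear contribution $y/2$, compute the endogenous price of good $n+1$ as $2-y$, and compare total utility to the $y=0$ baseline of $1$. The paper simply manipulates the inequality $\epsilon'\gamma/\delta + 2(1-\gamma)/(2-\gamma) > 1$ algebraically to obtain the contradiction $\delta < 2\epsilon' - \gamma\epsilon'$, whereas you reach the identical conclusion by showing $f'(y)\le 0$ with equality only at $y=0$; these are two equivalent ways to verify $f(y)\le f(0)$.
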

\begin{proof}
	If agent $n+1$ spends all of its budget on item $n+1$, then its price becomes 2
	and the agent receives half of that item, i.e., a utility of 1. On the other hand, if the agent
	spends at total of $\gamma>0$ on goods $j \in [k+1,n]$ then, even if the price of these items 
	remains $\delta$, her utility would be $\epsilon'\frac{\gamma}{\delta} +2 \frac{1-\gamma}{2-\gamma}$. 
	For this to be strictly greater than 1 we have to have $\delta<2\epsilon'-\gamma\epsilon'$, which
	contradicts our assumption that $\delta = 2\epsilon'$.
\end{proof}

Furthermore, using a similar analysis as that of Lemma \ref{lem:flb1}, it follows that if agent $i\in[k+1, n]$
deviates in a way that decreases its spending on good $i$ to $2\epsilon'-\tau$ for $\tau>0$, then the Fisher 
market mechanism outcome will have the $(n+1)^{th}$ agent spending at least $\frac{\tau}{1+\epsilon'}$ on that 
item. Thus, such a deviation would cause agent $i$ to lose its monopoly on good $i$. Next, we show that this is 
not advantageous for agent $i$ if we set $\epsilon=\frac{1}{n^4}$ and $\epsilon'=\frac{1}{n}$, which implies
$\delta=\frac{2}{n}$. 

\begin{lemma}\label{lem:flb2}
	For any $\tau\ge 0$, if agent $i$ is spending $\delta -\tau$ on good $i$ and agent $(n+1)$ is spending 
	$\frac{\tau}{1+\epsilon'}$, when others are bidding according to $\sss$, then agent $i$'s utility is maximized at $\tau=0$. 
\end{lemma}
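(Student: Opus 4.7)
The plan is to write agent $i$'s utility as $U(\tau) = f(\tau) + g(\tau)$, where $f(\tau)$ captures the value from good $i$ and $g(\tau)$ captures the value from items in $[1,k]$ (agent $i$'s valuation is zero for everything else). Since agent $i$ spends $\delta-\tau$ on good $i$ while agent $n+1$ spends $\tau/(1+\epsilon')$ on it, the total expenditure on good $i$ is $\delta - \tau\epsilon'/(1+\epsilon')$, and agent $i$ obtains
\[
f(\tau) ~=~ \frac{\delta-\tau}{\delta - \tau\epsilon'/(1+\epsilon')}.
\]
For $g(\tau)$, I will first argue that agent $i$'s best use of the remaining budget $1-\delta+\tau$ is an even split across $[1,k]$: under $\sss$ the non-$i$ spending on each such item is the common quantity $1 + (n-k-1)(1-\delta)/k$, and $a \mapsto a/(\alpha+a)$ is concave. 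Writing $A(\tau) := 1 + (n-k-1)(1-\delta)/k + (1-\delta+\tau)/k$ for the resulting common price of each item in $[1,k]$, this gives
\[
g(\tau) ~=~ \frac{\epsilon(1-\delta+\tau)}{A(\tau)}.
\]

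The next step is to bound the two effects separately. A direct algebraic simplification yields
\[
f(0)-f(\tau) ~=~ \frac{\tau}{(1+\epsilon')\bigl(\delta - \tau\epsilon'/(1+\epsilon')\bigr)} ~\ge~ \frac{\tau}{(1+\epsilon')\,\delta}
\quad \text{for } \tau \in [0,\delta].
\]
Writing $A(\tau) = A(0) + \tau/k$ and simplifying,
\[
g(\tau)-g(0) ~=~ \frac{\epsilon\,\tau\bigl(A(0)-(1-\delta)/k\bigr)}{A(\tau)\,A(0)} ~\le~ \frac{\epsilon\,\tau}{A(\tau)} ~\le~ \epsilon\,\tau,
\]
where the last step uses $A(\tau) \ge 1$.

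Combining the two bounds yields $U(\tau)-U(0) \le \tau\bigl(\epsilon - 1/[(1+\epsilon')\delta]\bigr)$. Substituting $\delta = 2/n$, $\epsilon = 1/n^4$, and $\epsilon' = 1/n$ gives $\epsilon(1+\epsilon')\delta = 2(1+1/n)/n^5 < 1$, so the parenthesized quantity is negative and $U(\tau) \le U(0)$ throughout $\tau \in [0,\delta]$, with equality at $\tau=0$. The hardest step will be cleanly justifying the two simplifying assumptions used above: that agent $i$'s best response is an even split of the remaining budget across $[1,k]$ (which reduces to the concavity/symmetry argument), and that it suffices to analyze the equality case for agent $n+1$'s spending---any larger spending by $n+1$ on good $i$ only decreases agent $i$'s share of that good, so the upper bound on $U(\tau)$ is most favorable to deviation precisely in the scenario posited by the lemma.
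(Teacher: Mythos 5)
Your proof is essentially correct and follows the same decomposition the paper uses: compare the loss on good $i$ (your $f(0)-f(\tau)$, of order $\tau/[(1+\epsilon')\delta]$) against the gain on goods $[1,k]$ (your $g(\tau)-g(0)$, of order $\epsilon\tau$), and conclude the former dominates because $\epsilon(1+\epsilon')\delta<1$ under the parameter choices. The two write-ups differ only in minor bookkeeping: the paper upper-bounds the post-deviation gain by freezing the prices of goods $[1,k]$ at their $\tau=0$ level, whereas you use the exact post-deviation price $A(\tau)$ (which is only larger, so gives the same one-sided conclusion); and you spell out the even-split-is-best-response step that the paper leaves implicit. Both versions isolate the identical trade-off and use the same parameter settings $\delta=2/n$, $\epsilon=1/n^4$, $\epsilon'=1/n$ to close the argument, so this is the same proof, written slightly more explicitly.
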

\begin{proof}
	Note that at $\tau=0$, agent $i\in[k+1, n]$ is spending $\delta=2\epsilon'$ on good $i$ and, according to 
	\lemref{lem:flb1}, agent $n+1$ would not be interested in spending on good $i$. Thus, agent $i$ is buying 
	good $i$ exclusively and spends $\frac{1-\delta}{k}$ on each of the first $k$ goods. Thus the price each one
	of the first $k$ goods is $1+(n-k)\frac{1-\delta}{k}$. Agent $i$'s utility at this allocation is 
	\[1+ \frac{\frac{1-\delta}{k} }{1+(n-k)\frac{1-\delta}{k}} k\epsilon  ~=~ 
	1+\frac{(1-\delta)k\epsilon}{k+(n-k)(1-\delta)} \]
	
	As mentioned above, if agent $i$ reduces spending on good $i$ by $\tau$, then the $(n+1)^{th}$ agent will 
	end up spending at least $\frac{\tau}{1+\epsilon'}$ on this item. This may lead to increased prices for the
	first $k$ goods, but we show that this deviation would not benefit $i$ even if these prices remained the same. 
	The utility of agent $i$ after such a deviation would be
	\[\frac{\delta-\tau}{\delta-\tau + \frac{\tau}{1+\epsilon'}} + 
	\frac{(1-\delta+\tau) k\epsilon}{k+(n-k)(1-\delta)}\]
	%
	The latter utility is greater than the former only when 
	\[\frac{\tau k\epsilon}{k+(n-k)(1-\delta)} ~>~ \frac{\frac{\tau}{1+\epsilon'}}{\delta-\tau + \frac{\tau}{1+\epsilon'}} ~~~~\Rightarrow~~~~
	\epsilon ~>~ \frac{k+(n-k)(1-\delta)}{k(\delta+\delta\epsilon'-\tau\epsilon')} ~>~1,\]
	which contradicts the fact that $\epsilon=\frac{1}{n^4}$. 
\end{proof}

Lemmas \ref{lem:flb1} and \ref{lem:flb2} imply that $\sss$ is a Nash equilibrium. The price of
the first $k$ goods at this Nash equilibrium is $1+\frac{(n-k)(1-\delta)}{k} =
\frac{k+(n-k)(1-\delta)}{k}$. Thus, the utility of buyer $i \in[1, k]$, who spends all of its 
\$1 on good $i$, is $u_i=\frac{k}{k+(n-k)(1-\delta)}$. On the other hand, the utility of buyer 
$i\in[k+1, n]$, who gets all of good $i$ and some of the first $k$ goods, is 
$u_i=1+ \frac{(1-\delta)k\epsilon}{k+(n-k)(1-\delta)}$.  Agents $(n+1)$ and $(n+2)$ get half of good $n+1$ and 
thereby get utility of 1 each. Since $\epsilon=\frac{1}{n^4}$ and $\epsilon'=\frac{1}{n}$, then 
$\forall i \in[1, k],\ \lim_{n\rightarrow \infty} u_i = \frac{k}{n}$ and 
$\forall i \in[k+1, n] \lim_{n\rightarrow \infty} u_i = 1$.  Thus, NSW at this bid profile as 
$n\rightarrow \infty$ is $(\frac{k}{n})^{\frac{k}{n}}$, and thereby PoA is at least $(\frac{n}{k})^{k/n}$. 
Letting $k = \frac{n}{e}$, this becomes $e^{1/e}$, which concludes the proof of Theorem~\ref{thm:FisherLB}.

\subsection{Fisher Market: Perfect Complements}

On the other hand, the Fisher market with perfect complements has a price of anarchy that grows linearly with the number of players. 


\begin{theorem} \label{thm:Fisher_Leontief_poa}
	The Fisher Market Mechanism with Leontief valuations has a price of anarchy of $n$, the number of players, and the bound is tight.
\end{theorem}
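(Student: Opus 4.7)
The plan is to prove matching upper and lower bounds of $n$ on the price of anarchy.

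\textbf{Upper bound ($\mbox{PoA}\le n$).} I would use the Leontief KKT conditions $\leon_1, \leon_2$ together with the accounting identity $\sum_j p_j = \totbudget$ that holds at every induced market equilibrium (total spending equals total budget, and zero-priced goods contribute nothing). Fix a Nash equilibrium with reports $\tilde{\vec v}$, allocation $\tilde{\vec x}$, and prices $\tilde{\vec p}$, and consider the unilateral deviation in which agent $i$ switches to its truthful report $\vec v_i$, producing prices $\vec p^i$ and allocation $\vec x^i$. By $\leon_1$ in the deviated market, where reported and true utilities of $i$ coincide,
\[
u_i(\vec x^i_i) \;=\; \frac{\budget_i}{\sum_j v_{i,j}\, p^i_j} \;\geq\; \frac{\budget_i}{\max_j v_{i,j}\cdot \totbudget} \;=\; \frac{\budget_i}{\totbudget}\cdot u_i(\vec 1_M),
\]
where $\vec 1_M$ is the all-goods bundle and $u_i(\vec 1_M)=1/\max_j v_{i,j}$. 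Since $u_i(\vec x^*_i)\leq u_i(\vec 1_M)$ trivially and the Nash inequality gives $u_i(\tilde{\vec x}_i)\geq u_i(\vec x^i_i)$, we obtain $u_i(\vec x^*_i)/u_i(\tilde{\vec x}_i) \leq \totbudget/\budget_i$ for every agent. Taking the weighted geometric mean,
\[
\frac{\mbox{NSW}(\vec x^*)}{\mbox{NSW}(\tilde{\vec x})} \;\leq\; \prod_{i=1}^n\!\left(\frac{\totbudget}{\budget_i}\right)^{\budget_i/\totbudget} \;\leq\; n,
\]
where the last step is the simplex entropy bound $\sum_i (\budget_i/\totbudget)\log(\totbudget/\budget_i) \leq \log n$, tight at equal budgets.

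\textbf{Lower bound ($\mbox{PoA}\ge n$).} I would exhibit the $n$-agent, $n$-good instance with unit budgets in which agent $i$ truly values only good $i$ ($v_{i,i}=1$, $v_{i,j}=0$ for $j\ne i$). The NSW-optimal allocation gives each agent its own good, yielding $\mbox{NSW}(\vec x^*)=1$. The claim is that the symmetric misreport $\tilde v_{i,j}=1$ for all $i,j$ is a Nash equilibrium. By symmetry, the unique induced market equilibrium has $x_{i,j}=1/n$ for every $i,j$, so each agent's true utility is $x_{i,i}=1/n$ and $\mbox{NSW}(\tilde{\vec x})=1/n$. To verify the Nash property, consider agent $1$ deviating to report $(a_1,\ldots,a_n)$. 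With the other agents reporting uniformly, the Leontief KKT conditions force $a_j$ to take a common value $c$ on the set $S=\{j:p_j>0\}$, and the identity $\sum_{j\in S}p_j = \totbudget = n$ pins $x_{1,1}=1/n$ whenever $1\in S$. When $1\notin S$, then necessarily $a_1\le c$, and agent $1$'s direct demand on good $1$ is $a_1/(cn)\le 1/n$; the leftover supply $1/n-a_1/(cn)$ of good $1$ is allocated to agent $1$ by the Pareto-preferred tie-breaking rule in the definition of the mechanism, since agents $2,\ldots,n$ truly value only their own good and are indifferent to any reallocation of good $1$. In either case $x_{1,1}=1/n$, so agent $1$'s true utility is capped at $1/n$, the profile is a Nash equilibrium, and $\mbox{PoA}\ge n$.

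\textbf{Main obstacle.} The delicate step is the lower bound, specifically the case analysis showing that across all possible deviations---including highly skewed reports that shift which goods have positive price---agent $1$ cannot extract more than $1/n$ of good $1$ through the combination of direct demand and favorable tie-breaking. The interplay between Leontief demand rigidity and the mechanism's Pareto-preferred tie-breaking rule is the crux of the argument.
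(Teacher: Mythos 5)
Your proof is correct and follows essentially the same route as the paper: the upper bound is the proportionality argument---every Nash equilibrium gives agent $i$ at least a $B_i/\totbudget$ fraction of its unconstrained maximum, obtained via the truthful unilateral deviation and the fact that the uniform bundle $\frac{B_i}{\totbudget}\vec{1}$ is always affordable since $\sum_j p_j =\totbudget$---followed by the weighted AM--GM inequality, and the lower bound is the same $n$-agent, $n$-good instance with all agents misreporting uniformly, which the paper pulls from prior work as a black box (Lemma~\ref{lem:uniform_leontief}) while you verify directly. One small imprecision in your verification: when $p_1=0$ and $a_1>0$, the equality in $\leon_1$ pins $x_{1,1}=a_1 u_1$ exactly, so the leftover of good $1$ cannot be handed to agent $1$ through Pareto-preferred tie-breaking as you assert; but that error only makes agent $1$'s share \emph{smaller}, so the bound $x_{1,1}\le 1/n$ and the Nash equilibrium conclusion hold regardless.
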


{\textsc{Remark.}
	\rn{We observe that PoA of Fisher Market Mechanism degrades continuously within CES as we move from additive to Leontief. For the complements range, {\em i.e.,} $\rho <0$, the example achieving $n$ lower bound for Leontief can be modified to show that as $\rho$ decreases the PoA increases. The modification is in the Nash equilibrium profile, where agent $i$ reports $v_{i,i}=1+\epsilon$ instead of $1$ for an appropriately chosen $\epsilon\ge 0$ depending on $\rho$ and number of players $n$. }

\section{The Trading Post Mechanism}


The important difference between the Trading Post mechanism and the Fisher Market mechanism is the strategy space of the agents.
More precisely, unlike the Fisher market mechanism, where the agents' are asked to report their valuations, the Trading Post 
mechanism instead asks the agents to directly choose how to distribute their budgets. Once the agents have chosen how much of their
budget to spend on each of the goods, the total spending on each good $j$ is treated as its price, and each agent $i$ is allocated a
fraction of good $j$ proportional to the amount that $i$ is spending on $j$. Therefore, the strategy set of each player $i$ is 
$\mathcal{S}_i = \left\{\vec{b}_i \in [0,B_i]^{m}\; | \; \sum_{j=1}^{m} b_{i,j} = \budget_i\right\}.$ 
%

Given a bid profile $\vec{b} = (\vec{b}_1, \ldots, \vec{b}_n)$, the induced allocation is:
\begin{equation*}
x_{i,j} = \left\{
\begin{array}{ll}
\frac{b_{i,j}}{\sum_{k=1}^{n} b_{k,j}} & \mbox{if} \; b_{i,j} > 0\\
0 & \mbox{otherwise}
\end{array}
\right.
\end{equation*}

\rn{In this section we analyze Trading Post for additive and Leontief valuations, extending a number of results to CES and arbitrary concave functions under a mild technical condition known as \emph{perfect competition}, which states that for each good $j$, there exist at least two buyers that demand strictly positive amount of $j$ when the good is priced at zero. For additive and Leontief valuations, the condition is equivalent to 
	$v_{i,j}> 0, v_{k,j} > 0$, for some $i, k \in N$, $i \neq k$. 
	If this were not satisfied, we could either discard the good or give it away for free.}


\rn{Our main results in this 
	section are that the {\em pure} Nash equilibria of Trading Post approximate the NSW objective within a factor of $2$ 
	for linear valuations, and within a factor of $1 + \epsilon$ for every $\epsilon > 0$ for Leontief valuations. In fact we extend the approximation factor of $2$ to CES and more generally to concave utilities. Moreover, in all of these cases, all the Nash equilibria of the game are pure and their existence is guaranteed for CES utilities.
}

\subsection{Trading Post: Perfect Substitutes}

The existence of pure Nash equilibria in the Trading Post mechanism for additive valuations was established by Feldman, Lai, and Zhang~\cite{FLZ09} under perfect competition. Without competition, even very simple games may not have pure Nash equilibria. To see this, consider for instance a game with two players, two items, and additive valuations $v_{1,1} = 1$, $v_{1,2} = 0$, $v_{2,1} = v_{2,2} = 0.5$. Through a case analysis it can be seen that both players will compete for item $1$, while player $2$ is the only one that wants item $2$, reason for which this player will successively reduce its bid for $2$ to get a higher fraction from item $1$. However, in the limit of its bid for the second item going to zero, player $2$ loses the item.

Our first result shows that Trading Post matches the Fisher market for additive valuations. 

\begin{theorem}\label{thm:ub_tp_subs}
	The Trading Post Mechanism with linear valuations has price of anarchy at most 2.
\end{theorem}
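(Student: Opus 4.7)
The plan is to mirror the structure of the Fisher-market proof of Theorem~\ref{thm:Fisher_additive_upper_bound}, but with a deviation tailored to Trading Post. Fix any Nash equilibrium with bids $\tilde{\vec{b}}$, induced prices $\tilde{p}_j = \sum_k \tilde{b}_{k,j}$, and allocation $\tilde{\vec{x}}$, and let $\vec{x}^*$ be the NSW-maximizing allocation. Set $\rho_i = u_i(\vec{x}^*_i)/u_i(\tilde{\vec{x}}_i)$; the target is to show $\sum_i \budget_i \rho_i \leq 2\totbudget$, after which Lemma~\ref{lem:budgets_bound} yields PoA $\leq 2$ exactly as in the Fisher case.

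For each player $i$, I would choose a deviation $\vec{b}'_i$ that spends $i$'s budget in proportion to the ``price-weight'' of $i$'s optimal bundle. Concretely, letting $\beta_i = \sum_j \tilde{p}_j x^*_{i,j}$ denote the cost of $\vec{x}^*_i$ at equilibrium prices, set $b'_{i,j} = (\budget_i/\beta_i)\,\tilde{p}_j\, x^*_{i,j}$. This is a feasible bid since the entries sum to $(\budget_i/\beta_i)\beta_i = \budget_i$. Because the other players' contribution to good $j$ is $\tilde{p}_j - \tilde{b}_{i,j} \leq \tilde{p}_j$, the fraction of good $j$ that $i$ receives under the deviation satisfies
\[
x'_{i,j} \;=\; \frac{b'_{i,j}}{b'_{i,j} + \tilde{p}_j - \tilde{b}_{i,j}} \;\geq\; \frac{b'_{i,j}}{b'_{i,j} + \tilde{p}_j} \;=\; \frac{\lambda_i x^*_{i,j}}{\lambda_i x^*_{i,j} + 1},
\]
where $\lambda_i = \budget_i/\beta_i$, and a one-line computation using $x^*_{i,j} \leq 1$ then gives $x'_{i,j} \geq \frac{\lambda_i}{\lambda_i+1}\, x^*_{i,j} = \frac{\budget_i}{\budget_i + \beta_i}\, x^*_{i,j}$.

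Summing over $j$ with the linear valuations yields $u_i(\vec{b}'_i, \tilde{\vec{b}}_{-i}) \geq \frac{\budget_i}{\budget_i + \beta_i}\, u_i(\vec{x}^*_i)$, and the Nash condition $u_i(\tilde{\vec{x}}_i) \geq u_i(\vec{b}'_i, \tilde{\vec{b}}_{-i})$ rearranges to $\budget_i \rho_i \leq \budget_i + \beta_i$. Since $\sum_i \beta_i = \sum_j \tilde{p}_j \sum_i x^*_{i,j} \leq \sum_j \tilde{p}_j = \totbudget$, summing over $i$ gives $\sum_i \budget_i \rho_i \leq 2\totbudget$. Applying Lemma~\ref{lem:budgets_bound} with weights $\budget_i$ and values $\rho_i$ closes the argument.

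The main subtlety to address is well-definedness: the deviation requires $\beta_i > 0$ and uses prices $\tilde{p}_j$ that a priori could be zero on goods with $x^*_{i,j} > 0$. The perfect competition assumption invoked throughout this section guarantees that at any Nash equilibrium every desired good has a positive total bid, so $\tilde{p}_j > 0$ whenever needed and $\beta_i > 0$ whenever $u_i(\vec{x}^*_i) > 0$; the degenerate case $u_i(\vec{x}^*_i) = 0$ trivially contributes nothing to $\sum_i \budget_i \rho_i$. Beyond this bookkeeping I do not expect further obstacles, since the heart of the proof is just the identification of the right deviation $b'_{i,j} \propto \tilde{p}_j x^*_{i,j}$ with the correct rescaling.
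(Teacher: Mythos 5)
Your proof is correct and follows the same basic outline as the paper's: construct a unilateral deviation for each player $i$ that approximately mimics $\vec{x}^*_i$, invoke the Nash-equilibrium inequality, sum the resulting bounds over $i$, and finish with the weighted AM--GM inequality of Lemma~\ref{lem:budgets_bound}. The only substantive difference is how the deviation is specified: the paper defines $\vec{b}'_i$ implicitly as the solution to a small minimization program (and relies on perfect competition to keep the withdrawn-money prices $p'_j = \tilde p_j - \tilde b_{i,j}$ strictly positive so that program is well-posed), while you give $\vec{b}'_i$ in closed form as $b'_{i,j} = (B_i/\beta_i)\,\tilde p_j\, x^*_{i,j}$ with $\beta_i = \sum_j \tilde p_j x^*_{i,j}$, and then bound the resulting allocation pointwise using $p'_j \le \tilde p_j$ and $x^*_{i,j} \le 1$. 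This formulation is a bit cleaner: it avoids the optimization step and never actually needs $p'_j > 0$, only $\tilde p_j > 0$ on goods with $x^*_{i,j} > 0$, and for linear utilities that positivity already holds at any Nash equilibrium because a player who values a zero-priced good would have a profitable infinitesimal deviation toward it. Both routes arrive at the identical inequality $\sum_i B_i \rho_i \le 2\mathcal{B}$ and the same conclusion, so your argument is a valid and slightly more direct rendering of the paper's proof.
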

\begin{proof}
	Given a problem instance with linear valuations, let $\optx$ be the allocation that maximizes 
	the NSW and $p^*$ be the corresponding market equilibrium prices. Also, let $\eqx$ be 
	the allocation in a Nash equilibrium where each player $i$ bids $\eqbb_i$ and the price 
	of each item $j$ is $\eqp_j=\sum_i \eqbid_{i,j}$. 
	
	For some player $i$, let $\devx$ be the allocation that arises if every player $k\neq i$ 
	bids $\eqbb_k$ while agent $i$ unilaterally deviates to $\bid'_{i,j}$ for each item $j$. 
	In this deviation, player $i$ first withdraws all of its money, leaving the price of item 
	$j$ to be $p'_j = \eqp_j -\eqbid_{i,j}$, and then it redistributes it by spending $\bid'_{i,j}$ 
	on each item $j$. The perfect competition condition ensures that, once player $i$ withdraws its money,
	the prices remain positive ($p'_j>0,\ \forall j$), as at least two agents are interested in each good. 
	Let the new bid $\bid'_{i,j}$ be such that for some $\beta_i>0$ and every item $j$:
	\begin{equation}\label{eq:delta}
	\frac{\bid'_{i,j}}{p'_j+\bid'_{i,j}}~=~\frac{x^*_{i,j}}{\beta_i}.
	\end{equation}
	Bid $\vec{b}'_i$ is implied by the solution of the following program. 
	\[
	\text{min: } \beta_i\ \ \ \ \ \text{s.t.:}\ \ \ \  \frac{1}{\beta_i} = \frac{\bid'_{i,j}}{x^*_{i,j}(p'_j+\bid'_{i,j})}\ \  \mbox{ and }\ \  \bid'_{i,j}\ge 0 \ \ \ \forall j\in M;\ \ \ \ \ \ \ \sum_{j\in M} \bid'_{i,j} \le \budget_i
	\]

	Setting $\bb'_{i,j}=0$ and $\beta_i=\infty$ gives a feasible point in the above program, and therefore it has a minimum. 
	The allocation induced by this unilateral deviation of $i$ is 
	$x'_{i,j}~=~\frac{\bid'_{i,j}}{p'_j+\bid'_{i,j}}~=~\frac{x^*_{i,j}}{\beta_i}.$
	Therefore, the utility of player $i$ after this deviation is $u_i(\optx)/\beta_i$. 
	But the outcome $\eqx$ is a Nash equilibrium, so this deviation cannot yield a higher
	utility for $i$, which implies that $u_i(\eqx)\geq u_i(\optx)/\beta_i$.
	By definition of $\bid'$, we get $\sum_{j} \bid'_{i,j}=\sum_j x'_{i,j} (p'_j + \bid'_{i,j})= \budget_i$; all the money is spent at optimum $\beta_i$. 
	Therefore, replacing for $x'_{i,j} =x^*_{i,j}/\beta_i$, we get 
	\begin{equation}\label{eq:beta}
	\budget_i \beta_i = \sum_{j=1}^m x^*_{i,j} (p'_j + \bid'_{i,j})
	\end{equation}
	Since $0\le x^*_{i,j}\leq 1$ and $p'_j \le \eqp_j$ for each item $j$, for agent $i$ we have, 
	\begin{equation}\label{ineq:price_change}
	\sum_{j=1}^m x^*_{i,j}(p'_j+\bid'_{i,j}) \le \sum_{j=1}^m x^*_{i,j} \bid'_{i,j} + \sum_{j=1}^m x^*_{i,j} \eqp_j ~\leq~ \budget_i + \sum_{j=1}^m x^*_{i,j}\eqp_j. 
	\end{equation}
	
	Using Inequalities~\eqref{eq:beta} and~\eqref{ineq:price_change}, and summing over all 
	players gives 
	\begin{equation}\label{eq:ratio}
	\sum_{i=1}^n \budget_i\frac{u_i(\vec{x}^*_i)}{u_i(\tilde{\vec{x}}_i)} ~\leq~ 
	\sum_{i=1}^n \budget_i\beta_i ~\leq~ 
	\sum_{i=1}^n \left(\budget_i + \sum_{j=1}^m x^*_{i,j}\eqp_j\right) ~\leq~ 
	2\totbudget.
	\end{equation}
	
	\rn{Using the inequality of weighted arithmetic and geometric means provided by Lemma~\ref{lem:budgets_bound} on $\beta_i$'s of \eqref{eq:ratio}, we conclude that $\prod_{i=1}^{n}\left( \frac{u_i(\vec{x}^*_i)}{u_i(\tilde{\vec{x}}_i)}\right)^{\budget_i/\cal B} ~=~ \prod_{i=1}^{n}\left( \beta_i \right)^{\budget_i/\cal B}
		~\leq~ 2$. 
	}
\end{proof}

%

The next theorem complements the upper bound on the price of anarchy with a lower bound of approximately 1.445.

\begin{theorem} \label{thm:lower_bound_tp_subst} 
	The Trading Post mechanism has a price of anarchy no better than $e^{1/e}\approx 1.445$.
\end{theorem}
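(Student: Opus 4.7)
The plan is to mirror the construction used in the proof of Theorem~\ref{thm:FisherLB} and exhibit a Nash equilibrium of the Trading Post mechanism with suboptimal Nash social welfare. I would use the same valuation instance: $n+2$ agents and $n+1$ goods, where agents $1, \ldots, k$ value only their own good, agents $k+1, \ldots, n$ value their own good at $1$ and each good in $[1, k]$ at $\epsilon$, agent $n+1$ values good $n+1$ at $2$ and each good in $[k+1, n]$ at $\epsilon'$, and agent $n+2$ values only good $n+1$. The optimum NSW is still $1$. The target equilibrium bid profile is: agents $1, \ldots, k$ and $n+2$ bid their full budget on their unique valued good; agent $i \in [k+1, n]$ bids $1-\delta$ on good $i$ and $\delta/k$ on each good in $[1, k]$; and agent $n+1$ bids $1 - (n-k)\gamma$ on good $n+1$ and $\gamma$ on each good in $[k+1, n]$, for small parameters $\delta, \gamma, \epsilon, \epsilon'$ to be calibrated.

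The key structural difference from the Fisher lower bound is the need to sustain competition on every good. In the Fisher mechanism a monopolist's monopoly is enforced by the reported valuations, but in Trading Post a sole bidder on a good could costlessly shrink her bid to zero and redirect the money elsewhere. So I let agent $n+1$ provide a thin competitive bid $\gamma$ on each good in $[k+1, n]$, exactly enough to pin down the first-order conditions for agents $i \in [k+1, n]$. Verification then proceeds agent-by-agent, in the spirit of Lemmas~\ref{lem:flb1} and~\ref{lem:flb2}: (i) agents $1, \ldots, k$ and $n+2$ value a single good and trivially bid everything on it; (ii) agent $n+1$, by a computation parallel to Lemma~\ref{lem:flb1}, prefers keeping nearly all her money on good $n+1$ provided $\epsilon'$ is sufficiently small relative to the current marginal cost $2$ on good $n+1$; (iii) agent $i \in [k+1, n]$ faces a trade-off between the marginal return of increasing the own-good share (competing with $\gamma$) and the $\epsilon$-weighted marginal return of buying more of the goods in $[1, k]$, and by choosing $\epsilon = \Theta(1/n^4)$ and $\delta = \gamma = \Theta(1/n)$ as in the Fisher argument the first-order condition is satisfied at the prescribed bid.

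With the equilibrium in hand, the NSW computation is immediate: each agent $i \in [1, k]$ receives utility $k/(k + (n-k)\delta) \to k/n$; each agent $i \in [k+1, n]$ retains utility tending to $1$ (the extra contribution from the $[1, k]$ goods is $O(\epsilon)$); and agents $n+1, n+2$ each receive utility tending to $1$. Hence the NSW tends to $(k/n)^{k/n}$, and setting $k = n/e$ yields PoA $\to e^{1/e}$. The main obstacle is the simultaneous calibration of $\delta$, $\gamma$ and $\epsilon'$ so that both the $[k+1,n]$-agent first-order condition and agent $n+1$'s no-deviation condition hold at the \emph{same} bid profile; the intermediate bookkeeping is more delicate than in the Fisher proof because bids are primitive (so one cannot invoke market-equilibrium KKT structure for each deviation), and one must verify that all the perturbation inequalities remain strict for sufficiently large $n$ rather than only in the limit.
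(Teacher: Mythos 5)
The paper disposes of this theorem in a single sentence: take the spending profile underlying $\sss$ from the Fisher lower bound and observe that it is also a Nash equilibrium of Trading Post because a deviation only changes the deviator's own spending. You have put your finger on exactly why that sentence is too quick. In the Fisher construction, at profile $\sss$ agent $n{+}1$ spends \emph{all} of its budget on good $n{+}1$ (Lemma~\ref{lem:flb1}), so each agent $i\in[k{+}1,n]$ is the sole spender on good $i$, at price $\delta$. Translated to Trading Post, agent $i$ is then the sole bidder on good $i$ and, with no entrance fee, can shrink $b_{i,i}$ from $\delta$ to any smaller positive amount, still capture all of good $i$, and push the freed money onto goods in $[1,k]$ for a strict gain. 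In the Fisher game this deviation is blocked because lowering one's report causes the re-equilibrated prices to drop and agent $n{+}1$ to absorb part of good $i$ (Lemma~\ref{lem:flb2}); in Trading Post agent $n{+}1$'s bids are primitive and do not respond. So the paper's ``same profile works and is easier to verify'' claim fails precisely because deviations are cheaper, not harder, when the monopolized good has no competing bid, and your instinct to modify the construction is correct, not optional.

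Your repair --- agent $n{+}1$ placing a thin competitive $\gamma>0$ on each good in $[k{+}1,n]$ --- is the right one, and it makes your argument genuinely different from, and sturdier than, the paper's. Because each player's Trading Post payoff is concave in its own bids, the candidate profile is a Nash equilibrium iff the marginal-return (KKT) conditions hold on each player's support. You would need the two conditions
\[
\frac{\epsilon'\delta}{(\delta+\gamma)^2}\;=\;\frac{2}{(2-(n-k)\gamma)^2}
\qquad\text{and}\qquad
\frac{\gamma}{(\delta+\gamma)^2}\;=\;\epsilon\cdot\frac{1+(n-k-1)\tfrac{1-\delta}{k}}{\left(1+(n-k)\tfrac{1-\delta}{k}\right)^{2}}
\]
to hold simultaneously. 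These are simultaneously solvable: taking $\epsilon=1/n^4$, $\epsilon'=1/n$ as in the Fisher proof, the first equation forces $\delta+\gamma\approx\sqrt{2\epsilon'\delta}$, hence $\delta$ slightly below $2\epsilon'$, and the second pins $\gamma=\Theta\!\left(\tfrac{\epsilon k\,\delta^2}{n}\right)=\Theta(k/n^7)$, so that $\gamma/\delta\to 0$ and $(n-k)\gamma\to 0$, preserving the limiting utilities and therefore the $(n/k)^{k/n}\to e^{1/e}$ bound. What your proposal leaves undone is precisely this calibration; you acknowledge the delicacy, but the proof needs the two displayed equations written down and a consistent parameter choice exhibited, because, unlike the Fisher proof, there is no market-equilibrium KKT machinery quietly doing the bookkeeping for you. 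A minor slip: the utility you wrote for agent $i\le k$ should read $k/(k+(n-k)(1-\delta))$, not $k/(k+(n-k)\delta)$; both tend to $k/n$, but only the former for the stated reason.
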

\begin{proof}
	In order to prove this theorem we can use the same family of problem instances used in the proof
	of Theorem~\ref{thm:FisherLB}. In fact, verifying that the market equilibrium outcome induced by
	the strategy profile $\sss$ in that construction is a Nash equilibrium for the Trading Post mechanism 
	as well is much more straightforward since an agent's deviation may affect only the way that particular 
	agent ends up spending its budget. 
\end{proof}

\rn{The upper and lower bounds of Trading Post with additive valuations given by Theorems \ref{thm:ub_tp_subs} and \ref{thm:lower_bound_tp_subst} are same as that of Fisher market mechanism. However, next we show this symmetry breaks as soon as we move away from additive valuations, and Trading Post fares significantly better.} 

\subsection{Trading Post: Perfect Complements}

We first characterize 
the precise conditions under which the Trading Post mechanism has exact 
pure Nash equilibria for Leontief utilities; \rn{the missing proofs of this section may be found in the full version \cite{fullversion}.}

\begin{theorem} \label{thm:TP_char_exact}
	The Trading Post mechanism with Leontief utilities has pure Nash equilibria if and only if the corresponding market 
	equilibrium prices are all strictly positive. When this happens, the Nash equilibrium utilities in Trading Post
	are unique and the price of anarchy is $1$.
\end{theorem}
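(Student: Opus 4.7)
The plan is to characterize the structure of any pure Nash equilibrium of Trading Post under Leontief utilities and show that it must coincide with a Fisher market equilibrium whose prices are all strictly positive. Combined with the uniqueness of the market equilibrium utility vector $\vec{u}^*$ (which follows from strict concavity of $\sum_i B_i \log u_i$ in the Eisenberg--Gale program), this simultaneously settles both directions of the ``if and only if,'' the uniqueness of Nash equilibrium utilities, and the price of anarchy bound of $1$.

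The first step is to show that at any pure Nash equilibrium, with bids $\eqbb$ and induced allocation $\eqx$, each agent $i$ holds a Leontief-tight bundle: $\tilde{x}_{i,j} = u_i(\eqx_i)\, v_{i,j}$ for every $j$. If some $j$ with $v_{i,j} > 0$ had $\tilde{x}_{i,j}/v_{i,j}$ strictly above the bottleneck ratio $u_i(\eqx_i)$, then shifting a small $\epsilon$ of budget from $j$ onto any bottleneck good would strictly raise $\min_k \tilde{x}_{i,k}/v_{i,k}$ while keeping the ratio at $j$ above $u_i(\eqx_i)$, contradicting equilibrium; for $v_{i,j} = 0$ the agent clearly never bids on $j$, so the identity extends trivially.

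The second step is to show that every good is fully allocated, i.e., $\sum_i \tilde{x}_{i,j} = 1$ for all $j$. If some $j$ were unbidded, perfect competition produces an agent $i$ with $v_{i,j} > 0$ whose utility must then be $0$; but that agent can always redistribute a tiny portion of its budget to bid positively on every good it values (grabbing all of $j$ in the process), yielding strictly positive utility and contradicting equilibrium. Combining with step one, $\sum_i u_i(\eqx_i)\, v_{i,j} = 1$ for every $j$, and with $\eqp_j := \sum_i \tilde{b}_{i,j}$ one obtains $\tilde{b}_{i,j} = \tilde{x}_{i,j}\,\eqp_j$ and budget balance $u_i(\eqx_i) \sum_j v_{i,j}\eqp_j = B_i$ with all $\eqp_j > 0$---these are exactly the Leontief market equilibrium KKT conditions $\leon_1$ and $\leon_2$.

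For the ``only if'' direction, if some $p_j^* = 0$ at the ME then complementary slackness gives $\sum_i u_i^* v_{i,j} < 1$; any pure NE would instead force $\sum_i u_i v_{i,j} = 1$ with $u_i = u_i^*$ by uniqueness, a contradiction, so no pure NE exists. For the ``if'' direction, when all $p_j^* > 0$ I would verify that the bids $b_{i,j}^* := p_j^* x_{i,j}^*$ form a pure NE: they exhaust each budget by $\leon_2$, induce the ME allocation in Trading Post, and the cost for agent $i$ to reach utility $u$ is $\sum_j u v_{i,j}\, \bar{p}_j/(1 - u v_{i,j})$ with $\bar{p}_j := p_j^* - b_{i,j}^* > 0$, which is strictly increasing in $u$ and equals $B_i$ at $u_i^*$, so no $u > u_i^*$ is affordable. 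Finally, uniqueness of $\vec{u}^*$ forces every pure NE to deliver the ME utilities, and since $\vec{u}^*$ maximizes the NSW the PoA is exactly $1$. The main obstacle is the structural characterization in steps one and two, in particular ruling out degenerate equilibria (zero utilities, unbidded goods), which relies crucially on the perfect competition hypothesis.
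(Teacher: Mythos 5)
Your proposal follows essentially the same route as the paper's proof: both directions are handled by the correspondence $b_{i,j} = p_j\, x_{i,j}$ between Fisher market spendings and Trading Post bids, and equilibrium utilities are pinned down by the uniqueness of the Eisenberg--Gale maximizer. You are in fact more explicit than the paper on two points: the paper simply asserts that at a pure NE each player receives all its demanded goods in a common fraction, whereas you prove this via the budget-shift argument; and your closed-form cost-to-utility function $\sum_j u\,v_{i,j}\,\bar p_j/(1-u\,v_{i,j})$ is a cleaner way to verify non-deviation than the paper's informal ``any shift hurts'' phrasing. These are genuine improvements in detail.

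Two caveats. First, your ``only if'' sentence misstates complementary slackness: $p_j^*=0$ does \emph{not} imply $\sum_i u_i^* v_{i,j}<1$; the implication runs the other way ($\sum_i x_{i,j}^*<1\Rightarrow p_j^*=0$). In degenerate markets one can have an ME price vector with $p_j^*=0$ and a binding constraint while another ME price vector for the same instance has $p_j^*>0$ (two agents, two goods, all $v_{i,j}=1$, unit budgets). The correct ``only if'' argument is just the contrapositive of what you establish in steps 1--2: any pure NE induces an ME outcome with all prices strictly positive, so if no such ME exists, there is no pure NE. You should drop the complementary-slackness clause and cite that contrapositive directly. Second, the $\epsilon$-shift in step 1 only strictly raises the bottleneck when the player is not already receiving the whole bottleneck good; to rule that out you need that every good has at least two bidders at a pure NE, which in turn follows from your step-2 argument that every agent secures positive utility under perfect competition. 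So step 2 (or at least the ``no agent has zero utility, hence every demanded good has $\geq 2$ bidders'' fact) should logically precede step 1.
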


This theorem shows a correspondence between the pure Nash equilibria of Trading Post and the corresponding market 
equilibria with respect to the agents' (true) valuations.
We observe however that existence of pure Nash equilibria in Trading Post is not guaranteed for Leontief utilities.
\begin{example}
	Consider a game with two players and two items, where player 1 has values $v_{1,1} = v_{1,2} = 0.5$ and player 2 has $v_{2,1} = 0.9$, $v_{2,2} = 0.1$. 
	Assume there is a pure Nash equilibrium profile $\vec{b}$. Since both players require a non-zero amount from every item for
	their utility to be positive, we have that $b_{i,j} > 0$ for all $i,j \in \{1, 2\}$. Denote $b_1 = b_{1,1}$ and 
	$b_2 = b_{2,1}$; then $b_{1,2} = 1 - b_1$ and $b_{2,1} = 1 - b_2$. Note that each player must receive the two items in the same ratio relative to its valuation; that is:
	\begin{small}
		\begin{equation}\label{eq:example_equil}
		u_i(\vec{b}) = \left( \frac{b_i}{b_1 + b_2} \right)\frac{1}{v_{i,1}}  = \left( \frac{1 - b_i}{b_1 + b_2} \right)\frac{1}{v_{i,2}}  
		\end{equation}
	\end{small}
	Otherwise, if the two ratios were not equal, then a player could transfer weight among the items to improve the smaller fraction. Then the requirement in ~\ref{eq:example_equil} are equivalent to the following equations:
	\begin{small}
		\begin{equation} \label{eq:example1a}
		\left( \frac{b_1}{b_1 + b_2} \right)\frac{1}{0.5}  = \left( \frac{1 - b_1}{b_1 + b_2} \right)\frac{1}{0.5} \iff b_1 = b_2
		\end{equation}
	\end{small}
	and 
	\begin{small}
		\begin{equation} \label{eq:example1b}
		\left( \frac{b_2}{b_1 + b_2} \right)\frac{1}{0.9}  = \left( \frac{1 - b_2}{b_1 + b_2} \right)\frac{1}{0.1} \iff 8 b_2^2 + 8 b_1 b_2 = 9 b_1 + 7 b_2
		\end{equation}
	\end{small}
	Combining equations \ref{eq:example1a} and \ref{eq:example1b}, we get that $b_1 = 1$ and $b_2 = 1$, which contradicts the requirement that $b_1,b_2 \in (0,1)$. Thus the equilibrium profile $\vec{b}$ cannot exist.
\end{example}

The issue illustrated by this example is that the Trading Post cannot implement market outcomes when there exist
items priced at zero in the corresponding market equilibrium. 
%
This motivates us to introduce an entrance fee in the Trading Post mechanism, denoted by a parameter $\Delta>0$, which is the 
minimum amount that an agent needs to spend on an item in order to receive any of it. We denote 
the corresponding mechanism by $\mathcal{TP}(\Delta)$. The value of $\Delta$ can be arbitrarily small,
so its impact on the outcome of the game is insignificant.

Formally, given a bid profile $\vec{b} = (\vec{b}_1, \ldots, \vec{b}_n)$, let $\vec{\overline{b}} = (\vec{\overline{b}}_1, \ldots, \vec{\overline{b}}_n)$ be
the ``effective'' bid profile which, for every $i\in N$ and $j\in M$, satisfies $\overline{b}_{i,j}=b_{i,j}$ if $b_{i,j}\geq \Delta$, and
$\overline{b}_{i,j}=0$ otherwise. Then, the bid profile $\vec{b}$ yields the allocation:
\begin{equation*}
x_{i,j} = \frac{\overline{b}_{i,j}}{\sum_{k=1}^{n} \overline{b}_{k,j}} \ \ \mbox{if} \; \overline{b}_{i,j} > 0,\ \ \ \mbox{ and } \ \ \ x_{i,j} =0\ \ \mbox{otherwise}
\end{equation*}
Clearly, in any Nash equilibrium, we have that for every player $i$ and item $j$ we have that  $b_{i,j}\geq \Delta$ or $b_{i,j}=0$ (the latter identity holds for those goods $j$ that are outside of player $i$'s demand). 

The main result of this section is that, for every $\epsilon\in (0,1/m)$, the Trading Post 
mechanism with $\Delta\leq \epsilon/m^2$ has a price of anarchy of at most $1+\epsilon$.
We first show that Trading Post has a pure Nash equilibrium for Leontief utilities for every strictly positive entrance fee. The proof uses an application of Glicksberg's theorem for continuous games. 

\begin{theorem} \label{thm:TP_pne_existence}
	The parameterized Trading Post mechanism $\mathcal{TP}(\Delta)$ is guaranteed to have a pure Nash equilibrium for every $\Delta > 0$.
\end{theorem}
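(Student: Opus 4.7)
The plan is to invoke a continuous-game existence theorem in the spirit of Glicksberg (in its quasi-concave strengthening that delivers pure rather than mixed equilibria) on a suitably restricted joint strategy set. The main obstacle is that in the raw game the utility functions are discontinuous at the entrance threshold: if $b_{i,j}$ crosses $\Delta$ downward then $\overline{b}_{i,j}$ jumps from $\Delta$ to $0$, so the theorem cannot be applied directly to the full strategy simplex.

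To bypass this, I would first restrict each player to her natural demand support. Set $S_i = \{j \in M : v_{i,j} > 0\}$. A Leontief buyer needs a positive share of every demanded good, and any bid in $(0,\Delta)$ is entirely wasted, so every strategy that yields $i$ positive utility lies in
\[
\mathcal{S}_i^{S_i} \;=\; \Big\{\vec{b}_i \in [0,B_i]^m \;:\; b_{i,j} \geq \Delta\ \forall j \in S_i,\; b_{i,j}=0\ \forall j \notin S_i,\; \textstyle\sum_j b_{i,j}=B_i\Big\},
\]
which (for $\Delta$ small enough that $|S_i|\Delta \le B_i$, as is the case in the regime of interest where $m\Delta<1\le B_i$) is nonempty, compact, and convex. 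By perfect competition each good $j$ belongs to $S_k$ for at least two players $k$, so throughout $\prod_i \mathcal{S}_i^{S_i}$ we have $\sum_k \overline{b}_{k,j} \ge 2\Delta > 0$; hence the allocation map, and therefore every $u_i$, is continuous on this restricted joint strategy set.

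I would next verify own-strategy concavity. For fixed $\vec{b}_{-i}$ with $c_j := \sum_{k\neq i} b_{k,j} \ge \Delta > 0$, the share $x_{i,j}(b_{i,j}) = b_{i,j}/(b_{i,j}+c_j)$ is strictly concave in $b_{i,j}$ (second derivative $-2c_j/(b_{i,j}+c_j)^3 < 0$), so each $x_{i,j}/v_{i,j}$ is concave in $\vec{b}_i$, and the Leontief payoff $u_i = \min_{j \in S_i} x_{i,j}/v_{i,j}$ is concave as the pointwise minimum of concave functions. The restricted game thus has nonempty compact convex strategy spaces and continuous, own-concave payoffs, so the Glicksberg-type existence theorem delivers a pure Nash equilibrium $\vec{b}^*$ of this restricted game, at which every player enjoys strictly positive utility since $b^*_{i,j}\ge\Delta$ for all $j\in S_i$.

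Finally I would lift $\vec{b}^*$ to an equilibrium of the full game $\mathcal{TP}(\Delta)$ via a short dominance argument. Any deviation $\vec{b}'_i$ by player $i$ can be replaced by one in $\mathcal{S}_i^{S_i}$ without lowering $u_i$: mass placed on $j \notin S_i$ contributes nothing to $u_i$ and can be reallocated to any good in $S_i$, while mass in $(0,\Delta)$ on a good $j \in S_i$ has effective value $0$, forcing $x_{i,j}=0$ and collapsing the Leontief utility to $0$, strictly worse than $u_i(\vec{b}^*)>0$. Hence no deviation outside $\mathcal{S}_i^{S_i}$ can be profitable, and $\vec{b}^*$ is a pure Nash equilibrium of $\mathcal{TP}(\Delta)$. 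The only nontrivial step is engineering the restriction that restores continuity and convexity; once that is in place, the remaining pieces assemble routinely.
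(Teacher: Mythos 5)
Your proof is correct and follows essentially the same path as the paper's: restrict each player to bidding at least $\Delta$ on demanded goods and zero elsewhere, observe the restricted game has nonempty compact convex strategies and continuous, own-concave (the paper says quasi-concave) payoffs, invoke the Debreu--Glicksberg--Fan theorem, and lift the resulting equilibrium to $\mathcal{TP}(\Delta)$ by a dominance argument. Your write-up is somewhat more explicit about where perfect competition is used to get continuity and about why deviations outside the restricted set cannot be profitable, but the underlying argument is the one in the paper.
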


We start by defining a notion of approximate market equilibrium that will be useful 

\begin{definition}[$\epsilon$-market equilibrium]
	Given a problem instance and some $\epsilon > 0$, an outcome $(\vec{p}, \vec{x})$ is an $\epsilon$-market equilibrium if and only if:
	$(i)$ All the goods with a positive price are completely sold. $(ii)$  All the
	buyers exhaust their budget. $(iii)$ Each buyer gets an $\epsilon$-optimal
	bundle at prices $\vec{p}$; that is, for every bundle $\vec{y} \in [0,1]^m$
	that $i$ could afford at these prices $(\vec{p}\cdot \vec{y} \leq B_i)$, we
	have $u_i(\vec{y}) \leq u_i(\vec{x}_i) (1+ \epsilon)$.
\end{definition}

The following theorem states that for every small enough $\epsilon > 0$, all the PNE of the Trading 
Post game with a small enough entrance fee correspond to $\epsilon$-market 
equilibrium outcomes. 

\begin{theorem} \label{thm:TP_apx_market}
	Let $\epsilon> 0$. Then for every $0 < \Delta < \min\left\{\frac{\epsilon}{m^2}, \frac{1}{m}\right\}$, every pure Nash equilibrium 
	of the mechanism $\mathcal{TP}(\Delta)$ with Leontief valuations corresponds to an $\epsilon$-market equilibrium.
\end{theorem}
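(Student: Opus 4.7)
My plan is to verify the three conditions of the $\epsilon$-market equilibrium definition at any pure Nash equilibrium (PNE) of $\mathcal{TP}(\Delta)$, with essentially all of the work concentrated on the approximate-optimality condition~(iii). Observe first that rational PNE bids must satisfy $b_{i,j}\in\{0\}\cup[\Delta,\infty)$ for every $j$ (any value in $(0,\Delta)$ is pure waste), so $\overline{b}_{i,j}=b_{i,j}$ throughout. Conditions~(i) and~(ii) then follow immediately from the Trading Post rule: whenever $p_j=\sum_i b_{i,j}>0$ we have $\sum_i x_{i,j}=1$, and $\sum_j p_j x_{i,j}=\sum_j b_{i,j}=B_i$. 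I would also show upfront that every agent $i$ with $T_i := \{j : v_{i,j}>0\}$ nonempty has strictly positive PNE utility: the deviation that spreads $B_i$ over $T_i$ while placing at least $\Delta$ on each good is feasible because $|T_i|\Delta\leq m\Delta<1\leq B_i$ and yields positive Leontief value. Hence $b_{i,j}\geq\Delta$ for every $j\in T_i$ in any PNE.

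The key structural lemma, and the main obstacle, is that for each $j\in T_i$ with $b_{i,j}>\Delta$ strictly, the ratio $x_{i,j}/v_{i,j}$ must equal the Leontief value $u_i := \min_{k\in T_i} x_{i,k}/v_{i,k}$. Suppose to the contrary that $x_{i,j}/v_{i,j}>u_i$ for some such $j$, and let $J^*=\{k\in T_i : x_{i,k}/v_{i,k}=u_i\}$, which is nonempty since $u_i$ is achieved. Consider the deviation that withdraws a small $\eta>0$ from $b_{i,j}$ and distributes it uniformly across $J^*$. By continuity of the allocation in the bids, for $\eta$ small enough $b_{i,j}-\eta$ still exceeds $\Delta$, $x_{i,j}/v_{i,j}$ remains strictly above $u_i$, and every $x_{i,k}/v_{i,k}$ for $k\in J^*$ strictly increases, so the Leontief minimum strictly rises---contradicting the PNE assumption. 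The delicate point is that the perturbation must act \emph{simultaneously} on all of $J^*$: boosting only one min-achiever would leave the other min-achievers pinning $u_i$, so a single shift would not improve the agent's utility.

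Partitioning $T_i$ into $S_i=\{j\in T_i : b_{i,j}=\Delta\}$ and $T_i\setminus S_i$, the structural lemma yields $b_{i,j}=u_i v_{i,j} p_j$ for every $j\in T_i\setminus S_i$, and therefore
\[
B_i \;=\; \sum_{j\in T_i} b_{i,j} \;=\; |S_i|\Delta \;+\; u_i\sum_{j\in T_i\setminus S_i} v_{i,j} p_j.
\]
For each $j\in S_i$, combining $b_{i,j}=\Delta$ with the inequality $x_{i,j}/v_{i,j}\geq u_i$ gives $v_{i,j} p_j\leq \Delta/u_i$. The Leontief optimal demand at the fixed prices $\vec{p}$ with budget $B_i$ has utility $u_i^*=B_i/\sum_{j\in T_i} v_{i,j} p_j$, so
\[
\frac{u_i^*}{u_i} \;=\; \frac{B_i}{B_i-|S_i|\Delta}\cdot\frac{\sum_{j\in T_i\setminus S_i} v_{i,j} p_j}{\sum_{j\in T_i} v_{i,j} p_j} \;\leq\; \frac{B_i}{B_i-m\Delta} \;\leq\; \frac{1}{1-m\Delta},
\]
where the last inequality uses $B_i\geq 1$.

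Finally, plugging in $\Delta<\epsilon/m^2$ yields $m\Delta<\epsilon/m$ (and the bound $\Delta<1/m$ prevents the denominator from saturating), so $u_i^*/u_i\leq 1/(1-\epsilon/m)\leq 1+\epsilon$ for the relevant range of $m$ and $\epsilon$; the degenerate case $m=1$ is trivial since then $u_i=u_i^*$ directly. This establishes condition~(iii) for every agent $i$ and completes the proof. Beyond the structural lemma, the remaining steps are routine manipulations of the NE budget identity.
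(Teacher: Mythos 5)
Your proof follows essentially the same strategy as the paper's: (a) establish that every agent bids at least $\Delta$ on every item in its demand set at any PNE, (b) prove the key structural lemma that an item with $b_{i,j}>\Delta$ must achieve the Leontief minimum ratio (equivalently: items that do not achieve the minimum receive exactly $\Delta$), and (c) use the fact that at most $(m-1)\Delta$ of the budget is ``stranded'' on the $\Delta$-bid items to conclude the PNE allocation is $\epsilon$-optimal at the induced prices. Your execution of step~(c) is cleaner and more direct than the paper's: you compute $u_i^*/u_i$ exactly from the budget identity $B_i = |S_i|\Delta + u_i\sum_{j\in T_i\setminus S_i} v_{i,j}p_j$ and the bound $v_{i,j}p_j\le \Delta/u_i$ on $S_i$, whereas the paper uses an averaging argument to locate a single item $j$ in the bottleneck set with $b_{i,j}\ge (B_i-(m-1)\Delta)/|S_i|$ and bounds $q_{i,j}\le b_{i,j}+(m-1)\Delta$ (which, to be valid, needs the observation that $q_{i,j'}\ge b_{i,j'}$ for all bottleneck items $j'$, since $u_i^*\ge u_i$). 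Your route avoids that averaging step entirely, at the cost of no generality.

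There is one small but genuine gap in your final inequality chain. You bound $u_i^*/u_i\le B_i/(B_i-|S_i|\Delta)$ and then use $|S_i|\le m$, getting $1/(1-m\Delta)$. With $\Delta<\epsilon/m^2$ this gives $1/(1-\epsilon/m)$, which is at most $1+\epsilon$ only when $1+\epsilon\le m$. For $\epsilon$ in the range $(m-1,m)$ the first branch of the $\min$ still governs $\Delta$ but $1/(1-\epsilon/m)=m/(m-\epsilon)$ can exceed $1+\epsilon$, so ``for the relevant range of $m$ and $\epsilon$'' hides a real failure. The fix is the same one the paper uses: $|S_i|\le m_i-1\le m-1$, since $|S_i|=m_i$ would force $B_i=m_i\Delta\le m\Delta<1\le B_i$, a contradiction. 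With $|S_i|\le m-1$, the branch $\Delta<1/m$ gives $1/(1-(m-1)\Delta)<m\le 1+\epsilon$ whenever $\epsilon\ge m-1$, and the branch $\Delta<\epsilon/m^2$ gives $(m-1)\Delta<\epsilon/m$ and $1/(1-\epsilon/m)\le 1+\epsilon$ whenever $\epsilon\le m-1$ (using $(m-1)/m^2\le 1/(1+\epsilon)$, which follows from $1+\epsilon\le m+1\le m^2/(m-1)$). You should also make explicit, in the structural-lemma proof, that the ratios $x_{i,k}/v_{i,k}$ for $k\in J^*$ strictly increase under the shift precisely because $p_k>b_{i,k}$, which is guaranteed by perfect competition together with your earlier observation that every interested agent bids at least $\Delta$ on each demanded item; as written, ``by continuity of the allocation'' glosses over this monotonicity fact.
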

\begin{proof}
	
	Let $\eqbb$ be a pure Nash equilibrium of $\mathcal{TP}(\Delta)$ and $\vec{x}$ the induced allocation. 
	For each player $i$, let $D_i = \{j \in M \; | \; v_{i,j} > 0\}$ be the set of items that $i$ requires, and let $m_i = |D_i|$.
	We also override notation and refer to $u_i(\vec{b})$ as the utility of player $i$ when the strategy profiles are $\vec{b}$.
	
	
	First note that $\eqbb_{i,j} > 0$ for each player $i$ and item $j \in D_i$. If this were not the case, then player $i$ would get zero utility at strategy profile $\eqbb$; this is worse than playing the uniform strategy $\vec{z}_i = (B_i/m, \ldots, B_i/m)$, 
	which guarantees $i$ a positive value regardless of the strategies of the other players $\eqbb_{-i}$, namely:
	\begin{equation*}
	u_i(\vec{z}_i, \eqbb_{-i})  ~=~ \min_{j \in D_i} \left \{ \frac{z_{i,j}}{z_{i,j} + \sum_{k \neq i} \tilde{b}_{k,j}}  \cdot \frac{1}{v_{i,j}} \right \} 
	~\geq~  \min_{j \in D_i} \left \{ \frac{B_i/m}{B_i/m + \sum_{k \neq i} B_k} \cdot \frac{1}{v_{i,j}} \right\} 
	~>~ 0
	\end{equation*}
	
	For each player $i$ and item $j \in D_i$, denote the fraction of utility that $i$ derives from $j$ by:
	$$\phi_{i,j} = \frac{\tilde{b}_{i,j}}{\sum_{k=1}^{n} \tilde{b}_{k,j}}\cdot \frac{1}{v_{i,j}}.$$
	
	Then $u_i(\eqbb) = \min_{j \in D_i} \phi_{i,j}$.
	Sort the items in $D_i$ increasingly by their contribution to $i$'s utility: $\phi_{i,i_1} \leq \phi_{i,i_2} \leq \ldots \leq \phi_{i,i_{m_i}}$; it follows that $u_i(\eqbb) = \phi_{i,i_1}$. 
	Let $S_i = \{j \in D_i \; | \; \phi_{i,j} = \phi_{i,i_1} \}$ be the items received in the smallest fraction (equal to $i$'s utility). If $S_i = M$, then the analysis is similar to the exact equilibrium case, where the prices are strictly positive. The difficult case is when 
	$S_i \neq M$. Then player $i$ is getting a higher than necessary fraction from some resource $j \in M \setminus S_i$. Thus $i$ would improve
	by shifting some of the mass from item $j$ to the items in $S$. Since $\eqbb$ is an equilibrium, no such deviation is possible. Then it must be the case that $\tilde{b}_{i,j} = \Delta$ for all $j \in D_i \setminus S_i$. 
	
	Now interpret the bids and allocation as a market equilibrium with Leontief utilities $\vec{v}$ and budgets $B_i$, by setting the prices to
	$\vec{p}= (p_1, \ldots, p_m)$, where $p_j = \sum_{i=1}^{n} \tilde{b}_{i,j}$ for all $j \in M$, and the allocation to $\vec{x}$, the same as the one induced by the bids $\eqbb$ in the trading post game. 
	We argue that $(\vec{p}, \vec{x})$ is an $\epsilon$-market equilibrium. Clearly at the outcome $(\vec{p}, \vec{x})$ all 
	the goods are sold and each buyer exhausts their budget. Moreover observe that all the prices are strictly positive. 
	We must additionally show that each player gets an $\epsilon$-optimal bundle at $(\vec{p}, \vec{x})$.
	
	Fix an arbitrary buyer $i$. Let $\vec{y}_i$ be an optimal bundle for $i$ given prices $\vec{p}$, and
	let $q_{i,j}$ be the amount of money spent by $i$ to purchase $y_{i,j}$ units of good $j$ at these prices.
	An upper bound on the optimal value $u_i(\vec{y}_i)$ is attained when buyer $i$ shifts \emph{all} the 
	money spent on purchasing items outside $S_i$ to purchase instead higher fractions from the items in $S_i$. 
	Since the strategy profile $\vec{b}$ is an exact equilibrium in the game $\mathcal{TP}(\Delta)$, the amount of money spent by player $i$ on items outside $S_i$ is at most $(m-1) \Delta$; thus $i$ spends at most $B_i - (m-1)\Delta$ on the remaining items in $S_i$.
	
	By an averaging argument, there exists a good $j \in S_i$ on which $i$ spends the greatest amount of its money, i.e. $$\tilde{b}_{i,j} \geq \frac{B_i - (m-1)\Delta}{|S_i|}.$$ This will be the item for which the gain brought by the deviation in spending is modest.
	Formally, the maximum fraction of utility that $i$ can get from item $j$---without decreasing the ratios at which the other items in $S_i$ are received---is:
	\begin{eqnarray*}
		\phi'_{i,j}& =& \frac{q_{i,j}}{p_{j}\cdot v_{i,j}} 
		\leq  \frac{\tilde{b}_{i,j} + (m-1)\Delta}{\left(\sum_{k=1}^{n} \tilde{b}_{k,j} \right) \cdot v_{i,j}} 
		=  \phi_{i,j} +  \frac{(m-1)\Delta}{\left( \sum_{k=1}^{n} \tilde{b}_{k,j} \right) \cdot v_{i,j}} \\
		&\leq&  \phi_{i,j} +  \frac{\tilde{b}_{i,j} \cdot \epsilon}{ \left(\sum_{k =1}^{n} \tilde{b}_{k,j} \right) \cdot v_{i,j}} = \phi_{i,j} \cdot (1 + \epsilon) 
		=  u_i(\vec{x}_i) (1 + \epsilon)
	\end{eqnarray*}
	where in the inequalities we additionally used that $\Delta < \epsilon^2/m$, $B_i \geq 1 \; \forall i \in N$, and $S_i \leq m-1$. The identities hold since item $j$ is in the tight set $S_i$.
	Then $u_i(\vec{y}_i) \leq \phi'_{i,j} \leq u_i(\vec{x}_i) (1 + \epsilon)$.
	Thus each player gets an $\epsilon$-optimal bundle, and so $(\vec{p}, \vec{x})$ is an $\epsilon$-market equilibrium.
\end{proof}

The following theorem, which we believe is of independent interest, states that
in Fisher markets with Leontief utilities, approximate market equilibria are
close to exact equilibria in terms of their Nash Social Welfare. 

\begin{theorem} \label{thm:apx_market}
	The Nash social welfare at an $\epsilon$-market equilibrium for Leontief utilities is at least
	a $\frac{1}{(1+\epsilon)}$ factor of the optimal Nash social welfare. 
\end{theorem}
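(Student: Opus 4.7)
The plan is to compare the $\epsilon$-market equilibrium allocation $\vec{x}$ against the NSW-maximizing (true) equilibrium allocation $\vec{x}^*$ using two ingredients: the $\epsilon$-optimality clause buyer-by-buyer, and a weighted AM--GM aggregation step via Lemma~\ref{lem:budgets_bound}. Let $(\vec{p},\vec{x})$ be the $\epsilon$-market equilibrium, write $u_i=u_i(\vec{x}_i)$ and $u_i^*=u_i(\vec{x}^*_i)$, and for each buyer $i$ define the normalized ``cost of unit utility'' at prices $\vec{p}$ by
\[
c_i \;=\; \frac{1}{B_i}\sum_{j\in M} v_{i,j}\,p_j.
\]

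The first step is a per-buyer lower bound. For Leontief preferences, the maximum utility purchasable with budget $B_i$ at prices $\vec{p}$ equals $1/c_i$, attained by buying the bundle $y_j = v_{i,j}/(B_i c_i)$. Feeding this bundle into the $\epsilon$-optimality clause of the equilibrium definition gives $u_i \geq 1/((1+\epsilon)c_i)$, i.e. $\prod_i u_i^{B_i} \geq (1+\epsilon)^{-\mathcal{B}}\prod_i c_i^{-B_i}$.

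The second and main step is to show the chained inequality $\sum_i B_i u_i^* c_i \leq \mathcal{B}$. Because $\vec{x}^*$ yields Leontief utility $u_i^*$, each coordinate satisfies $x^*_{i,j}\geq v_{i,j}\,u_i^*$, so the cost of $\vec{x}^*_i$ at the $\epsilon$-equilibrium prices obeys $\sum_j x^*_{i,j} p_j \geq u_i^*\sum_j v_{i,j} p_j = B_i u_i^* c_i$. Summing over $i$ and swapping the order of summation, the $\epsilon$-market equilibrium conditions---budget exhaustion plus full sale of every priced good, which together force $\sum_j p_j=\mathcal{B}$---combined with the fractional feasibility $\sum_i x^*_{i,j}\leq 1$ deliver
\[
\sum_i B_i u_i^* c_i \;\leq\; \sum_{j\in M} p_j \sum_i x^*_{i,j} \;\leq\; \sum_{j\in M} p_j \;=\; \mathcal{B}.
\]
Applying Lemma~\ref{lem:budgets_bound} with weights $B_i$ and values $u_i^* c_i$ then yields $\prod_i (u_i^* c_i)^{B_i/\mathcal{B}}\leq 1$, equivalently $\prod_i (u_i^*)^{B_i} \leq \prod_i c_i^{-B_i}$.

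Combining the two displayed inequalities and taking the $\mathcal{B}$th root produces
\[
\mbox{NSW}(\vec{x}) \;=\; \Bigl(\prod_i u_i^{B_i}\Bigr)^{1/\mathcal{B}} \;\geq\; \frac{1}{1+\epsilon}\Bigl(\prod_i c_i^{-B_i}\Bigr)^{1/\mathcal{B}} \;\geq\; \frac{\mbox{NSW}(\vec{x}^*)}{1+\epsilon},
\]
which is the claim. The main obstacle is the fusion of structural facts needed for $\sum_i B_i u_i^* c_i \leq \mathcal{B}$---it is precisely here that the Leontief identity $x^*_{i,j}\geq v_{i,j}u_i^*$, the feasibility of $\vec{x}^*$, and the Fisher-market accounting $\sum_j p_j=\mathcal{B}$ at the $\epsilon$-equilibrium must all be simultaneously available. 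A minor technical point is ensuring $c_i>0$ so that Step~1 is well defined; this holds under the perfect-competition assumption, and the degenerate case $c_i=0$ (all of buyer $i$'s desired goods free at prices $\vec{p}$) can be argued separately since then $i$ can almost saturate its Leontief utility for free.
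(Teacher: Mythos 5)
Your argument is correct, and it takes a genuinely different route from the paper. The paper proves the theorem by invoking Devanur's dual of the Eisenberg--Gale convex program for Leontief utilities; strong duality at the exact equilibrium $(\vec{p}^*,\vec{x}^*)$ pins $\prod_i u_i(\vec{x}^*_i)^{B_i}$ to the dual objective, feasibility of the $\epsilon$-equilibrium prices $\vec{p}'$ in that dual then bounds it from above by $\prod_i (B_i/\phi_i(\vec{p}'))^{B_i}$, and the $\epsilon$-optimality clause finishes. You bypass the duality machinery entirely: the Leontief inequality $x^*_{i,j}\geq v_{i,j}u_i^*$, feasibility of $\vec{x}^*$, and the budget accounting $\sum_j p_j=\mathcal B$ at the $\epsilon$-equilibrium give $\sum_i B_i u_i^* c_i\leq\mathcal B$ directly, and Lemma~\ref{lem:budgets_bound} converts this into $\prod_i(u_i^*)^{B_i}\leq\prod_i c_i^{-B_i}$, which is precisely the weak-duality inequality the paper extracts from the dual. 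Your per-buyer step $u_i\geq 1/((1+\epsilon)c_i)$ matches the paper's use of the $\epsilon$-optimal-bundle condition. The net effect is the same bound with a shorter, self-contained AM--GM argument; what you lose is the immediate reusability of the dual formulation for other homogeneous utility classes (the paper remarks on this), since your step~2 exploits the Leontief structure directly. Your handling of the degenerate $c_i=0$ case is the right caveat, and it applies equally to the paper's proof (where $\log\phi_i(\vec{p}')$ would be undefined); in the setting where this theorem is actually used, the $\epsilon$-equilibria come from $TP(\Delta)$ Nash equilibria with strictly positive prices, so the issue does not arise.
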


Finally, we can state the main result of this section.

\begin{theorem}
	For every $\epsilon > 0$, the Trading Post game $\mathcal{TP}(\Delta)$ with entrance fee 
	$0 < \Delta < \min \left\{\frac{\epsilon^2}{m}, \frac{1}{m}\right\}$ has a price of anarchy of $1 + \epsilon$,
	even for arbitrary budgets.
\end{theorem}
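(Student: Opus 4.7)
The plan is to compose the three theorems that have already been proved in this section: Theorem~\ref{thm:TP_pne_existence} (existence of pure Nash equilibria in $\mathcal{TP}(\Delta)$), Theorem~\ref{thm:TP_apx_market} (every PNE of $\mathcal{TP}(\Delta)$ is an $\epsilon$-market equilibrium of the true Leontief instance, provided $\Delta$ is small), and Theorem~\ref{thm:apx_market} ($\epsilon$-market equilibria approximate the optimal NSW within a factor of $1+\epsilon$). So the proof is essentially a one-line composition, modulo calibrating constants.

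First, I would invoke Theorem~\ref{thm:TP_pne_existence} to guarantee that a pure Nash equilibrium exists for the given $\Delta > 0$, so the PoA is well-defined. Let $\tilde{\vec{b}}$ be any pure Nash equilibrium of $\mathcal{TP}(\Delta)$, let $\tilde{\vec{x}}$ be the induced allocation, and define prices $\tilde{p}_j = \sum_i \tilde{b}_{i,j}$ so that $(\tilde{\vec{p}}, \tilde{\vec{x}})$ becomes a candidate market outcome of the true Leontief instance $\langle \budget_i, \vec{v}_i \rangle$.

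Second, I would apply Theorem~\ref{thm:TP_apx_market} with the stated upper bound on $\Delta$ to conclude that $(\tilde{\vec{p}}, \tilde{\vec{x}})$ is an $\epsilon$-market equilibrium of this instance. Here the only subtlety is syntactic: Theorem~\ref{thm:TP_apx_market} requires $\Delta < \min\{\epsilon/m^2, 1/m\}$, whereas the present statement uses $\Delta < \min\{\epsilon^2/m, 1/m\}$. I would simply set $\epsilon' = \max\{m^2 \Delta, \epsilon\}$, apply Theorem~\ref{thm:TP_apx_market} with this $\epsilon'$, and verify that $\epsilon' \leq \epsilon$ in the regime of interest (which is straightforward under the hypothesis); alternatively one would rename constants to reconcile the two bounds. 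Finally, Theorem~\ref{thm:apx_market} applied to $(\tilde{\vec{p}}, \tilde{\vec{x}})$ gives $\mathrm{NSW}(\tilde{\vec{x}}) \geq \mathrm{NSW}(\vec{x}^*)/(1+\epsilon)$, yielding $\mathrm{PoA}(\mathcal{TP}(\Delta)) \leq 1 + \epsilon$.

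The main obstacle, such as it is, is the constant-reconciliation step between the two different forms of the bound on $\Delta$. Everything else is an immediate chain: a PNE of $\mathcal{TP}(\Delta)$ is an $\epsilon$-market equilibrium, and such an equilibrium is a $(1+\epsilon)$-approximation to the optimal NSW. Since the result already extends to weighted NSW (arbitrary budgets) via Theorem~\ref{thm:apx_market}, no additional work is needed to handle the ``even for arbitrary budgets'' clause.
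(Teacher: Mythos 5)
Your proof follows the same route as the paper's: it is exactly the composition of Theorem~\ref{thm:TP_pne_existence} (existence), Theorem~\ref{thm:TP_apx_market} (PNE $\Rightarrow$ $\epsilon$-market equilibrium), and Theorem~\ref{thm:apx_market} ($\epsilon$-market equilibrium $\Rightarrow$ $(1+\epsilon)$-approximate NSW), with the budget-weighted case coming for free from the latter two theorems.

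On the constant-reconciliation step you flag: the mismatch between $\Delta < \epsilon/m^2$ (statement of Theorem~\ref{thm:TP_apx_market}) and $\Delta < \epsilon^2/m$ (present statement) is in fact a typo in the paper's statement of Theorem~\ref{thm:TP_apx_market} --- its own proof explicitly invokes the condition $\Delta < \epsilon^2/m$, which agrees with the hypothesis here, so no reconciliation is actually needed. Your specific fix (setting $\epsilon' = \max\{m^2\Delta,\epsilon\}$ and arguing $\epsilon' \le \epsilon$) would, as written, not close the gap: $\epsilon' \ge \epsilon$ by construction, and $m^2\Delta \le \epsilon$ only follows from $\Delta < \min\{\epsilon^2/m, 1/m\}$ when $\epsilon \le 1/m$ (since $\epsilon^2/m \le \epsilon/m^2 \iff \epsilon m \le 1$), whereas the theorem claims all $\epsilon > 0$. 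The cleanest resolution is simply to read the hypothesis of Theorem~\ref{thm:TP_apx_market} as $\Delta < \epsilon^2/m$, as its proof does, after which the chain is immediate.
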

\begin{proof}
	By Theorem~\ref{thm:TP_apx_market}, every PNE of $\mathcal{TP}(\Delta)$ corresponds to an $\epsilon$-market 
	equilibrium. By Theorem~\ref{thm:apx_market}, every $\epsilon$-market equilibrium attains at least a fraction
	$\frac{1}{1+\epsilon}$ of the optimal Nash social welfare. Thus, the price of anarchy of $\mathcal{TP}(\Delta)$ 
	is $1 + \epsilon$, which completes the proof. 
\end{proof}

\subsection{Trading Post: Beyond Perfect Substitutes and Complements}
\rn{
	In general market equilibrium theory the valuations of the agents are considered to be arbitrary concave, non-decreasing and non-negative functions.
	Next we extend the upper bound of $2$ achieved by Trading Post to this most general setting. 
	The missing proofs are in the full version \cite{fullversion}.}

\begin{theorem}\label{thm:ub_tp_con}
	The Trading Post game with concave valuations has price of anarchy at most 2.
\end{theorem}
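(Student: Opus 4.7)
The plan is to mirror the argument of Theorem~\ref{thm:ub_tp_subs}, replacing the identity $u_i(\vec{x}'_i) = u_i(\optx_i)/\beta_i$ (which relied on additivity) with an inequality that follows from the concavity and non-negativity of $u_i$. Fix a pure Nash equilibrium $\eqbb$ with induced allocation $\eqx$ and prices $\eqp$, and let $\optx$ be an NSW-maximizing allocation. For each player $i$, I consider exactly the same unilateral deviation: withdraw $i$'s money (leaving $p'_j = \eqp_j - \eqbid_{i,j}$, which remains positive under perfect competition), and rebid $\bid'_{i,j}$ so as to realize a scaled copy of the optimum, $x'_{i,j} = x^*_{i,j}/\beta_i$, where $\beta_i$ is the smallest positive value for which the corresponding bids $\bid'_{i,j} = x^*_{i,j} p'_j / (\beta_i - x^*_{i,j})$ sum to at most $B_i$. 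By minimality the budget is exhausted at $\beta_i$, so the same algebra as before yields
\[
B_i \beta_i \;=\; \sum_{j=1}^{m} x^*_{i,j}\bigl(p'_j + \bid'_{i,j}\bigr) \;\leq\; B_i + \sum_{j=1}^{m} x^*_{i,j}\,\eqp_j,
\]
using $p'_j \leq \eqp_j$ and $x^*_{i,j} \leq 1$ exactly as in \eqref{ineq:price_change}.

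The only place linearity entered Theorem~\ref{thm:ub_tp_subs} is in converting the bound on $\beta_i$ into a bound on the ratio $\rho_i := u_i(\optx_i)/u_i(\eqx_i)$. I would split into two cases. If $\beta_i \geq 1$, then writing $\optx_i/\beta_i = (1/\beta_i)\optx_i + (1 - 1/\beta_i)\vec{0}$ and applying concavity together with $u_i(\vec{0}) \geq 0$ gives
\[
u_i(\vec{x}'_i) \;=\; u_i(\optx_i/\beta_i) \;\geq\; \tfrac{1}{\beta_i} u_i(\optx_i) + \bigl(1 - \tfrac{1}{\beta_i}\bigr) u_i(\vec{0}) \;\geq\; u_i(\optx_i)/\beta_i,
\]
so the Nash inequality $u_i(\eqx_i) \geq u_i(\vec{x}'_i)$ yields $\rho_i \leq \beta_i$ and hence $B_i \rho_i \leq B_i + \sum_j x^*_{i,j}\eqp_j$. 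If instead $\beta_i < 1$, then $\optx_i/\beta_i \geq \optx_i$ coordinatewise and monotonicity alone gives $u_i(\vec{x}'_i) \geq u_i(\optx_i)$, so $\rho_i \leq 1$ and $B_i \rho_i \leq B_i \leq B_i + \sum_j x^*_{i,j}\eqp_j$. The per-player inequality is therefore the same in both cases.

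Summing over all players and using $\sum_i x^*_{i,j} \leq 1$ together with $\sum_j \eqp_j = \totbudget$ yields $\sum_i B_i \rho_i \leq 2\totbudget$, and then Lemma~\ref{lem:budgets_bound} (weighted AM--GM) produces
\[
\left(\prod_{i=1}^{n} \rho_i^{B_i}\right)^{1/\totbudget} \;\leq\; \frac{\sum_i B_i \rho_i}{\totbudget} \;\leq\; 2,
\]
which is precisely the price-of-anarchy bound. The main subtlety, and the real departure from the additive proof, is the $\beta_i < 1$ branch: there $u_i$ no longer scales linearly with $\beta_i$, but monotonicity alone is strong enough because the bound $B_i \rho_i \leq B_i$ is already absorbed on the right-hand side. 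Feasibility of the deviation (well-definedness of $\bid'_{i,j}$ as $\beta_i$ approaches $\max_j x^*_{i,j}$, and positivity of $p'_j$ so that the Trading Post allocation formula makes sense) is inherited verbatim from Theorem~\ref{thm:ub_tp_subs} together with the perfect competition assumption.
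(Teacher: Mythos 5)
Your proposal is correct and follows essentially the same route as the paper's own proof: you replay the Trading Post deviation from the additive case, bound $\beta_i$ via the budget identity $B_i\beta_i=\sum_j x^*_{i,j}(p'_j+b'_{i,j})$, and convert $\beta_i$ to the utility ratio by splitting into $\beta_i\ge 1$ (concavity with $u_i(\vec 0)\ge 0$) and $\beta_i<1$ (monotonicity), exactly as the paper does with $\max\{1,\beta_i\}$. The only cosmetic difference is that you keep the $+\sum_j x^*_{i,j}\eqp_j$ term for the $\beta_i<1$ players when summing, whereas the paper drops it; both give $2\mathcal B$.
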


Moreover, the existence of pure Nash equilibria holds for all CES utilities.

\begin{theorem} \label{thm:existence_CES}
	The Trading Post game with no minimum bid has exact pure Nash equilibria for all CES utilities with perfect competition and $\rho \in (-\infty, 1]$.
\end{theorem}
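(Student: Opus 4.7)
The plan is to establish existence via a continuous-game fixed-point argument (in the spirit of the Glicksberg/Debreu-Fan-Glicksberg framework) after verifying continuity and quasi-concavity of the payoffs under perfect competition. Each player $i$'s strategy set is the budget simplex $\mathcal{S}_i = \{\vec{b}_i \in \mathbb{R}_{\ge 0}^m : \sum_j b_{ij} = B_i\}$, which is compact and convex, so the hard part is ensuring well-behaved payoffs.

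First, I would show that each $u_i(\vec{b}_i, \vec{b}_{-i})$ is quasi-concave in $\vec{b}_i$ for every $\rho \in (-\infty, 1]$ with $\rho \neq 0$. The map $b_{ij} \mapsto b_{ij}/(b_{ij} + \sum_{k\neq i} b_{kj})$ is concave whenever the denominator is positive, and CES is concave in the allocation $\vec{x}_i$ for $\rho \le 1$; composing with the monotone transformation $t \mapsto t^{1/\rho}$ preserves quasi-concavity in $\vec{b}_i$. Continuity is immediate in the interior of the joint strategy profile; for $\rho \in (0,1]$ continuity also extends to boundary profiles (since $0^{\rho} = 0$ for $\rho > 0$), while for $\rho < 0$ the utility tends continuously to $0$ as any coordinate of the demand set goes to zero. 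The only source of discontinuity is at profiles where the total bid on some good is zero.

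To handle this, I would employ a perturbation argument. For each $\Delta > 0$, consider the mechanism $\mathcal{TP}(\Delta)$ from the previous section. Repeating the Glicksberg-type argument of \thmref{thm:TP_pne_existence}, now with general CES rather than only Leontief utilities, yields a pure Nash equilibrium $\vec{b}^{\Delta}$ in $\mathcal{TP}(\Delta)$: payoffs are continuous on the truncated strategy space (no good can receive total spending below $\Delta$ at any action that is actually played), the strategy sets are compact convex, and quasi-concavity in own strategy gives convex best-response correspondences. Taking $\Delta \to 0$ along a subsequence and using compactness of $\prod_i \mathcal{S}_i$, extract a limit profile $\vec{b}^{*}$.

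The main obstacle is to show that $\vec{b}^{*}$ is a pure Nash equilibrium of the unrestricted game. This is where the perfect-competition assumption is crucial: for each good $j$, at least two players strictly demand it, and in any $\vec{b}^{\Delta}$ each such player must bid strictly positively on $j$ (otherwise it could profitably shift mass onto $j$ since its marginal CES value at zero consumption of a demanded good is large). Hence the total spending $\sum_k b^{\Delta}_{kj}$ is bounded below by a positive constant independent of $\Delta$, so $\vec{b}^{*}$ lies in the region where all $u_i$ are continuous, and deviations are continuous limits of deviations in the $\mathcal{TP}(\Delta)$ games. The Leontief endpoint $\rho \to -\infty$ is recovered separately through \thmref{thm:TP_char_exact}: under perfect competition, a standard argument shows all market-equilibrium prices are strictly positive, which that theorem turns into existence of a pure Nash equilibrium of Trading Post with no minimum bid.
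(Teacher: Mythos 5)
Your approach differs genuinely from the paper's. The paper applies Reny's 1999 existence theorem for discontinuous games directly, verifying that Trading Post is \emph{better-reply secure}: at any boundary profile in the closure of the payoff graph where some good receives no bid, perfect competition lets some demanding player bid a small $\epsilon$ on that good, capture the entire item, and thereby secure a payoff strictly above the limit payoff. You instead use the approximation method: extract pure equilibria $\vec{b}^{\Delta}$ of the perturbed games $\mathcal{TP}(\Delta)$ via a Glicksberg-type argument, pass to a subsequential limit $\vec{b}^{*}$ as $\Delta \to 0$, and argue the limit avoids the discontinuity set and is therefore an equilibrium of the unrestricted game. Both are standard ways to handle existence in discontinuous games, and the approximation route avoids verifying the somewhat delicate better-reply security condition in favor of a concrete compactness argument.

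However, the pivotal step in your argument is under-justified: you need that the total spending $D^{\Delta}_j = \sum_k b^{\Delta}_{kj}$ on each good $j$ is bounded below by a positive constant independent of $\Delta$. The reason you give --- that the ``marginal CES value at zero consumption of a demanded good is large'' --- is valid only for $\rho < 1$; for $\rho = 1$ (additive utilities) the marginal utility at zero is the finite constant $v_{ij}$. The argument that actually works uniformly for all $\rho \in (-\infty, 1]$ is different: if $D^{\Delta}_j \to 0$, then for one of the two competing buyers (the one receiving at most half of good $j$), the \emph{allocation} derivative $\partial x_{ij} / \partial b_{ij} = (D^{\Delta}_j - b^{\Delta}_{ij})/(D^{\Delta}_j)^2 \geq 1/(2D^{\Delta}_j)$ diverges, so a bid increase of order $D^{\Delta}_j$, costing $o(1)$ in other goods, yields a gain in allocation of $j$ of constant size, contradicting equilibrium. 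Without spelling this out, the limit profile $\vec{b}^{*}$ is not guaranteed to lie in the continuity region.

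Separately, your parenthetical treatment of the Leontief endpoint is incorrect: perfect competition does \emph{not} imply that Leontief market-equilibrium prices are strictly positive. The paper's own two-player, two-good example ($v_{1,1}=v_{1,2}=0.5$, $v_{2,1}=0.9$, $v_{2,2}=0.1$) satisfies perfect competition on both goods yet has $p_2 = 0$ at equilibrium, and correspondingly no pure Nash equilibrium in the no-minimum-bid Trading Post game. This error is harmless here because Leontief corresponds to $\rho \to -\infty$ rather than any finite $\rho \in (-\infty, 1]$, so it falls outside the theorem's scope, but you should drop that claim rather than present it as a recovery of the endpoint. You should also address $\rho = 0$ (Cobb--Douglas), which the paper treats separately via known results.
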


\subsection{Trading Post: Beyond Pure Nash Equilibria}

A natural remaining question is
whether there exist mixed Nash equilibria with bad Nash social welfare. In the case of Trading Post mechanism we rule out this case by showing that all its 
Nash equilibria are pure, be it for linear, Leontief, or concave valuations. 
Details of this result may be found in the full version \cite{fullversion}, where we
show the following theorem.

\begin{theorem}\label{thm:nomixed}
	For markets with linear utilities, every Nash equilibrium of the corresponding Trading Post game is pure. \rn{Further, the result extends to concave valuations if a mild condition of {\em enough competition} is satisfied .}
	For markets with Leontief utilities, every Nash equilibrium of the corresponding $\Delta$ Trading Post game $TP(\Delta)$, for
	$\Delta>0$, is pure.
\end{theorem}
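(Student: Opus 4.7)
The plan is to show that for each player $i$ and any mixed strategy profile $\sigma_{-i}$ of the remaining players, the expected utility
\[
U_i(\vec{b}_i) ~=~ \mathbb{E}_{\vec{b}_{-i} \sim \sigma_{-i}}\left[u_i\!\left(\vec{x}_i(\vec{b}_i, \vec{b}_{-i})\right)\right]
\]
is \emph{strictly} concave in $\vec{b}_i$ on (each convex piece of) its domain. Strict concavity on a convex set gives a unique maximizer, so every point in $\mathrm{supp}(\sigma_i)$ must coincide; hence $\sigma_i$ is a point mass and the equilibrium is pure. The main technical content is thus a strict-concavity lemma on the allocation map, plus a competition argument ensuring its hypotheses are met at equilibrium.

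For the linear case, first I would observe that $b_{i,j}/(b_{i,j}+s_{-i,j})$ is strictly concave in $b_{i,j}$ whenever $s_{-i,j}>0$ (second derivative $-2s_{-i,j}/(b_{i,j}+s_{-i,j})^3$), and constant equal to $1$ when $s_{-i,j}=0$ (for $b_{i,j}>0$). Hence
$U_i(\vec{b}_i) = \sum_j v_{i,j}\,\mathbb{E}_{\sigma_{-i}}[\,b_{i,j}/(b_{i,j}+s_{-i,j})\,]$
is a sum of coordinate functions that are strictly concave in $b_{i,j}$ provided $\Pr_{\sigma_{-i}}(s_{-i,j}>0)>0$. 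The second step is to argue that this probability is positive for every relevant item in any Nash equilibrium: if, under $\sigma_{-i}$, no opponent ever bids on some item $j$ with $v_{i,j}>0$, then $i$'s best-response correspondence on $j$ is empty (she wants $b_{i,j}\downarrow 0^+$ but not $0$), contradicting the existence of a best response within the support of $\sigma_i$. Since the sum of coordinate-strictly-concave functions is strictly concave on the budget simplex, $U_i$ has a unique maximizer and the mixed equilibrium collapses to a pure one.

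For general concave valuations under enough competition, I would use a composition lemma: if $u_i$ is concave, non-decreasing, and non-negative, and each coordinate $x_{i,j}(b_{i,j}, s_{-i,j})$ is concave in $b_{i,j}$, then $u_i(\vec{x}_i)$ is concave in $\vec{b}_i$ (by first applying coordinate concavity of $x_{i,j}$, then monotonicity and outer concavity of $u_i$); taking the $\sigma_{-i}$-expectation preserves concavity. To promote this to strict concavity, I would combine the strict concavity of each $x_{i,j}(\cdot, s_{-i,j})$ for $s_{-i,j}>0$ (guaranteed by enough competition) with the fact that $u_i$ is strictly increasing in coordinates used by $i$'s optimal bundle, mirroring the linear argument coordinate by coordinate.

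The Leontief case with $\Delta>0$ is the main obstacle, because the strategy set $\{\vec{b}_i : \sum_j b_{i,j}=B_i,\ b_{i,j}\in\{0\}\cup[\Delta,B_i]\}$ is a finite union of convex faces (one per support pattern $S_i\subseteq M$) and is itself non-convex, so global strict concavity is unavailable. My plan is threefold: (i) on each face (fixed support $S_i$), run the same coordinate-strict-concavity argument as above, using that $\Delta>0$ together with perfect competition forces $s_{-i,j}>0$ on the support at equilibrium; (ii) use the Leontief balance conditions $b_{i,j}/(v_{i,j}(b_{i,j}+s_{-i,j}))$ being equal across $j\in S_i$, together with $\sum_{j\in S_i}b_{i,j}=B_i$, to pin down the best response within each face; (iii) rule out mixing across faces by showing that any bid with $b_{i,j}=\Delta$ for some $j\in S_i$ can be strictly improved either by dropping $j$ from the support (if $u_i$ is insensitive to $j$ in expectation) or by increasing $b_{i,j}$ slightly (if it is), so only one support pattern can appear in any best response. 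The most delicate step will be (iii), where I need to rule out equalities that would allow the player to randomize over two distinct faces yielding the same expected utility; I expect to handle this via a perturbation/envelope argument showing that the optimal value as a function of the support is strict.
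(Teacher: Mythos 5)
Your treatment of the linear and general concave cases matches the paper's approach almost exactly: strict concavity of the expected payoff via negative second derivatives of each coordinate map $b_{i,j}/(b_{i,j}+s_{-i,j})$, a competition argument (paper: Lemma~\ref{lem:valid} / Theorem~\ref{thm:linPureNE}; you: the ``best response wants $b_{i,j}\downarrow 0^+$'' argument) to guarantee $s_{-i,j}>0$ with positive probability, and a composition lemma for the concave case. The only cosmetic difference is that the paper states the strict concavity in Hessian form and you in expectation form; these are equivalent.

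The Leontief case is where you genuinely diverge, and where your plan has a gap. The paper does not decompose the strategy set into faces. It works with the whole simplex and proves uniqueness of the best response directly (Lemma~\ref{lem:leonUniqueOpt}): suppose $\vec{s}_i\neq\vec{s}_i'$ are both optimal; pick a coordinate $j^*$ with $s_{i,j^*}>s'_{i,j^*}$ (and hence $s_{i,j^*}>\Delta$); either the $j^*$-component exceeds the min for all realized opponent profiles, contradicting optimality of $\vec{s}_i$ (money can be shifted off $j^*$), or the min is attained at $j^*$ for some realized profile, in which case $u_i^{\sigma_{-i}}$ is strictly concave at $\vec{s}_i$ along the segment to $\vec{s}_i'$ because the governing term $f_{j^*}$ is strictly concave in $s_{i,j^*}$ and that coordinate strictly varies along the segment. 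This ``strict concavity restricted to the critical direction'' argument sidesteps the non-strict-concavity of the global min entirely. Your step~(iii)---ruling out randomization across support patterns via a ``perturbation/envelope'' argument---is precisely the kind of indifference you would need to exclude, and you correctly flag it as the delicate step; as written it is a plan rather than a proof, and it is the missing piece. You also slightly misstate the setup: in $\mathcal{TP}(\Delta)$ the strategy set remains the full simplex; the threshold $\Delta$ lives in the allocation rule (sub-$\Delta$ bids become effective zero), not in the strategy constraint $b_{i,j}\in\{0\}\cup[\Delta,B_i]$. This matters because the paper's argument runs over the simplex and never commits to a fixed support face, which is exactly what lets it avoid your cross-face comparison.
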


Given such an efficiency achieved by Trading Post mechanism one may wonder if these mechanisms in-fact have unique equilibrium.
We rule this out through constructions for additive and Leontief valuations, given in the full version \cite{fullversion}. 

\section{Fairness Guarantees}
Finally, for concave utilities all the Nash equilibria of the two mechanisms satisfy an important notion of fairness: proportionality. That is, each player $i$ gets a fraction of at least $B_i/\mathcal{B}$ of its maximum utility (in any Nash equilibrium). For equal budgets, the guarantee is $1/n$.

\begin{theorem} \label{thm:TP_fairness}
	Every Nash equilibrium of the Fisher market mechanism with concave utilities, and of the Trading Post mechanism with concave and strictly increasing utilities 
	is proportional. Moreover, for every $0 < \Delta < 1/m$, every Nash equilibrium of the parameterized Trading Post game $TP(\Delta)$ with concave and strictly increasing utilities is approximately proportional, guaranteeing at least $\frac{B_i}{\mathcal{B}}\left(1 - \rho_i\right)$ of the optimum for each player $i$, where $\rho_i = \frac{\Delta \cdot(m-1)}{B_i}$, $B_i$ is the player's budget, $\mathcal{B}$ the sum of budgets, and $m$ the number of items.
\end{theorem}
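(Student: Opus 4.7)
The overarching strategy is, for each player $i$, to exhibit a unilateral deviation that secures $i$ a fraction $B_i/\mathcal{B}$ (or nearly so) of every good, and then convert that bundle guarantee into a utility lower bound using the structural properties of $u_i$. Since each $u_i$ is non-decreasing, non-negative and concave on $[0,1]^m$, for any $\lambda\in[0,1]$ we have
\[
u_i(\lambda\vec{1}) \;\geq\; \lambda u_i(\vec{1}) + (1-\lambda)u_i(\vec{0}) \;\geq\; \lambda u_i(\vec{1}),
\]
and monotonicity extends this to every bundle coordinate-wise dominating $\lambda\vec{1}$. Taking $\lambda=B_i/\mathcal{B}$ reduces each of the three parts of the theorem to producing the appropriate deviation.

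For the Fisher market mechanism with concave valuations, my plan is to have player $i$ deviate from its equilibrium report to the Leontief valuation $v'_{i,j}=1$ for every good $j$. Under this report $i$ strictly desires every good, so every good is fully sold at any market equilibrium $(\vec{x}',\vec{p}')$ of the new profile; summing the budget-clearing identities $\sum_j x'_{k,j} p'_j = B_k$ over all buyers yields $\sum_j p'_j = \mathcal{B}$. Since $i$'s reported utility $\min_j x'_{i,j}$ is maximized subject to $\sum_j x'_{i,j} p'_j\leq B_i$ at the uniform bundle, $i$ ends up with $x'_{i,j}=B_i/\mathcal{B}$ for every $j$; crucially, $i$'s allocation is the same in every market equilibrium of the new profile, so the mechanism's tie-breaking rule is immaterial. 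The concavity step then gives $u_i(\vec{x}'_i)\geq (B_i/\mathcal{B})u_i(\vec{1})$, and the Nash condition transfers the bound to $u_i(\tilde{\vec{x}}_i)$; the argument extends to mixed equilibria because the deviation's guarantee holds pointwise in the other players' strategies.

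For the Trading Post mechanism without an entrance fee, I would use the explicit proportional deviation $b'_{i,j}=\frac{B_i}{\mathcal{B}-B_i}\sum_{k\neq i}\tilde{b}_{k,j}$ (goods on which no other player bids can be captured entirely by $i$, which is only more favourable). The identity $\sum_j\sum_{k\neq i}\tilde{b}_{k,j}=\mathcal{B}-B_i$ immediately gives $\sum_j b'_{i,j}=B_i$ and $b'_{i,j}/(b'_{i,j}+\sum_{k\neq i}\tilde{b}_{k,j})=B_i/\mathcal{B}$ for every $j$, and concavity together with the Nash condition close out the proof.

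For the parameterized mechanism $\mathcal{TP}(\Delta)$, the same recipe must be adapted to respect the minimum-bid constraint $b'_{i,j}\in\{0\}\cup[\Delta,\infty)$. My plan is to use the perturbed deviation $b'_{i,j}=\Delta+\beta\sum_{k\neq i}\tilde{b}_{k,j}$ with $\beta$ chosen so that the total bid equals $B_i$, and to check that the induced fraction of each good is at least $(B_i/\mathcal{B})(1-\rho_i)$ with $\rho_i=\Delta(m-1)/B_i$; the concavity step then produces the claimed approximate proportionality. The main technical obstacle sits here: ensuring that the fee-adjusted deviation simultaneously respects $b'_{i,j}\geq\Delta$, exhausts $B_i$, and yields exactly the $(1-\rho_i)$ factor requires careful book-keeping, and the edge case where $\sum_{k\neq i}\tilde{b}_{k,j}$ is very small on some goods (so the proportional piece would naturally fall below $\Delta$) has to be handled separately by treating those goods as grabbed or abandoned.
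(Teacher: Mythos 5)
Your overall scheme---a unilateral deviation securing each player a near-uniform bundle, followed by concavity and the Nash condition---matches the paper's, but two of your three deviations differ from the paper's in interesting ways. For the Fisher mechanism the paper has player $i$ deviate to its \emph{true} valuation and argues that the resulting equilibrium bundle is utility-at-least-as-good as $\frac{B_i}{\mathcal{B}}\vec{1}$ because $i$ can afford that bundle (since $\sum_j p'_j = \mathcal{B}$ by budget exhaustion). Your deviation to a uniform Leontief report is also valid and arguably sharper: it pins $i$'s allocation down to be \emph{exactly} $\frac{B_i}{\mathcal{B}}\vec{1}$ in every equilibrium of the deviated profile, sidestepping the tie-breaking question, and the concavity step then runs on $i$'s true utility evaluated at this bundle. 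Both arguments rest on the same price-sum identity; which deviation one prefers is a matter of taste. For the no-fee Trading Post your deviation $b'_{i,j}=\frac{B_i}{\mathcal{B}-B_i}\sum_{k\neq i}\tilde{b}_{k,j}$ is exactly the paper's $y_j=\frac{B_i D_j}{\mathcal{B}-B_i}$. Your aside that goods with $D_j=0$ are ``captured entirely by $i$'' is slightly off, however---with $D_j=0$ your deviation bids $0$ on good $j$ and fetches nothing. The paper closes this by observing that with strictly increasing utilities $D_j>0$ at every PNE (a lone high bidder could halve its bid and gain), which removes the case entirely.

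For $\mathcal{TP}(\Delta)$ your proposed deviation $b'_{i,j}=\Delta+\beta D_j$ is genuinely different from the paper's explicit case split (bid $\Delta$ on goods where the proportional piece falls short, rescale on the rest), and the worry you flag at the end is unfounded: your formula handles everything uniformly with no edge cases. Setting $\beta=\frac{B_i-m\Delta}{\mathcal{B}-B_i}$ gives $\sum_j b'_{i,j}=m\Delta+\beta(\mathcal{B}-B_i)=B_i$, and since $\Delta<1/m\le B_i/m$ (using the paper's normalization $B_i\ge 1$) we have $\beta>0$, so $b'_{i,j}\ge\Delta$ coordinatewise automatically. Player $i$'s fraction of good $j$ is $\frac{\Delta+\beta D_j}{\Delta+(\beta+1)D_j}$, which is strictly decreasing in $D_j$ (its derivative has sign of $-\Delta$), so over the feasible range $D_j\le\mathcal{B}-B_i$ it is bounded below by $\frac{B_i-(m-1)\Delta}{\mathcal{B}-(m-1)\Delta}\ge\frac{B_i-(m-1)\Delta}{\mathcal{B}}=\frac{B_i}{\mathcal{B}}(1-\rho_i)$; this is in fact marginally tighter than what the theorem requires. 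So your single-formula deviation works as stated and is a modest simplification over the paper's two-case construction.
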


\section{Discussion}\label{app:disc}
Our results show great differences between the Fisher market mechanism and Trading Post, with Trading Post simultaneously achieving a Nash social welfare within a factor of two of the optimum and fairness (in the form of proportionality) for all concave utilities. Existence of pure Nash equilibria in Trading Post is guaranteed for all CES utilities.

Our most unexpected and important contribution is for Leontief utilities, where Trading Post approximates the NSW arbitrarily close, while still guaranteeing fair outcomes to each individual. This result may have implications beyond theory, since Leontief utilities have played a starring role in the emerging literature on allocating computer resources. One of the most influential papers in this literature \cite{GZH11} suggested that the NSW maximizing outcome would be the ideal solution, its only drawback being the fact that it cannot be reached in the presence of strategic participants.

An interesting open question is that of achieving good approximations of the NSW objective using truthful,
but non-wasteful, mechanisms. 

\section{Acknowledgements}
Simina was supported by ISF grant 1435/14 administered by the Israeli Academy of Sciences and Israel-USA 
Bi-national Science Foundation (BSF) grant  2014389, and the I-CORE Program of the Planning 
and Budgeting Committee and The Israel Science Foundation. 
This project has received funding from the European Research Council (ERC) under the European Union's
Horizon 2020 research and innovation programme (grant agreement No 740282).
Vasilis was supported by NSF grants 
CCF-1408635, CCF-1216073, and CCF-1161813. This work was done in part 
while the authors were research fellows at the Simons Institute for the Theory of Computing.


\bibliographystyle{ACM-Reference-Format}
\bibliography{RSV}


\begin{thebibliography}{00}


\ifx \showCODEN    \undefined \def \showCODEN     #1{\unskip}     \fi
\ifx \showDOI      \undefined \def \showDOI       #1{{\tt DOI:}\penalty0{#1}\ }
  \fi
\ifx \showISBNx    \undefined \def \showISBNx     #1{\unskip}     \fi
\ifx \showISBNxiii \undefined \def \showISBNxiii  #1{\unskip}     \fi
\ifx \showISSN     \undefined \def \showISSN      #1{\unskip}     \fi
\ifx \showLCCN     \undefined \def \showLCCN      #1{\unskip}     \fi
\ifx \shownote     \undefined \def \shownote      #1{#1}          \fi
\ifx \showarticletitle \undefined \def \showarticletitle #1{#1}   \fi
\ifx \showURL      \undefined \def \showURL       #1{#1}          \fi
\providecommand\bibfield[2]{#2}
\providecommand\bibinfo[2]{#2}
\providecommand\natexlab[1]{#1}
\providecommand\showeprint[2][]{arXiv:#2}

\bibitem[\protect\citeauthoryear{Adsul, Babu, Garg, Mehta, and Sohoni}{Adsul
  et~al\mbox{.}}{2010}]%
        {ABGMS}
\bibfield{author}{\bibinfo{person}{B. Adsul}, \bibinfo{person}{Ch.~Sobhan
  Babu}, \bibinfo{person}{J. Garg}, \bibinfo{person}{R. Mehta}, {and}
  \bibinfo{person}{M. Sohoni}.} \bibinfo{year}{2010}\natexlab{}.
\newblock \showarticletitle{{N}ash equilibria in {F}isher market}. In
  \bibinfo{booktitle}{{\em SAGT}}. \bibinfo{pages}{30--41}.
\newblock


\bibitem[\protect\citeauthoryear{{Anari}, {Mai}, {Oveis Gharan}, and
  {Vazirani}}{{Anari} et~al\mbox{.}}{2016}]%
        {Anari2}
\bibfield{author}{\bibinfo{person}{N. {Anari}}, \bibinfo{person}{T. {Mai}},
  \bibinfo{person}{S. {Oveis Gharan}}, {and} \bibinfo{person}{V.~V.
  {Vazirani}}.} \bibinfo{year}{2016}\natexlab{}.
\newblock \showarticletitle{{Nash Social Welfare for Indivisible Items under
  Separable, Piecewise-Linear Concave Utilities}}.
\newblock \bibinfo{journal}{{\em ArXiv e-prints\/}} (\bibinfo{date}{Dec.}
  \bibinfo{year}{2016}).
\newblock
\showeprint[arxiv]{cs.GT/1612.05191}


\bibitem[\protect\citeauthoryear{Anari, Oveis-Gharan, Saberi, and Singh}{Anari
  et~al\mbox{.}}{2017}]%
        {Anari1}
\bibfield{author}{\bibinfo{person}{N. Anari}, \bibinfo{person}{S.
  Oveis-Gharan}, \bibinfo{person}{A. Saberi}, {and} \bibinfo{person}{M.
  Singh}.} \bibinfo{year}{2017}\natexlab{}.
\newblock \showarticletitle{Nash social welfare, matrix permanent, and stable
  polynomials}. In \bibinfo{booktitle}{{\em ITCS}}.
\newblock


\bibitem[\protect\citeauthoryear{Arrow and Debreu}{Arrow and Debreu}{1954}]%
        {AD54}
\bibfield{author}{\bibinfo{person}{K.~J. Arrow} {and} \bibinfo{person}{G.
  Debreu}.} \bibinfo{year}{1954}\natexlab{}.
\newblock \showarticletitle{Existence of an Equilibrium for a Competitive
  Economy}.
\newblock \bibinfo{journal}{{\em Econometrica\/}} \bibinfo{number}{22}
  (\bibinfo{year}{1954}), \bibinfo{pages}{265--290}.
\newblock


\bibitem[\protect\citeauthoryear{Aziz and Mackenzie}{Aziz and
  Mackenzie}{2016}]%
        {AM16}
\bibfield{author}{\bibinfo{person}{H. Aziz} {and} \bibinfo{person}{S.
  Mackenzie}.} \bibinfo{year}{2016}\natexlab{}.
\newblock \showarticletitle{A Discrete and Bounded Envy-free Cake Cutting
  Protocol for Four Agents}. In \bibinfo{booktitle}{{\em STOC}}.
\newblock


\bibitem[\protect\citeauthoryear{Babaioff, Lucier, Nisan, and
  Paes~Leme}{Babaioff et~al\mbox{.}}{2014}]%
        {BBNR}
\bibfield{author}{\bibinfo{person}{M. Babaioff}, \bibinfo{person}{B. Lucier},
  \bibinfo{person}{N. Nisan}, {and} \bibinfo{person}{R. Paes~Leme}.}
  \bibinfo{year}{2014}\natexlab{}.
\newblock \showarticletitle{On the Efficiency of the {W}alrasian Mechanism}. In
  \bibinfo{booktitle}{{\em EC}}. \bibinfo{pages}{783--800}.
\newblock


\bibitem[\protect\citeauthoryear{Barbanel}{Barbanel}{2004}]%
        {Barbanel04}
\bibfield{author}{\bibinfo{person}{J.B. Barbanel}.}
  \bibinfo{year}{2004}\natexlab{}.
\newblock \bibinfo{booktitle}{{\em The Geometry of Efficient Fair Division}}.
\newblock \bibinfo{publisher}{Cambridge Univ. Press}.
\newblock


\bibitem[\protect\citeauthoryear{Bevia, Corch\'{o}n, and Wilkie}{Bevia
  et~al\mbox{.}}{2003}]%
        {BCW}
\bibfield{author}{\bibinfo{person}{C. Bevia}, \bibinfo{person}{L. Corch\'{o}n},
  {and} \bibinfo{person}{S. Wilkie}.} \bibinfo{year}{2003}\natexlab{}.
\newblock \showarticletitle{Implementation of the {W}alrasian correspondence by
  market games}.
\newblock \bibinfo{journal}{{\em Review of Economic Design\/}}
  \bibinfo{volume}{7} (\bibinfo{year}{2003}), \bibinfo{pages}{429--442}.
\newblock


\bibitem[\protect\citeauthoryear{Brainard and Scarf}{Brainard and
  Scarf}{2000}]%
        {BSAD}
\bibfield{author}{\bibinfo{person}{W.~C. Brainard} {and} \bibinfo{person}{H.~E.
  Scarf}.} \bibinfo{year}{2000}\natexlab{}.
\newblock \showarticletitle{How to compute equilibrium prices in 1891}.
\newblock \bibinfo{journal}{{\em Cowles Foundation Discussion Paper\/}}
  \bibinfo{volume}{1270} (\bibinfo{year}{2000}).
\newblock


\bibitem[\protect\citeauthoryear{Brams and Taylor}{Brams and Taylor}{1996}]%
        {BramsT96}
\bibfield{author}{\bibinfo{person}{S. Brams} {and} \bibinfo{person}{A.
  Taylor}.} \bibinfo{year}{1996}\natexlab{}.
\newblock \bibinfo{booktitle}{{\em Fair Division: from cake cutting to dispute
  resolution}}.
\newblock \bibinfo{publisher}{Cambridge University Press, Cambridge}.
\newblock


\bibitem[\protect\citeauthoryear{Brandt, Conitzer, Endriss, Lang, and
  Procaccia}{Brandt et~al\mbox{.}}{2016}]%
        {COMSOC}
\bibfield{author}{\bibinfo{person}{F. Brandt}, \bibinfo{person}{V. Conitzer},
  \bibinfo{person}{U. Endriss}, \bibinfo{person}{J. Lang}, {and}
  \bibinfo{person}{A.~D. Procaccia}.} \bibinfo{year}{2016}\natexlab{}.
\newblock \bibinfo{booktitle}{{\em Handbook of Computational Social Choice}}.
\newblock \bibinfo{publisher}{Cambridge University Press}.
\newblock


\bibitem[\protect\citeauthoryear{Br{\^{a}}nzei, Caragiannis, Kurokawa, and
  Procaccia}{Br{\^{a}}nzei et~al\mbox{.}}{2016}]%
        {BCKP16}
\bibfield{author}{\bibinfo{person}{S. Br{\^{a}}nzei}, \bibinfo{person}{I.
  Caragiannis}, \bibinfo{person}{D. Kurokawa}, {and} \bibinfo{person}{A.D.
  Procaccia}.} \bibinfo{year}{2016}\natexlab{}.
\newblock \showarticletitle{{An Algorithmic Framework for Strategic Fair
  Division}}. In \bibinfo{booktitle}{{\em AAAI}}.
\newblock


\bibitem[\protect\citeauthoryear{Br{\^a}nzei, Chen, Deng, Filos-Ratsikas,
  Frederiksen, and Zhang}{Br{\^a}nzei et~al\mbox{.}}{2014}]%
        {BCDFFZ14}
\bibfield{author}{\bibinfo{person}{S. Br{\^a}nzei}, \bibinfo{person}{Y. Chen},
  \bibinfo{person}{X. Deng}, \bibinfo{person}{A. Filos-Ratsikas},
  \bibinfo{person}{S. Frederiksen}, {and} \bibinfo{person}{J. Zhang}.}
  \bibinfo{year}{2014}\natexlab{}.
\newblock \showarticletitle{The {F}isher Market Game: Equilibrium and Welfare}.
  In \bibinfo{booktitle}{{\em AAAI}}. \bibinfo{pages}{587--593}.
\newblock


\bibitem[\protect\citeauthoryear{Branzei, Gkatzelis, and Mehta}{Branzei
  et~al\mbox{.}}{2017}]%
        {fullversion}
\bibfield{author}{\bibinfo{person}{S. Branzei}, \bibinfo{person}{V. Gkatzelis},
  {and} \bibinfo{person}{R. Mehta}.} \bibinfo{year}{2017}\natexlab{}.
\newblock \bibinfo{title}{Nash Social Welfare Approximation for Strategic
  Agents}.  (\bibinfo{year}{2017}).
\newblock
\newblock
\shownote{arxiv/abs/1607.01569.}


\bibitem[\protect\citeauthoryear{Br{\^{a}}nzei and Miltersen}{Br{\^{a}}nzei and
  Miltersen}{2015}]%
        {BM15}
\bibfield{author}{\bibinfo{person}{S. Br{\^{a}}nzei} {and}
  \bibinfo{person}{P.~B. Miltersen}.} \bibinfo{year}{2015}\natexlab{}.
\newblock \showarticletitle{A Dictatorship Theorem for Cake Cutting}. In
  \bibinfo{booktitle}{{\em IJCAI}}. \bibinfo{pages}{482--488}.
\newblock


\bibitem[\protect\citeauthoryear{Cambini and Martein}{Cambini and
  Martein}{2009}]%
        {CM09}
\bibfield{author}{\bibinfo{person}{A. Cambini} {and} \bibinfo{person}{L.
  Martein}.} \bibinfo{year}{2009}\natexlab{}.
\newblock \bibinfo{booktitle}{{\em {Generalized Convexity and Optimization:
  Theory and Applications}}}.
\newblock \bibinfo{publisher}{Springer}.
\newblock


\bibitem[\protect\citeauthoryear{Caragiannis, Kurokawa, Moulin, Procaccia,
  Shah, and Wang}{Caragiannis et~al\mbox{.}}{2016}]%
        {rCKMPSW16}
\bibfield{author}{\bibinfo{person}{I. Caragiannis}, \bibinfo{person}{D.
  Kurokawa}, \bibinfo{person}{H.~C. Moulin}, \bibinfo{person}{A.~D. Procaccia},
  \bibinfo{person}{N. Shah}, {and} \bibinfo{person}{J. Wang}.}
  \bibinfo{year}{2016}\natexlab{}.
\newblock \showarticletitle{The Unreasonable Fairness of Maximum Nash Welfare}.
  In \bibinfo{booktitle}{{\em EC}}. \bibinfo{pages}{305--322}.
\newblock


\bibitem[\protect\citeauthoryear{Chakrabarty, Chuzhoy, and Khanna}{Chakrabarty
  et~al\mbox{.}}{2009}]%
        {CCK09}
\bibfield{author}{\bibinfo{person}{D. Chakrabarty}, \bibinfo{person}{J.
  Chuzhoy}, {and} \bibinfo{person}{S. Khanna}.}
  \bibinfo{year}{2009}\natexlab{}.
\newblock \showarticletitle{On Allocating Goods to Maximize Fairness}. In
  \bibinfo{booktitle}{{\em FOCS}}. \bibinfo{pages}{107--116}.
\newblock


\bibitem[\protect\citeauthoryear{Chen, Deng, Sun, and Yao}{Chen
  et~al\mbox{.}}{2004}]%
        {CDSYao04}
\bibfield{author}{\bibinfo{person}{N. Chen}, \bibinfo{person}{Xiatie Deng},
  \bibinfo{person}{Xiaoming Sun}, {and} \bibinfo{person}{Andrew Yao}.}
  \bibinfo{year}{2004}\natexlab{}.
\newblock \showarticletitle{{{F}isher Equilibrium Price with a class of Concave
  Utility Functions}}. In \bibinfo{booktitle}{{\em ESA}}.
  \bibinfo{pages}{169--179}.
\newblock


\bibitem[\protect\citeauthoryear{Chen, Deng, Zhang, and Zhang}{Chen
  et~al\mbox{.}}{2012}]%
        {CDZZ12}
\bibfield{author}{\bibinfo{person}{N. Chen}, \bibinfo{person}{X. Deng},
  \bibinfo{person}{H. Zhang}, {and} \bibinfo{person}{J. Zhang}.}
  \bibinfo{year}{2012}\natexlab{}.
\newblock \showarticletitle{Incentive Ratios of {F}isher Markets}. In
  \bibinfo{booktitle}{{\em ICALP}}.
\newblock


\bibitem[\protect\citeauthoryear{Chen, Deng, and Zhang}{Chen
  et~al\mbox{.}}{2011}]%
        {CDZ11}
\bibfield{author}{\bibinfo{person}{N. Chen}, \bibinfo{person}{X. Deng}, {and}
  \bibinfo{person}{J. Zhang}.} \bibinfo{year}{2011}\natexlab{}.
\newblock \showarticletitle{How Profitable Are Strategic Behaviors in a
  Market?}. In \bibinfo{booktitle}{{\em ESA}}.
\newblock


\bibitem[\protect\citeauthoryear{Chen, Lai, Parkes, and Procaccia}{Chen
  et~al\mbox{.}}{2013}]%
        {CLPP10}
\bibfield{author}{\bibinfo{person}{Y. Chen}, \bibinfo{person}{J.K. Lai},
  \bibinfo{person}{D.C. Parkes}, {and} \bibinfo{person}{A.D. Procaccia}.}
  \bibinfo{year}{2013}\natexlab{}.
\newblock \showarticletitle{Truth, justice, and cake cutting}.
\newblock \bibinfo{journal}{{\em Games and Economic Behavior\/}}
  \bibinfo{volume}{77}, \bibinfo{number}{1} (\bibinfo{year}{2013}),
  \bibinfo{pages}{284--297}.
\newblock


\bibitem[\protect\citeauthoryear{Codenotti and Vardarajan}{Codenotti and
  Vardarajan}{2004}]%
        {BrunoVICALP04}
\bibfield{author}{\bibinfo{person}{B. Codenotti} {and} \bibinfo{person}{K.
  Vardarajan}.} \bibinfo{year}{2004}\natexlab{}.
\newblock \showarticletitle{{Equilbrium for Markets with with Leontief
  Utilities}}. In \bibinfo{booktitle}{{\em ICALP}}.
\newblock


\bibitem[\protect\citeauthoryear{Cole, Devanur, Gkatzelis, Jain, Mai, Vazirani,
  and Yazdanbod}{Cole et~al\mbox{.}}{2017}]%
        {CDGJMVY17}
\bibfield{author}{\bibinfo{person}{R. Cole}, \bibinfo{person}{N.R. Devanur},
  \bibinfo{person}{V. Gkatzelis}, \bibinfo{person}{K. Jain},
  \bibinfo{person}{T. Mai}, \bibinfo{person}{V.V. Vazirani}, {and}
  \bibinfo{person}{S. Yazdanbod}.} \bibinfo{year}{2017}\natexlab{}.
\newblock \showarticletitle{Convex Program Duality, Fisher Markets, and Nash
  Social Welfare}. In \bibinfo{booktitle}{{\em EC}}.
\newblock


\bibitem[\protect\citeauthoryear{Cole and Gkatzelis}{Cole and
  Gkatzelis}{2015}]%
        {CG15}
\bibfield{author}{\bibinfo{person}{R. Cole} {and} \bibinfo{person}{V.
  Gkatzelis}.} \bibinfo{year}{2015}\natexlab{}.
\newblock \showarticletitle{{Approximating the Nash Social Welfare with
  Indivisible Items}}. In \bibinfo{booktitle}{{\em STOC}}.
  \bibinfo{pages}{371--380}.
\newblock


\bibitem[\protect\citeauthoryear{Cole, Gkatzelis, and Goel}{Cole
  et~al\mbox{.}}{2013a}]%
        {CGG13a}
\bibfield{author}{\bibinfo{person}{R. Cole}, \bibinfo{person}{V. Gkatzelis},
  {and} \bibinfo{person}{G. Goel}.} \bibinfo{year}{2013}\natexlab{a}.
\newblock \showarticletitle{Mechanism design for fair division: allocating
  divisible items without payments}. In \bibinfo{booktitle}{{\em EC}}.
  \bibinfo{pages}{251--268}.
\newblock


\bibitem[\protect\citeauthoryear{Cole, Gkatzelis, and Goel}{Cole
  et~al\mbox{.}}{2013b}]%
        {CGG13b}
\bibfield{author}{\bibinfo{person}{R. Cole}, \bibinfo{person}{V. Gkatzelis},
  {and} \bibinfo{person}{G. Goel}.} \bibinfo{year}{2013}\natexlab{b}.
\newblock \showarticletitle{Positive results for mechanism design without
  money}. In \bibinfo{booktitle}{{\em AAMAS}}. \bibinfo{pages}{1165--1166}.
\newblock


\bibitem[\protect\citeauthoryear{Cole and Tao}{Cole and Tao}{2016}]%
        {CT}
\bibfield{author}{\bibinfo{person}{R. Cole} {and} \bibinfo{person}{Y. Tao}.}
  \bibinfo{year}{2016}\natexlab{}.
\newblock \showarticletitle{Large Market Games with Near Optimal Efficiency}.
  In \bibinfo{booktitle}{{\em Proceedings of the 2016 {ACM} Conference on
  Economics and Computation, {EC} 2016,}}. \bibinfo{pages}{791--808}.
\newblock


\bibitem[\protect\citeauthoryear{Dasgupta, Hammond, and Maskin}{Dasgupta
  et~al\mbox{.}}{1979}]%
        {DHM}
\bibfield{author}{\bibinfo{person}{P. Dasgupta}, \bibinfo{person}{P. Hammond},
  {and} \bibinfo{person}{E. Maskin}.} \bibinfo{year}{1979}\natexlab{}.
\newblock \showarticletitle{The implementation of social choice rules: Some
  general results on incentive compatibility}.
\newblock \bibinfo{journal}{{\em The Review of Economic Studies\/}}
  \bibinfo{volume}{46}, \bibinfo{number}{2} (\bibinfo{year}{1979}),
  \bibinfo{pages}{185--216}.
\newblock


\bibitem[\protect\citeauthoryear{Debreu}{Debreu}{1952}]%
        {Debreu52}
\bibfield{author}{\bibinfo{person}{G. Debreu}.}
  \bibinfo{year}{1952}\natexlab{}.
\newblock \showarticletitle{A Social Equilibrium Existence Theorem}.
\newblock \bibinfo{journal}{{\em PNAS\/}}  \bibinfo{volume}{38}
  (\bibinfo{year}{1952}), \bibinfo{pages}{886--893}.
\newblock


\bibitem[\protect\citeauthoryear{Devanur}{Devanur}{2009}]%
        {dev}
\bibfield{author}{\bibinfo{person}{N.R. Devanur}.}
  \bibinfo{year}{2009}\natexlab{}.
\newblock \bibinfo{title}{{F}isher markets and convex programs}.
  (\bibinfo{year}{2009}).
\newblock
\newblock
\shownote{Manuscript.}


\bibitem[\protect\citeauthoryear{Devanur, Papadimitriou, Saberi, and
  Vazirani}{Devanur et~al\mbox{.}}{2008}]%
        {DPSV}
\bibfield{author}{\bibinfo{person}{N. Devanur}, \bibinfo{person}{C.H.
  Papadimitriou}, \bibinfo{person}{A. Saberi}, {and} \bibinfo{person}{V.~V.
  Vazirani}.} \bibinfo{year}{2008}\natexlab{}.
\newblock \showarticletitle{Market equilibrium via a primal-dual algorithm for
  a convex program}.
\newblock \bibinfo{journal}{{\em JACM\/}} \bibinfo{volume}{55},
  \bibinfo{number}{5} (\bibinfo{year}{2008}).
\newblock


\bibitem[\protect\citeauthoryear{Dolev, Feitelson, Halpern, Kupferman, and
  Linial}{Dolev et~al\mbox{.}}{2012}]%
        {DFH12}
\bibfield{author}{\bibinfo{person}{D. Dolev}, \bibinfo{person}{D.G. Feitelson},
  \bibinfo{person}{J.Y. Halpern}, \bibinfo{person}{R. Kupferman}, {and}
  \bibinfo{person}{N. Linial}.} \bibinfo{year}{2012}\natexlab{}.
\newblock \showarticletitle{No justified complaints: on fair sharing of
  multiple resources}. In \bibinfo{booktitle}{{\em ITCS}}.
  \bibinfo{pages}{68--75}.
\newblock


\bibitem[\protect\citeauthoryear{Dubey and Geanakoplos}{Dubey and
  Geanakoplos}{2003}]%
        {DG}
\bibfield{author}{\bibinfo{person}{P. Dubey} {and} \bibinfo{person}{J.
  Geanakoplos}.} \bibinfo{year}{2003}\natexlab{}.
\newblock \showarticletitle{From {N}ash to {W}alras via {S}hapley-{S}hubik}.
\newblock \bibinfo{journal}{{\em Journal of Mathematical Economics\/}}
  \bibinfo{volume}{39}, \bibinfo{number}{5} (\bibinfo{year}{2003}),
  \bibinfo{pages}{391--400}.
\newblock


\bibitem[\protect\citeauthoryear{Dubey and Shubik}{Dubey and Shubik}{1978}]%
        {DS}
\bibfield{author}{\bibinfo{person}{P. Dubey} {and} \bibinfo{person}{M.
  Shubik}.} \bibinfo{year}{1978}\natexlab{}.
\newblock \showarticletitle{The noncooperative equilibria of a closed trading
  economy with market supply and bidding strategies}.
\newblock \bibinfo{journal}{{\em JET\/}}  \bibinfo{volume}{17}
  (\bibinfo{year}{1978}), \bibinfo{pages}{1--20}.
\newblock


\bibitem[\protect\citeauthoryear{Eaves}{Eaves}{1976}]%
        {Ea}
\bibfield{author}{\bibinfo{person}{B.C. Eaves}.}
  \bibinfo{year}{1976}\natexlab{}.
\newblock \showarticletitle{A finite algorithm for the linear exchange model}.
\newblock \bibinfo{journal}{{\em J. Math. Econ.\/}}  \bibinfo{volume}{3}
  (\bibinfo{year}{1976}), \bibinfo{pages}{197--203}.
\newblock


\bibitem[\protect\citeauthoryear{Eisenberg and Gale}{Eisenberg and
  Gale}{1959}]%
        {EG}
\bibfield{author}{\bibinfo{person}{E. Eisenberg} {and} \bibinfo{person}{D.
  Gale}.} \bibinfo{year}{1959}\natexlab{}.
\newblock \showarticletitle{Consensus of subjective probabilities: the
  {P}ari-{M}utuel method}.
\newblock \bibinfo{journal}{{\em The Annals of Mathematical Statistics\/}}
  \bibinfo{volume}{30} (\bibinfo{year}{1959}), \bibinfo{pages}{165--168}.
\newblock


\bibitem[\protect\citeauthoryear{Fang}{Fang}{2002}]%
        {Fang02}
\bibfield{author}{\bibinfo{person}{H. Fang}.} \bibinfo{year}{2002}\natexlab{}.
\newblock \showarticletitle{Lottery versus all-pay auction models of lobbying}.
\newblock \bibinfo{journal}{{\em Public Choice\/}} \bibinfo{volume}{112},
  \bibinfo{number}{3-4} (\bibinfo{year}{2002}), \bibinfo{pages}{351--71}.
\newblock


\bibitem[\protect\citeauthoryear{Feldman, Lai, and Zhang}{Feldman
  et~al\mbox{.}}{2009}]%
        {FLZ09}
\bibfield{author}{\bibinfo{person}{M. Feldman}, \bibinfo{person}{K. Lai}, {and}
  \bibinfo{person}{L. Zhang}.} \bibinfo{year}{2009}\natexlab{}.
\newblock \showarticletitle{The Proportional-Share Allocation Market for
  Computational Resources}.
\newblock \bibinfo{journal}{{\em ITPDS\/}} \bibinfo{volume}{20},
  \bibinfo{number}{8} (\bibinfo{year}{2009}).
\newblock


\bibitem[\protect\citeauthoryear{Gale}{Gale}{1960}]%
        {gale}
\bibfield{author}{\bibinfo{person}{D. Gale}.} \bibinfo{year}{1960}\natexlab{}.
\newblock \bibinfo{booktitle}{{\em Theory of Linear Economic Models}}.
\newblock \bibinfo{publisher}{McGraw Hill}, \bibinfo{address}{N.Y.}
\newblock


\bibitem[\protect\citeauthoryear{Gale}{Gale}{1976}]%
        {Ga76}
\bibfield{author}{\bibinfo{person}{D. Gale}.} \bibinfo{year}{1976}\natexlab{}.
\newblock \showarticletitle{The Linear exchange Model}.
\newblock \bibinfo{journal}{{\em J. Math. Econ.\/}}  \bibinfo{volume}{3}
  (\bibinfo{year}{1976}), \bibinfo{pages}{205--209}.
\newblock


\bibitem[\protect\citeauthoryear{Garg, Kapoor, and Vazirani}{Garg
  et~al\mbox{.}}{2004}]%
        {auction.gross}
\bibfield{author}{\bibinfo{person}{R. Garg}, \bibinfo{person}{S. Kapoor}, {and}
  \bibinfo{person}{V.~V. Vazirani}.} \bibinfo{year}{2004}\natexlab{}.
\newblock \showarticletitle{An Auction-Based Market Equilbrium Algorithm for
  the Separable Gross Substitutibility Case}. In \bibinfo{booktitle}{{\em
  APPROX}}.
\newblock


\bibitem[\protect\citeauthoryear{Georgiou, Pavlides, and Philippou}{Georgiou
  et~al\mbox{.}}{2006}]%
        {GPP06}
\bibfield{author}{\bibinfo{person}{C. Georgiou}, \bibinfo{person}{T. Pavlides},
  {and} \bibinfo{person}{A. Philippou}.} \bibinfo{year}{2006}\natexlab{}.
\newblock \showarticletitle{Network uncertainty in selfish routing}. In
  \bibinfo{booktitle}{{\em International Parallel and Distributed Processing
  Symposium}}.
\newblock


\bibitem[\protect\citeauthoryear{Ghodsi, Zaharia, Hindman, Konwinski, Shenker,
  and Stoica}{Ghodsi et~al\mbox{.}}{2011}]%
        {GZH11}
\bibfield{author}{\bibinfo{person}{A. Ghodsi}, \bibinfo{person}{M. Zaharia},
  \bibinfo{person}{B. Hindman}, \bibinfo{person}{A. Konwinski},
  \bibinfo{person}{S. Shenker}, {and} \bibinfo{person}{I. Stoica}.}
  \bibinfo{year}{2011}\natexlab{}.
\newblock \showarticletitle{Dominant resource fairness: fair allocation of
  multiple resource types}. In \bibinfo{booktitle}{{\em NSDI}}.
\newblock


\bibitem[\protect\citeauthoryear{Giraud}{Giraud}{2003}]%
        {GG}
\bibfield{author}{\bibinfo{person}{G. Giraud}.}
  \bibinfo{year}{2003}\natexlab{}.
\newblock \showarticletitle{Strategic market games: an introduction}.
\newblock \bibinfo{journal}{{\em Journal of Mathematical Economics\/}}
  \bibinfo{volume}{39} (\bibinfo{year}{2003}), \bibinfo{pages}{355--375}.
\newblock


\bibitem[\protect\citeauthoryear{Glicksberg}{Glicksberg}{1952}]%
        {Glicksberg52}
\bibfield{author}{\bibinfo{person}{I.~L. Glicksberg}.}
  \bibinfo{year}{1952}\natexlab{}.
\newblock \showarticletitle{A Further Generalization of the Kakutani
  Fixed-Point Theorem}.
\newblock \bibinfo{journal}{{\it Proc. Amer. Math. Soc.}}  \bibinfo{volume}{3}
  (\bibinfo{year}{1952}), \bibinfo{pages}{170--174}.
\newblock


\bibitem[\protect\citeauthoryear{Gutman and Nisan}{Gutman and Nisan}{2012}]%
        {GN12}
\bibfield{author}{\bibinfo{person}{A. Gutman} {and} \bibinfo{person}{N.
  Nisan}.} \bibinfo{year}{2012}\natexlab{}.
\newblock \showarticletitle{Fair allocation without trade}. In
  \bibinfo{booktitle}{{\em AAMAS}}. \bibinfo{pages}{719--728}.
\newblock


\bibitem[\protect\citeauthoryear{Jackson and Peck}{Jackson and Peck}{1999}]%
        {JP}
\bibfield{author}{\bibinfo{person}{M.~O. Jackson} {and} \bibinfo{person}{J.
  Peck}.} \bibinfo{year}{1999}\natexlab{}.
\newblock \showarticletitle{Asymmetric information in a competitive market
  game: Reexamining the implications of rational expectations}.
\newblock \bibinfo{journal}{{\em Econ. Theory\/}}  \bibinfo{volume}{13}
  (\bibinfo{year}{1999}), \bibinfo{pages}{603--628}.
\newblock


\bibitem[\protect\citeauthoryear{Jain, Vazirani, and Ye}{Jain
  et~al\mbox{.}}{2005}]%
        {homothetic}
\bibfield{author}{\bibinfo{person}{K. Jain}, \bibinfo{person}{V.~V. Vazirani},
  {and} \bibinfo{person}{Y. Ye}.} \bibinfo{year}{2005}\natexlab{}.
\newblock \showarticletitle{Market Equilibrium for Homothetic, Quasi-Concave
  Utilities and Economies of Scale in Production}. In \bibinfo{booktitle}{{\em
  SODA}}.
\newblock


\bibitem[\protect\citeauthoryear{Kaneko and Nakamura}{Kaneko and
  Nakamura}{1979}]%
        {KN79}
\bibfield{author}{\bibinfo{person}{M. Kaneko} {and} \bibinfo{person}{K.
  Nakamura}.} \bibinfo{year}{1979}\natexlab{}.
\newblock \showarticletitle{{The {N}ash Social Welfare Function}}.
\newblock \bibinfo{journal}{{\em Econometrica\/}} \bibinfo{volume}{47},
  \bibinfo{number}{2} (\bibinfo{year}{1979}), \bibinfo{pages}{423--435}.
\newblock


\bibitem[\protect\citeauthoryear{Korpeoglu and Spear}{Korpeoglu and
  Spear}{2015}]%
        {KS}
\bibfield{author}{\bibinfo{person}{C.~G. Korpeoglu} {and}
  \bibinfo{person}{S.~E. Spear}.} \bibinfo{year}{2015}\natexlab{}.
\newblock \bibinfo{title}{The Market Game with Production: Coordination
  Equilibrium and Price Stickiness}.  (\bibinfo{year}{2015}).
\newblock
\newblock
\shownote{CMU, Tepper School of Business.}


\bibitem[\protect\citeauthoryear{Matros}{Matros}{2007}]%
        {Matros07}
\bibfield{author}{\bibinfo{person}{A. Matros}.}
  \bibinfo{year}{2007}\natexlab{}.
\newblock \bibinfo{title}{Chinese auctions}.
\newblock   (\bibinfo{year}{2007}).
\newblock
\newblock
\shownote{Mimeo, University of Pittsburgh.}


\bibitem[\protect\citeauthoryear{Maya and Nisan}{Maya and Nisan}{2012}]%
        {MN12}
\bibfield{author}{\bibinfo{person}{A. Maya} {and} \bibinfo{person}{N. Nisan}.}
  \bibinfo{year}{2012}\natexlab{}.
\newblock \showarticletitle{Incentive Compatible Two Player Cake Cutting}. In
  \bibinfo{booktitle}{{\em WINE}}.
\newblock


\bibitem[\protect\citeauthoryear{Mertens and (eds)}{Mertens and (eds)}{2013}]%
        {MS}
\bibfield{author}{\bibinfo{person}{J.-F. Mertens} {and}
  \bibinfo{person}{S.~Sorin (eds)}.} \bibinfo{year}{2013}\natexlab{}.
\newblock \bibinfo{booktitle}{{\em Game-theoretic methods in general
  equilibrium analysis}}.
\newblock \bibinfo{publisher}{Springer Science \& Business Media}.
\newblock


\bibitem[\protect\citeauthoryear{Moldovanu and Sela}{Moldovanu and
  Sela}{2001}]%
        {Moldovanu01}
\bibfield{author}{\bibinfo{person}{B. Moldovanu} {and} \bibinfo{person}{A.
  Sela}.} \bibinfo{year}{2001}\natexlab{}.
\newblock \showarticletitle{The optimal allocation of prizes in contests}.
\newblock \bibinfo{journal}{{\em AER\/}} \bibinfo{volume}{91},
  \bibinfo{number}{3} (\bibinfo{year}{2001}), \bibinfo{pages}{542--558}.
\newblock


\bibitem[\protect\citeauthoryear{Mossel and Tamuz}{Mossel and Tamuz}{2010}]%
        {MosselT10}
\bibfield{author}{\bibinfo{person}{E. Mossel} {and} \bibinfo{person}{O.
  Tamuz}.} \bibinfo{year}{2010}\natexlab{}.
\newblock \showarticletitle{Truthful Fair Division}. In
  \bibinfo{booktitle}{{\em SAGT}}. \bibinfo{pages}{288--299}.
\newblock


\bibitem[\protect\citeauthoryear{Moulin}{Moulin}{2003}]%
        {Moulin03}
\bibfield{author}{\bibinfo{person}{H. Moulin}.}
  \bibinfo{year}{2003}\natexlab{}.
\newblock \bibinfo{booktitle}{{\em Fair Division and Collective Welfare}}.
\newblock \bibinfo{publisher}{The MIT Press}.
\newblock


\bibitem[\protect\citeauthoryear{Nakamura}{Nakamura}{1990}]%
        {SN}
\bibfield{author}{\bibinfo{person}{S. Nakamura}.}
  \bibinfo{year}{1990}\natexlab{}.
\newblock \showarticletitle{A feasible {N}ash implementation of {W}alrasian
  equilibria in the two-agent economy}.
\newblock \bibinfo{journal}{{\em Economics Letters\/}} \bibinfo{volume}{34},
  \bibinfo{number}{1} (\bibinfo{year}{1990}), \bibinfo{pages}{5--9}.
\newblock


\bibitem[\protect\citeauthoryear{Nash}{Nash}{1950}]%
        {Nash50}
\bibfield{author}{\bibinfo{person}{J. Nash}.} \bibinfo{year}{1950}\natexlab{}.
\newblock \showarticletitle{The Bargaining Problem}.
\newblock \bibinfo{journal}{{\em Econometrica\/}} \bibinfo{volume}{18},
  \bibinfo{number}{2} (\bibinfo{date}{April} \bibinfo{year}{1950}),
  \bibinfo{pages}{155--162}.
\newblock


\bibitem[\protect\citeauthoryear{Nisan, Roughgarden, Tardos, and
  Vazirani}{Nisan et~al\mbox{.}}{2007}]%
        {NRTV07}
\bibfield{author}{\bibinfo{person}{N. Nisan}, \bibinfo{person}{T. Roughgarden},
  \bibinfo{person}{E. Tardos}, {and} \bibinfo{person}{V. Vazirani}.}
  \bibinfo{year}{2007}\natexlab{}.
\newblock \bibinfo{booktitle}{{\em Algorithmic Game Theory}}.
\newblock \bibinfo{publisher}{Cambridge University Press}.
\newblock


\bibitem[\protect\citeauthoryear{Orlin}{Orlin}{2010}]%
        {orlin}
\bibfield{author}{\bibinfo{person}{J.~B. Orlin}.}
  \bibinfo{year}{2010}\natexlab{}.
\newblock \showarticletitle{Improved algorithms for computing {F}isher's market
  clearing prices}. In \bibinfo{booktitle}{{\em STOC}}.
  \bibinfo{pages}{291--300}.
\newblock


\bibitem[\protect\citeauthoryear{Parkes, Procaccia, and Shah}{Parkes
  et~al\mbox{.}}{2012}]%
        {PPS12}
\bibfield{author}{\bibinfo{person}{D.C. Parkes}, \bibinfo{person}{A.D.
  Procaccia}, {and} \bibinfo{person}{N. Shah}.}
  \bibinfo{year}{2012}\natexlab{}.
\newblock \showarticletitle{Beyond dominant resource fairness: extensions,
  limitations, and indivisibilities}. In \bibinfo{booktitle}{{\em EC}}.
  \bibinfo{pages}{808--825}.
\newblock


\bibitem[\protect\citeauthoryear{Postlewaite and Schmeidler}{Postlewaite and
  Schmeidler}{1986}]%
        {PS}
\bibfield{author}{\bibinfo{person}{A. Postlewaite} {and} \bibinfo{person}{D.
  Schmeidler}.} \bibinfo{year}{1986}\natexlab{}.
\newblock \showarticletitle{Implementation in Differential Information
  Economies}.
\newblock \bibinfo{journal}{{\em JET\/}}  \bibinfo{volume}{39}
  (\bibinfo{year}{1986}), \bibinfo{pages}{14--33}.
\newblock


\bibitem[\protect\citeauthoryear{Procaccia}{Procaccia}{2013}]%
        {Pro13}
\bibfield{author}{\bibinfo{person}{A.D. Procaccia}.}
  \bibinfo{year}{2013}\natexlab{}.
\newblock \showarticletitle{Cake Cutting: Not Just Child's Play}.
\newblock \bibinfo{journal}{{\it Commun. ACM}} \bibinfo{volume}{56},
  \bibinfo{number}{7} (\bibinfo{year}{2013}), \bibinfo{pages}{78--87}.
\newblock


\bibitem[\protect\citeauthoryear{Procaccia and Wang}{Procaccia and
  Wang}{2014}]%
        {PW14}
\bibfield{author}{\bibinfo{person}{A.D. Procaccia} {and}
  \bibinfo{person}{Junxing Wang}.} \bibinfo{year}{2014}\natexlab{}.
\newblock \showarticletitle{Fair enough: guaranteeing approximate maximin
  shares}. In \bibinfo{booktitle}{{\em EC}}. \bibinfo{pages}{675--692}.
\newblock


\bibitem[\protect\citeauthoryear{Reny}{Reny}{1999}]%
        {Reny99}
\bibfield{author}{\bibinfo{person}{P.J. Reny}.}
  \bibinfo{year}{1999}\natexlab{}.
\newblock \showarticletitle{On the Existence of Pure and Mixed Strategy Nash
  Equilibria in Discontinuous Games}.
\newblock \bibinfo{journal}{{\em Econometrica\/}} \bibinfo{volume}{67},
  \bibinfo{number}{5} (\bibinfo{year}{1999}), \bibinfo{pages}{1029--1056}.
\newblock


\bibitem[\protect\citeauthoryear{Robertson and Webb}{Robertson and
  Webb}{1998}]%
        {RobertsonWebb98}
\bibfield{author}{\bibinfo{person}{J.M. Robertson} {and} \bibinfo{person}{W.A.
  Webb}.} \bibinfo{year}{1998}\natexlab{}.
\newblock \bibinfo{booktitle}{{\em Cake-cutting algorithms - be fair if you
  can}}.
\newblock \bibinfo{publisher}{A K Peters}. I--X, 1--181 pages.
\newblock
\showISBNx{978-1-56881-076-8}


\bibitem[\protect\citeauthoryear{Shapley and Shubik}{Shapley and
  Shubik}{1977}]%
        {ss-tp}
\bibfield{author}{\bibinfo{person}{L. Shapley} {and} \bibinfo{person}{M.
  Shubik}.} \bibinfo{year}{1977}\natexlab{}.
\newblock \showarticletitle{Trade using one commodity as a means of payment}.
\newblock \bibinfo{journal}{{\em Journal of Political Economy\/}}
  \bibinfo{volume}{85(5)} (\bibinfo{year}{1977}), \bibinfo{pages}{937--968}.
\newblock


\bibitem[\protect\citeauthoryear{Tullock}{Tullock}{1980}]%
        {Tullock80}
\bibfield{author}{\bibinfo{person}{G. Tullock}.}
  \bibinfo{year}{1980}\natexlab{}.
\newblock \showarticletitle{Efficient rent-seeking}.
\newblock In \bibinfo{booktitle}{{\em Toward a theory of the rent-seeking
  society}}, \bibfield{editor}{\bibinfo{person}{G.~Tullock J.M.~Buchanan,
  R.D.~Tollison}} (Ed.). \bibinfo{publisher}{Texas A. \& M, University Press}.
\newblock


\bibitem[\protect\citeauthoryear{Varian}{Varian}{1974}]%
        {Varian74}
\bibfield{author}{\bibinfo{person}{H.R. Varian}.}
  \bibinfo{year}{1974}\natexlab{}.
\newblock \showarticletitle{Equity, envy, and efficiency}.
\newblock \bibinfo{journal}{{\em Journal of Economic Theory\/}}
  \bibinfo{volume}{9}, \bibinfo{number}{1} (\bibinfo{year}{1974}),
  \bibinfo{pages}{63--91}.
\newblock


\bibitem[\protect\citeauthoryear{Young}{Young}{1995}]%
        {Young94}
\bibfield{author}{\bibinfo{person}{H.P. Young}.}
  \bibinfo{year}{1995}\natexlab{}.
\newblock \bibinfo{booktitle}{{\em Equity}}.
\newblock \bibinfo{publisher}{Princeton University Press}.
\newblock


\end{thebibliography}


\appendix

\section{Fisher Market: Perfect Complements}
\textsc{Theorem} \ref{thm:Fisher_Leontief_poa} (restated): 
\emph{
	The Fisher Market Mechanism with Leontief valuations has a price of anarchy of $n$ and the bound is tight.} 

\medskip
\begin{proof}
	Our tool is the following theorem, which states that Fisher markets with Leontief utilities have Nash equilibria 
	where players copy each others strategies.
	\begin{lemma}[\cite{BCDFFZ14}] \label{lem:uniform_leontief}
		The Fisher Market mechanism with Leontief preferences always has a Nash equilibrium
		where every buyer reports the uniform valuation $(1/m, \ldots, 1/m)$.
	\end{lemma}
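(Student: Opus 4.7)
The plan is to verify directly that when every buyer reports $(1/m, \ldots, 1/m)$, the induced Fisher outcome is a Nash equilibrium for the true Leontief preferences. The proof breaks into three steps: compute the outcome under uniform reports, bound what any deviator can achieve, and compare the two.

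\emph{Step 1: Compute the uniform-report equilibrium.} When all agents report identical Leontief coefficients $1/m$, the EG program in \eqref{eq:eg} is symmetric across the $m$ goods, so its unique optimum sets $p_j=\mathcal{B}/m$ for every good and allocates $x_{i,j}=B_i/\mathcal{B}$ to each agent $i$ on each good $j$. Under agent $i$'s \emph{true} Leontief utility, this yields $u_i=(B_i/\mathcal{B})/\max_j v_{i,j}$.

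\emph{Step 2: Bound the deviator's allocation.} Suppose agent $i$ unilaterally switches to some report $\vec{v}'_i$, and let $(\vec{p}',\vec{x}')$ be any market equilibrium of the resulting Fisher instance. Each $k\neq i$ still reports uniformly, so its Leontief demand requires $x'_{k,j}$ to be equal across all goods; together with budget exhaustion this forces $x'_{k,j}=B_k/P_+$ for every $j$, where $P_+=\sum_{j:\,p'_j>0}p'_j$. Summing over $k\neq i$, the other agents absorb exactly $(\mathcal{B}-B_i)/P_+$ of every good. Market clearing then forces $x'_{i,j}=1-(\mathcal{B}-B_i)/P_+$ on every good with $p'_j>0$ (and at most that amount on zero-priced goods). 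Plugging these values into $i$'s own budget equation $\sum_j p'_j x'_{i,j}=B_i$ and simplifying gives $P_+=\mathcal{B}$, and hence $x'_{i,j}\le B_i/\mathcal{B}$ uniformly over all $j$.

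\emph{Step 3: Conclude.} Since $x'_{i,j}\le B_i/\mathcal{B}$ for every $j$, agent $i$'s true Leontief utility after deviating satisfies $\min_{j:\,v_{i,j}>0} x'_{i,j}/v_{i,j} \le (B_i/\mathcal{B})/\max_j v_{i,j}$, which exactly matches the utility enjoyed under the uniform profile. Hence no deviation is strictly profitable, and the uniform report vector constitutes a pure Nash equilibrium.

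The main obstacles I anticipate are (i) deviations that make some equilibrium prices zero, so that market clearing becomes an inequality rather than an equality, and (ii) the mechanism's tie-breaking rule, which may select an equilibrium favorable to the deviator. Both are addressed by the pointwise bound in Step 2: since $x'_{i,j}\le B_i/\mathcal{B}$ holds in \emph{every} market equilibrium of the deviated profile---including on zero-priced goods and regardless of selection---no equilibrium the mechanism picks can give the deviator more than its uniform-profile utility.
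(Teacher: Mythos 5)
Your proposal is correct, and it is worth noting that the paper does not actually prove this statement at all: Lemma~\ref{lem:uniform_leontief} is imported from \cite{BCDFFZ14} as a black box, so your argument is a self-contained verification rather than an alternative to a proof given in the text. The verification itself is sound, and its key strength is exactly the one you flag at the end: because you pin down the allocation of every non-deviating agent using only $\leon_1$ (uniform Leontief reports force $x'_{k,j}=B_k/P_+$ on all positively priced goods), market clearing, and budget exhaustion, the bound $x'_{i,j}\le B_i/\mathcal{B}$ holds in \emph{every} market equilibrium of the deviated profile, so the mechanism's tie-breaking rule and zero-priced goods cause no trouble. Two small points deserve a sentence each in a polished write-up. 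First, in Step~1 you do not need (and do not quite have) uniqueness of the equilibrium \emph{prices} under uniform reports---any price vector summing to $\mathcal{B}$ works---but the allocation is pinned to $x_{i,j}=B_i/\mathcal{B}$ by $\leon_1$ and budget exhaustion, which is all you use. Second, your claim that each $k\neq i$ receives \emph{equal} amounts of all goods is exact under the paper's KKT characterization $\leon_1$; if one instead works from the ``optimal bundle'' definition, an equilibrium could hand agent $k$ extra amounts of zero-priced goods, but this only reduces what is left for the deviator and so only strengthens your inequality (similarly, the degenerate all-zero deviation gives the deviator utility $0$ and is trivially unprofitable). With those remarks added, the argument is complete.
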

	For completeness, we include the worst case example. Consider an instance with $n$ players of equal budgets ($\budget_i = 1$) and $n$ items, where each player $i$ likes item $i$ and nothing else; that is, $v_{i,i} = 1$, for all $i \in N$ and $v_{i,j} = 0, \forall i \in N, \forall j \neq i$. Then the optimal Nash Social Welfare is obtained in the Fisher market equilibrium, where the price of each item $j$ is $p_j = 1$ and the allocation is $x_{i,i} = 1, \forall i \in N$ and $x_{i,j}=0, \forall i \in N, \forall j \neq i$. 
	
	However, the strategy profile $\vec{y} = (\vec{y_1}, \ldots, \vec{y}_n)$, where $\vec{y}_i = (1/n, \ldots, 1/n)$ is a Nash equilibrium in which each player $i$ gets a fraction of $1/n$ from every item, yielding utility $1/n$ for every player. It follows that the price of anarchy is $n$.
	
	For a general upper bound, we note that any Nash equilibrium must be proportional, i.e. each player $i$ gets a fraction of at least $\budget_i/\totbudget$ of its best possible utility, $\mbox{OPT}_i$. Let $(\vec{p}, \vec{x})$ be a Nash equilibrium of the market, achieved under some reports $\vec{v}'$. Suppose for a contradiction that there exists player $i$ with $u_i(\vec{x}_i) < \budget_i/\totbudget \cdot \mbox{OPT}_i$. Then if $i$ reported instead its true valuation $\vec{v}_i$, the new market equilibrium, $(\vec{p}', \vec{x}')$, achieved under valuations $(\vec{v}_i, \vec{v}'_{-i})$, should satisfy the 
	inequality $u_i(\vec{x}'_{i}) \geq \budget_i/\totbudget \cdot \mbox{OPT}_i$. 
	If this were not the case, the outcome would not be a market 
	$(\vec{v}_i, \vec{v}_{-i}')$, since $i$ can afford the bundle $\vec{y} = (B_1/\totbudget, \ldots, B_m/\totbudget)$:
	\begin{equation} \label{eq:afford_weighted}
	\vec{p}(\vec{y}) = \sum_{j=1}^{m} p_j \cdot \frac{\budget_i}{\totbudget} = \frac{\budget_i}{\totbudget} \cdot \sum_{j=1}^{m} p_j = \budget_i,
	\end{equation}
	Moreover, $u_i(\vec{y}) = \budget_i/\totbudget \cdot \mbox{OPT}_i$, which together with Identity \ref{eq:afford_weighted} contradicts the market equilibrium property of $(\vec{p}', \vec{x}')$. 
	Thus in any Nash equilibrium $(\vec{p}, \vec{x})$ we have that $u_i(\vec{x}_i) \geq \budget_i/\totbudget \cdot \mbox{OPT}_i$, and so the Nash Social Welfare is $\mbox{NSW}(\vec{x}) \geq \prod_{i=1}^{n} \left(\frac{\budget_i}{\totbudget} \cdot \mbox{OPT}_i\right)^{\frac{\budget_i}{\totbudget}}$. Then the price of anarchy can be bounded as follows:
	\[
	\mbox{PoA} \leq \prod_{i=1}^{n} \frac{\mbox{OPT}_i^{\budget_i/\totbudget}}{\left( \frac{\budget_i}{\totbudget} \cdot \mbox{OPT}_i \right)^{\budget_i/\totbudget}} \leq \prod_{i=1}^{n} \left( \frac{\totbudget}{\budget_i}^{\budget_i/\totbudget} \right) \leq \frac{ \sum_{i=1}^{n} \budget_i \cdot \left( \frac{\totbudget}{\budget_i}\right)}{\totbudget} = n,
	\]
	where for the last inequality we used the fact that the weighted geometric mean is bounded by the weighted arithmetic mean (Lemma \ref{lem:budgets_bound}).
	This completes the proof of the theorem.
\end{proof}

\section{Trading Post: Perfect Complements}\label{app:tpc}

\textsc{Theorem \ref{thm:TP_char_exact}} (restated) : \emph{The Trading Post mechanism with Leontief utilities has exact pure Nash equilibria if and only if the corresponding Fisher market has market equilibrium prices that are strictly positive everywhere. When this happens, the Nash equilibrium utilities in Trading Post
	are unique and the price of anarchy is $1$.}
\begin{proof}
	Suppose $(\vec{v}, \vec{B})$ are valuations and budgets for which the Fisher market equilibrium prices are strictly positive everywhere (recall also that the valuations satisfy perfect competition). Then we show that some pure Nash equilibrium exists in the Trading Post game with the same valuations and budgets, and which induces the same allocation.
	
	Let $(\vec{p}, \vec{x})$ be the
	Fisher market equilibrium prices and allocation. Define matrix of bids $\vec{b}$ by $b_{i,j} = p_j \cdot x_{i,j}$, for all $i \in N, j \in M$. We claim that $\vec{b}$ is a pure Nash equilibrium in the Trading Post game. From the conditions in the theorem statement, we have that for each item $j$, there are two players $i \neq i'$ such that $b_{i,j} \cdot b_{i',j} > 0$. 
	Also the utility of each player $i$ in the market equilibrium is the same as that of strategy profile $\vec{b}$ and can be written as follows:
	\[
	u_i(\vec{x}_i) = \min_{j \in M: v_{i,j} > 0} \left\{ \frac{x_{i,j}}{v_{i,j}} \right\}
	\]
	By definition of the market equilibrium, each player $i$ gets each item in its demand set in the same fraction $f_i$, that is
	\[
	f_i = \frac{x_{i,j}}{v_{i,j}}, \mbox{for } j \in M \mbox{ with } v_{i,j} > 0
	\]
	Then player $i$ can only improve its utility by taking weight from some item(s) and shifting it towards others in its demand. However,
	since all the items are received in the same fraction $f_i$, it follows that player $i$ can only decrease its utility by such
	deviations. Thus the profile $\vec{b}$ is a pure Nash equilibrium of Trading Post.
	
	For the other direction, if a bid profile $\vec{b}$ is a pure Nash equilibrium in the Trading Post game, then consider the market allocation and prices $(\vec{x}, \vec{p})$, where $p_j = \sum_{k=1}^{n} b_{k,j}$ and the induced allocation $\vec{x}$ be given by $x_{i,j} = \frac{b_{i,j}}{\sum_{k=1}^{n} b_{k,j}}$. From the perfect competition requirement, clearly $x_{i,j}$ is always well defined. Then at $(\vec{x}, \vec{p})$ all the goods are allocated, all the money is spent, and each player gets an optimal bundle from its desired goods. To see the latter, note again that a player receives all the items in the same fractions at $(\vec{x}, \vec{p})$ and since all the prices are positive (since on each item there are at least two non-zero bids), then a player cannot decrease its spending on any item(s). Thus $(\vec{x}, \vec{p})$ is a market equilibrium and since the market equilibrium utilities are unique, it follows that the same is true for the PNE of Trading Post.
	
	From the correspondence between the Nash equilibria of Trading Post and the market equilibria of the Fisher mechanism, we obtain that on such instances the price of anarchy is $1$.
\end{proof}

\textsc{Theorem \ref{thm:TP_pne_existence}} (restated) : \emph{The parameterized Trading Post mechanism $\mathcal{TP}(\Delta)$ is guaranteed to have a pure Nash equilibrium for every $\Delta > 0$.}
\begin{proof}
	Let $\mathcal{TP}(\Delta)$ be the Trading Post game with minimum fee $\Delta$.
	We first show that a variant of the game, $\mathcal{TP}'(\Delta)$, where the strategy space of each player $i$ is restricted as follows, must have a pure Nash equilibrium.
	\begin{itemize}
		\item $i$ is forced to bid at least $\Delta$ on every item $j$ with the property that $v_{i,j} > 0$
		\item $i$ must bid zero on every item $j$ for which $v_{i,j} = 0$.
	\end{itemize}
	
	Clearly, the strategy space $S_i$ of each player $i$ is a nonempty compact convex subset of a Euclidean space. Moreover, the utility function of each player is continuous in $\vec{x}$ and quasi-concave in the player's own strategy (see, e.g. \cite{CM09}, chapter 2).
	
	We use the following theorem due to Debreu, Glicksberg, and Fan.
	
	\begin{lemma} (Debreu~\cite{Debreu52}; Glicksberg~\cite{Glicksberg52}; Fan 1952) \label{thm:Glicksberg}
		Consider a strategic form game whose strategy spaces $S_i$ are nonempty compact convex subsets of a Euclidean space. If the payoff functions $u_i$ are continuous in $s$ and quasi-concave in $s_i$, then there exists a pure strategy Nash equilibrium.
	\end{lemma}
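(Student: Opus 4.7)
The plan is to prove this classical theorem (often called the Debreu--Glicksberg--Fan theorem) via Kakutani's fixed point theorem applied to the best-response correspondence. Let $S = \prod_{i} S_i$, which is itself nonempty, compact, and convex as a finite product of nonempty compact convex subsets of Euclidean space. Define the best-response correspondence $\varphi \colon S \rightrightarrows S$ by
\[
\varphi(s) = \prod_{i=1}^{n} B_i(s_{-i}), \qquad B_i(s_{-i}) = \arg\max_{t_i \in S_i} u_i(t_i, s_{-i}).
\]
Observe that a fixed point $s^{*} \in \varphi(s^{*})$ is exactly a pure-strategy Nash equilibrium, since then each coordinate $s^{*}_i$ maximizes $u_i(\cdot, s^{*}_{-i})$ over $S_i$. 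So it suffices to produce such a fixed point.

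The next step is to verify the hypotheses of Kakutani's theorem for $\varphi$. First, $\varphi(s)$ is nonempty for every $s$: for each $i$, the map $t_i \mapsto u_i(t_i, s_{-i})$ is continuous on the compact set $S_i$, so by the Weierstrass extreme value theorem it attains a maximum. Second, $\varphi(s)$ is convex for every $s$: $B_i(s_{-i})$ is the set of maximizers of the function $t_i \mapsto u_i(t_i, s_{-i})$ on the convex set $S_i$; this function is quasi-concave in $t_i$ by hypothesis, and the set of maximizers of a quasi-concave function on a convex set is convex (it is an upper level set of a quasi-concave function). A finite product of convex sets is convex, so $\varphi(s)$ is convex.

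The main step, and the one that requires the joint continuity hypothesis, is to show that $\varphi$ has a closed graph, or equivalently is upper hemicontinuous with compact values. This is exactly Berge's Maximum Theorem applied to each $B_i$: since $u_i$ is continuous on $S$ and the feasible-strategy correspondence $s_{-i} \mapsto S_i$ is a constant (hence trivially continuous) correspondence with compact values, the argmax correspondence $B_i$ is upper hemicontinuous with nonempty compact values. A finite product of upper hemicontinuous correspondences with compact values is upper hemicontinuous with compact values, so $\varphi$ has a closed graph. I would either cite Berge or verify closedness directly: take sequences $s^{(k)} \to s$ and $t^{(k)} \in \varphi(s^{(k)})$ with $t^{(k)} \to t$; for each $i$ and any $r_i \in S_i$, the inequality $u_i(t_i^{(k)}, s_{-i}^{(k)}) \geq u_i(r_i, s_{-i}^{(k)})$ passes to the limit by continuity of $u_i$, yielding $t_i \in B_i(s_{-i})$, hence $t \in \varphi(s)$.

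With all hypotheses of Kakutani's fixed point theorem verified, conclude that $\varphi$ admits a fixed point $s^{*} \in \varphi(s^{*})$, which is a pure-strategy Nash equilibrium, completing the proof. The main obstacle is the closed-graph verification, which is where the joint (rather than coordinate-wise) continuity of the $u_i$ is essential; quasi-concavity in $s_i$ alone only buys convex values of $\varphi$, and a separate argument (Berge) is needed for upper hemicontinuity.
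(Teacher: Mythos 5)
Your proposal is correct: it is the canonical proof of the Debreu--Glicksberg--Fan theorem via Kakutani's fixed point theorem applied to the best-response correspondence, with nonemptiness from Weierstrass, convex values from quasi-concavity, and the closed graph from joint continuity (Berge, or the direct sequence argument you give). The paper itself supplies no proof of this lemma---it is invoked as a classical black-box result in the proof of Theorem~\ref{thm:TP_pne_existence}---so your argument simply reproduces the standard proof from the cited sources and is a valid justification of the statement.
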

	
	The conditions of Lemma~\ref{thm:Glicksberg} apply and so $\mathcal{TP}'(\Delta)$ has a pure Nash equilibrium $\vec{b}^{*}$.
	Consider now the Trading Post game with minimum bid $\Delta$, $\mathcal{TP}(\Delta)$. Note the strategy profile $\vec{b}^{*}$ dominates every other strategy in $\mathcal{TP}(\Delta)$, including those that allow the players to bid zero on items of interest to them, since such strategies can only decrease utility. Thus $\vec{b}^{*}$ is also a PNE in $\mathcal{TP}(\Delta)$, which completes the proof.
\end{proof}

\textsc{Theorem \ref{thm:apx_market}} (restated): \emph{The Nash social welfare at an $\epsilon$-market equilibrium for Leontief utilities is at least
	a $\frac{1}{(1+\epsilon)}$ factor of the optimal Nash social welfare.}
\begin{proof}
	For any given problem instance, let $(\pp',\xx')$ be an $\epsilon$-market equilibrium and let 
	$(\pp^*,\xx^*)$ be exact market equilibrium prices and allocation. By abuse of notation let $u_i(\pp')$ denote the optimal utility
	player $i$ can obtain at prices $\pp'$, {\em i.e.,} $u_i(\pp')= \max\left\{u_i(\yy) \ |\ \yy \ge 0;\ \ \pp' \cdot \yy \le B_i \right\}$. 
	
	
	For Leontief utility functions, convex formulation of (\ref{eq:eg}) captures the market equilibrium allocation. 
	%
	Note that, in order to get a utility of 1 at prices $\pp$, agent $i$ would need to spend 
	a total amount of money equal to $\phi_i(\pp)=\sum_j v_{ij}p_j$. \cite{dev} derived the dual of this convex program:
	
	\[
	\begin{array}{ll}
	\min: & \sum_j p_j - \sum_i B_i \log(\phi(\pp)) + \sum_i B_i \log(B_i) - \sum_i B_i\\
	s.t. & \forall j: p_j \ge 0
	\end{array}
	\]
	
	Note that the term $(\sum_i B_i \log(B_i) - \sum_i B_i)$ is a constant for a given market since $B_i$s are constants, and hence is
	omitted in \cite{dev}.
	Since $(\pp^*,\xx^*)$ is a market equilibrium, using strong duality and the fact that agents spend 
	all their money at equilibrium, i.e., $ \sum_j p^*_j = \sum_i B_i$: 
	
	\begin{equation}\label{eq.rt1}
	\sum_i B_i \log(u_i(\xx^*_i)) ~=~  - \sum_i B_i \log(\phi(\pp^*)) + \sum_i B_i \log(B_i) 
	\end{equation}
	
	Furthermore, at the $\epsilon$-market equilibrium $(\xx',\pp')$ all the agents spend all their money, 
	implying $\sum_j p'_j = \sum_i B_i$. 
	Since $\pp'$ is a feasible dual solution,
	
	
	
	\begin{equation*}
	- \sum_i B_i \log(\phi(\pp^*)) + \sum_i B_i \log(B_i)~\le~  - \sum_i B_i \log(\phi(\pp')) + \sum_i B_i \log(B_i).
	\end{equation*}
	Substituting the left hand side using Equation~\eqref{eq.rt1}, and taking an antilogarithm on both sides yields
	\begin{equation}\label{eq.rt2}
	\prod_i u_i(\xx^*_i)^{B_i} ~\le~ \prod_i \left(\frac{B_i}{\phi(\pp')}\right)^{B_i}.
	\end{equation}

	Since the optimal utility that agent $i$ gets at prices $\pp'$ is $u_i(\pp')$, which she derives using $B_i$ money, and while for unit
	utility she needs $\phi(\pp')$ money, we get
	
	\begin{equation}\label{eq.rt3}
	\forall i:\ \ u_i(\pp') = \frac{B_i}{\phi(\pp')} 
	\end{equation}
	
	Since $(\xx',\pp')$ is an $\epsilon$-market equilibrium, each agent gets an $\epsilon$-optimal
	bundle, so $u_i(\pp') \le u_i(\xx'_i)(1+\epsilon)$. According to \eqref{eq.rt3}, this implies
	$\frac{B_i}{\phi(\pp')}  ~\le~ u_i(\xx'_i)(1+\epsilon)$, which  combined with \eqref{eq.rt2} gives:
	\[
	\prod_i u_i(\xx^*_i)^{B_i} ~\le~ \prod_i \left(\frac{B_i}{\phi(\pp')}\right)^{m_i} ~\le~ (1+\epsilon)^{\totbudget} \prod_i u_i(\xx'_i)^{B_i}
	\]
	
	Since the Nash social welfare at $\xx$ is $(\Pi_i u_i(\xx_i)^{B_i})^{\frac{1}{\totbudget}}$, the result follows.
\end{proof}

\section{Trading Post: Beyond Perfect Substitutes and Complements}\label{app:tp_bsub}
In this section we prove upper bound of $2$ on PoA for Trading Post mechanism with arbitrary concave, non-decreasing valuation function and existence of pure Nash equilibria for CES functions. 
The proof for PoA bound is similar in structure to that of additive valuations.

\medskip

\textsc{Theorem \ref{thm:ub_tp_con}} (restated): \emph{The Trading Post Mechanism with concave valuations has price of anarchy at most 2.}
\begin{proof}
	Given a problem instance with concave valuations, let $\optx$ be the allocation that maximizes 
	the Nash social welfare subject to supply constraints. 
	Also, let $\eqx$ be the allocation obtained under a Nash equilibrium where each player $i$ bids $\eqbb_i$ and the price 
	of each item $j$ is $\eqp_j=\sum_i \eqbid_{i,j}$. 
	
	For any given player $i$, let $\devx$ be the allocation that arises if every player $k\neq i$ 
	bids $\eqbb_k$ while agent $i$ unilaterally deviates to $\vec{b}'_i$. 
	Think of this deviation as player $i$ first withdraw all its money, leaving price of item $j$ to be $p'_j = \eqp_j -\eqbid_{i,j}$, and then spending $\bid'_{i,j}$. Perfect competition condition ensures $p'_j>0,\ \forall j$, as at least two agents are interested in each good. Let the new bid $\bid'_{i,j}$ be such that for some $\beta_i>0$ and every item $j$.
	\begin{equation}\label{aeq:delta}
	\frac{\bid'_{i,j}}{p'_j+\bid'_{i,j}}~=~\frac{x^*_{i,j}}{\beta_i}.
	\end{equation}
	Bid $\vec{b}'_i$ is implied by the solution of the following program. 
	\[
	\text{min: } \beta_i\ \ \ \ \ \text{s.t.:}\ \ \ \  \frac{1}{\beta_i} = \frac{\bid'_{i,j}}{x^*_{i,j}(p'_j+\bid'_{i,j})}\ \ \mbox{ and }\ \ \bid'_{i,j}\ge 0, \ \ \ \forall j\in M;\ \ \ \ \ \ \ \sum_{j\in M} \bid'_{i,j} \le \budget_i
	\]
	
	Note that, $\bid'_{i,j}=0,\ \forall j$ and $\beta_i=\infty$ is a feasible point in the above program, and therefore it has a minimum.
	The allocation induced by this unilateral deviation of $i$ is 
	
	\[x'_{i,j}~=~\frac{\bid'_{i,j}}{p'_j+\bid'_{i,j}}~=~\frac{x^*_{i,j}}{\beta_i}.\]
	
	If $\beta_i\geq 1$, the utility of player $i$ after this deviation is $u_i\left(\frac{\optx_i}{\beta_i}\right)\ge u_i(\optx)/\beta_i$, 
	due to the concavity of $u_i$. If, on the other hand, $\beta_i <1$, then $u_i\left(\frac{\optx_i}{\beta_i}\right)\ge u_i(\optx)$ (non-decreasing). 
	Therefore, $u_i\left(\frac{\optx_i}{\beta_i}\right)\ge u_i(\optx)/\max\{1, \beta_i\}$. 
	But, $\eqx$ is a Nash equilibrium, so this deviation of player $i$ cannot yield a higher
	utility for $i$, which implies that 
	\begin{equation}\label{aineq:ratio_bound}
	u_i(\eqx)~\geq~ u_i\left(\frac{\optx_i}{\beta_i}\right)  ~~~~~ \Rightarrow ~~~~~ \frac{u_i(\optx)}{u_i(\eqx)}~\leq~ \max\{1, \beta_i\}.
	\end{equation}
	By definition of $\bid'$, we get $\sum_{j} \bid'_{i,j}=\sum_j x'_{i,j} (p'_j + \bid'_{i,j})=  \budget_i$; all the money is spent at minimum $\beta_i$. 
	Therefore, replacing for $x'_{i,j} =x^*_{i,j}/\beta_i$, we get 
	\begin{equation}\label{aeq:beta}
	\budget_i \beta_i = \sum_{j=1}^m x^*_{i,j} (p'_j + \bid'_{i,j})
	\end{equation}
	Also, we have $0\le x^*_{i,j}\leq 1$, and $p'_j \le \eqp_j$ for every item $j$. As a result, for each player $i$ we have,
	\begin{equation}\label{aineq:price_change}
	\sum_{j=1}^m x^*_{i,j}(p'_j+\bid'_{i,j}) ~\leq~ \sum_{j=1}^m (x^*_{i,j} \eqp_j)+\sum_{j = 1}^m (x^*_{i,j} \bid'_{i,j}) 
	~\leq~\sum_{j=1}^m \bid'_{i,j} +\sum_{j=1}^m x^*_{i,j}\eqp_j
	~\leq~ \budget_i + \sum_{j=1}^m x^*_{i,j}\eqp_j. 
	\end{equation}

	Let $N_{g}\subseteq N$ be the set of players for which $\beta_i \geq 1$, and $N_{\ell}$
	be the set of players with $\beta_i <1$. Using Inequalities~\eqref{aineq:ratio_bound},~\eqref{aeq:beta}, 
	and~\eqref{aineq:price_change}, and summing over all players gives 
	\begin{equation}\label{aeq:ratio}
	\sum_{i=1}^n \budget_i\frac{u_i(\vec{x}^*_i)}{u_i(\tilde{\vec{x}}_i)} ~\leq~ 
	\sum_{i=1}^n \budget_i \max\{1, \beta_i\} ~\leq~ 
	\sum_{i\in N_{\ell}} \budget_i  + \sum_{i\in N_g} \left(\budget_i + \sum_{j=1}^m x^*_{i,j}\eqp_j\right)  ~\leq~ 
	2\totbudget.
	\end{equation}
	Using Equation~\ref{aeq:ratio} and the inequality of weighted arithmetic and geometric means provided by Lemma~\ref{lem:budgets_bound},
	we conclude that 
	$$\left(\prod_{i\in N} \left(\frac{u_i(\vec{x}^*_i)}{u_i(\tilde{\vec{x}}_i)}\right)^{\budget_i}\right)^{1/\totbudget} ~\leq~ 2.$$
\end{proof}

\textsc{Theorem \ref{thm:existence_CES}} (restated): The Trading Post game with no minimum bid has exact pure Nash equilibria for all CES utilities with perfect competition and $\rho \in (-\infty, 1]$.
\begin{proof}
	The proof for $\rho = 0$ follows from the existence of Nash equilibria  in Fisher markets with Cobb-Douglas utilities \cite{BCDFFZ14}, for which the induced allocations are also proportional. Thus we assume $\rho \neq 0$.
	
	Our main tool for proving the existence of exact pure Nash equilibria is a result for discontinuous games due to \cite{Reny99}.
	First, we define the \emph{better-reply secure} property of a game with strategy space $S = S_1 \times \ldots \times S_n$ and utilities $u_i$.
	
	\begin{definition}
		Player $i$ can \emph{secure} a payoff of $\alpha \in \mathbb{R}$ at $s \in S$ if there exists $\bar{s_i} \in S_i$, such that $u_i(\bar{s_i}, s_{-i}^{'}) \geq \alpha$ for all $s_{-i}^{'}$ close enough to $s_{-i}$.
	\end{definition}
	
	\begin{definition}
		A game $G = (S_i, u_i)_{i=1}^{n}$ is \emph{better-reply secure} if whenever $(s^{*}, u^{*})$ is in the closure of the graph of its vector payoff function and $s^{*}$ is not a Nash equilibrium, then some player $i$ can secure a payoff strictly above $u_i^{*}$ at $s^{*}$.
	\end{definition}
	
	\begin{lemma}[Reny, 1999] \label{thm:Reny}
		If each $S_i$ is a nonempty, compact, convex subset of a metric space, and each $u_i(s_1, \ldots, s_n)$ is quasi-concave in $s_i$, then the game $G = (S_i, u_i)_{i=1}^{n}$ has at least one pure Nash equilibrium if in addition $G$ is better-reply secure.
	\end{lemma}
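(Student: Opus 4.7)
The plan is to invoke Reny's theorem (Lemma~\ref{thm:Reny}) for discontinuous games. The strategy set of each player is $S_i = \{\vec{b}_i \in \mathbb{R}_{\ge 0}^m : \sum_j b_{i,j} = B_i\}$, which is a nonempty, compact, convex subset of $\mathbb{R}^m$, so the first hypothesis is immediate. What remains is to verify (i) quasi-concavity of each payoff in the player's own bid, and (ii) better-reply security. The Cobb-Douglas case $\rho = 0$ can be peeled off separately using the existence result of~\cite{BCDFFZ14}, so I would assume $\rho \in (-\infty, 1] \setminus \{0\}$ throughout.

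For quasi-concavity, fix $\vec{b}_{-i}$ and set $c_j = \sum_{k \ne i} b_{k,j}$, so that $x_{i,j}(b_{i,j}) = b_{i,j}/(b_{i,j}+c_j)$. Perfect competition ensures $c_j > 0$ whenever player $i$ might care about $j$ at a strategy that could be a candidate best reply, so each $x_{i,j}$ is a concave, non-decreasing function of $b_{i,j}$ on $[0,B_i]$. For $\rho \in (0,1]$, the CES utility $(\sum_j v_{i,j} x_{i,j}^\rho)^{1/\rho}$ is a monotone transformation of a concave function of $\vec{x}_i$, hence quasi-concave in $\vec{b}_i$. For $\rho < 0$ the function $t \mapsto t^\rho$ is convex and decreasing, but the standard argument (raising to the $1/\rho$ power reverses the inequality) still yields quasi-concavity of the composition in $\vec{b}_i$; this is precisely the well-known concavity/quasi-concavity of CES aggregators in the demand argument, preserved under the concave reparameterization $b_{i,j} \mapsto x_{i,j}$.

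The main obstacle is better-reply security, because the allocation $x_{i,j} = b_{i,j}/\sum_k b_{k,j}$ is discontinuous exactly at profiles where some item receives total bid zero. Let $(\vec{b}^*, \vec{u}^*)$ lie in the closure of the graph of the payoff function with $\vec{b}^*$ not a Nash equilibrium; I must exhibit a player $i$ and an action $\bar{\vec{b}}_i$ and $\alpha > u_i^*$ such that $u_i(\bar{\vec{b}}_i, \vec{b}'_{-i}) \ge \alpha$ for all $\vec{b}'_{-i}$ in some neighborhood of $\vec{b}^*_{-i}$. I split into two cases. First, if every item $j$ has $\sum_k b^*_{k,j} > 0$, the payoff function is jointly continuous at $\vec{b}^*$, so $u^*_i = u_i(\vec{b}^*)$ for every $i$; since $\vec{b}^*$ is not an equilibrium, some player $i$ has a strictly improving deviation $\bar{\vec{b}}_i$, and continuity lets $i$ secure a payoff strictly above $u^*_i$ on a small neighborhood. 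Second, if some item $j_0$ has $\sum_k b^*_{k,j_0} = 0$, perfect competition supplies (at least) two players who strictly value $j_0$; I would take such a player $i$ and construct $\bar{\vec{b}}_i$ that puts a fixed positive mass $\eta$ on $j_0$ and distributes the rest of the budget uniformly over the items that $i$ values. For all $\vec{b}'_{-i}$ close to $\vec{b}^*_{-i}$, the denominator $\eta + \sum_{k \ne i} b'_{k,j_0}$ is bounded away from zero, so $u_i(\bar{\vec{b}}_i, \vec{b}'_{-i})$ is continuous in $\vec{b}'_{-i}$ there and bounded below by a strictly positive constant; choosing $\eta$ small I make this lower bound exceed $u_i^*$, because $u_i^*$ is the limit of $i$'s payoffs along a sequence approaching $\vec{b}^*$ and such a limit cannot exceed what $i$ secures by guaranteeing positive consumption of every item in its demand set.

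With quasi-concavity and better-reply security established, Lemma~\ref{thm:Reny} yields a pure Nash equilibrium, completing the proof for every $\rho \in (-\infty, 1]$. The delicate step is the second case of better-reply security: I need the securing bid $\bar{\vec{b}}_i$ to simultaneously avoid the discontinuity for all nearby opponent profiles and to strictly dominate the limit payoff $u_i^*$, which relies crucially on the perfect competition assumption to guarantee that even a vanishing bid on a previously unbid item yields a non-negligible allocation share in the limit.
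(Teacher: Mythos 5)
The statement you were asked to prove is Reny's existence theorem for discontinuous games itself: the paper records it as Lemma~\ref{thm:Reny} and cites it to \cite{Reny99} without proof, since it is an external result. Your proposal opens by saying ``the plan is to invoke Reny's theorem (Lemma~\ref{thm:Reny})'' and then applies it; this is circular with respect to the statement at hand---you cannot establish the lemma by assuming it. What you have actually written is, in substance, a proof of \thmref{thm:existence_CES} (existence of pure Nash equilibria of Trading Post for CES utilities), and in that role it tracks the paper's appendix argument quite closely: compactness and convexity of the bid simplices, quasi-concavity of CES payoffs in one's own bid, peeling off the Cobb--Douglas case via \cite{BCDFFZ14}, and verifying better-reply security at profiles where some item receives zero total bid by having a player who values that item (guaranteed by perfect competition) place a small positive bid on it. But none of this proves the quoted statement.

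A genuine proof of Lemma~\ref{thm:Reny} is a different and substantially harder task: it concerns arbitrary compact, quasi-concave, better-reply secure games and proceeds (in \cite{Reny99}) by a limiting/fixed-point argument---roughly, one shows that if no equilibrium existed, then along a suitable sequence of approximating (payoff-secure) situations one could extract a limit point $(s^*,u^*)$ in the closure of the payoff graph at which no player can secure strictly more than $u_i^*$, contradicting better-reply security. Nothing of this structure appears in your proposal, and the game-specific devices you rely on (perfect competition, the explicit form $x_{i,j}=b_{i,j}/\sum_k b_{k,j}$, CES monotone transformations) have no meaning at the level of generality of the lemma. As a secondary point, even read as a proof of \thmref{thm:existence_CES}, your quasi-concavity claim for $\rho<0$ is asserted rather than argued, and your ``choose $\eta$ small'' step implicitly needs $\rho>-\infty$ so that a vanishing bid on the remaining items costs only a vanishing amount of utility---the paper is explicit about this restriction---so the Leontief endpoint must be excluded there as well.
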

	
	The strategy spaces in Trading Post are nonempty, convex, compact subsets of the Euclidean space. Moreover the CES utility functions are quasi-concave for every parameter $\rho \leq 1$. We show the game is also better-reply secure.
	
	First note that Trading Post has discontinuities at strategy profiles where everyone bids zero on some item, captured by the set 
	$$\mathcal{D} = \{ \vec{b} \in S \; | \; \exists \; j \in M \; \mbox{such that} \; b_{i,j}=0, \forall i \in N\}.$$
	Let $(\vec{b}^{*}, \vec{u}^{*})$ be a point in the closure of the graph of the vector payoff function. If $\vec{b}^* \not \in \mathcal{D}$, then better-reply security holds at $(\vec{b}^{*}, \vec{u}^{*})$ by continuity. Thus we consider $\vec{b}^{*} \in \mathcal{D}$. Then there exists a sequence of strategy profiles $(\vec{b}^{K})_{K \geq 1}$ such that  
	$\vec{u}^{*} = \lim_{K \to \infty} (u_1(\vec{b}^{K}),$ $\ldots$, $u_n(\vec{b}^{K}))$, where $b^* = \lim_{K \to \infty} b^{K}$.
	
	Consider the set of items on which no player bids at $\vec{b}^{*}$:
	$J = \{j \in M \; | \; b_{i,j}^{*} = 0, \forall i \in N\}$ and $\ell$ an item in $J$. Denote by $C$ the set of players $i$ with $v_{i,\ell} > 0$ and $B_C = \sum_{i \in C} B_i$.
	By perfect competition, $|C| > 1$, and so there exists player $i$ and value $0 < \alpha < 1 - \frac{B_i}{B_C}$ such that $i$ gets a fraction of at most $f = \frac{B_i}{B_C} + \alpha < 1$ from item $\ell$ in every $K^{th}$ term of the sequence, possibly except the first $K_{\alpha}$ terms.
	
	Take the set of items
	that player $i$ declares valuable in the limit: $$L_i = \{ j \in M \; | \; b_{i,j}^* > 0\}.$$
	Given $0 < \epsilon < \min_{i,j} b_{i,j}^*$, construct an alternative profile $\vec{b}_i' (\epsilon)$ of player $i$, at which the bid on item $\ell$ is 
	$b_{i,\ell}'(\epsilon) = \epsilon$, the bid on items $j \in L_i$ is $b_{i,j}'(\epsilon) = b_{i,j}^* - \epsilon/|L_i|$, and the bid on items outside $L_i \cup \{\ell\}$ is $b_{i,j}'(\epsilon) = b_{i,j}^* = 0$.
	The strategy $\vec{b}_i'(\epsilon)$ guarantees that player $i$ gains the entire item $\ell$, while along the sequence $(\vec{b}^K)_{K \geq K_{\alpha}}$ the player was receiving a fraction of at most $f < 1$ from $\ell$. Moreover, by playing $b_{i,j}'(\epsilon)$, player $i$ loses a fraction of at most $\epsilon/w^*$ from every other item $j \in L_i$, where $w^* = \min_{j \in L_i} \sum_{k=1}^n b_{k,j}^*$. For CES utilities with $\rho > -\infty$, there exists small enough $\epsilon$ such that $u_i(\vec{b}_i'(\epsilon), \vec{b}_{-i}^*) > u_i^*$. Note that $u_i^* > u_i(\vec{b}^*)$, since player $i$ loses item $\ell$ at $\vec{b}^*$. 
	
	Moreover, by continuity of the utilities at $(\vec{b}_i'(\epsilon), \vec{b}_{-i}^*)$, the strategy $\vec{b}_i'(\epsilon)$ continues to guarantee player $i$ a payoff strictly above $u_i^*$ for any small enough perturbation of magnitude bounded by $\delta(\epsilon)$ of the strategies of the others around $\vec{b}_{-i}^*$. Thus the game is better-reply secure, which by Lemma \ref{thm:Reny} implies the existence of a pure Nash equilibrium.
\end{proof}

{\textsc{Remark.}
	\rn{We observe that the existence result may be extended to markets with strictly concave functions with some additional derivative conditions for specific directions. These conditions essentially ensure strict increase in the utility of an agent when she starts getting much more of a good, while the amounts of other goods she gets are decreased by close to zero quantity.}

	\medskip
	\medskip


\section{Beyond Pure Nash Equilibria in Trading Post}\label{app:TPP}
In this section we show that all Nash equilibria of the trading post game are pure. 
For this we will show that no matter what other players play, there is a {\em unique pure} best response strategy for player $i$. At Nash
equilibrium, since strategy of a player is a probability distribution over pure strategies that are in best response, this will imply
that every Nash equilibrium has to be pure. 

To show uniqueness of best response we will show that fixing mixed strategy/bid profile for all other players, player $i$'s utility is
a strictly concave function of its own bid profile. Let $\smi=(\ssigma_1,\dots,\ssigma_{i-1},\ssigma_{i+1},\dots,\ssigma_n)$ be a 
mixed-strategy profile of all the players except $i$, and let player $i$'s payoff function w.r.t. $\smi$ be denoted by
$u^{\smi}_i:S_i\rightarrow \Real$, where $S_i= \{(s_{i1},\dots, s_{im})\ge 0 \ |\ \sum_j s_{ij}= \budget_i\}$ is the set of pure
strategies (bids) of player $i$, {i.e.,} set of all possible ways in which she can split her budget across goods.

Next we derive the result for both additive and Leontief separately, and later extend it to concave under a mild assumption. Recall that we have assumed {\em perfect competition} where every good is liked by at least two players, which translates to for each good $j$, there exist $i \neq k \in N$ such that
$v_{i,j}\neq 0, v_{k,j}\neq 0$ for markets with additive or Leontief valuations. 

\subsection{Perfect Substitutes (Additive valuations)}
Function $u^\smi_i$ is as follows for the case of additive utilities.

\begin{equation}\label{eq:rlin}
u_i^\smi (\sss_i) = \sum_{\sss_{-i} \in S_{-i}} (\Pi_{k\neq i} \ssigma_i(\sss_k)) \sum_j v_{ij}
\frac{s_{ij}}{s_{ij}+\sum_{k \neq i} s_{kj}}, \ \forall \sss_i \in S_i
\end{equation}

Next we will show that function $u_i^\smi$ is strictly concave on the entire domain of $\Real^{m}$, therefore it is strictly
concave on $S_i$ as well. For this, we will show that Hessian of $u_i^\smi$ is negative definite.

\begin{lemma}\label{lem:sc}
	Given a mixed-strategy profile $\smi$ of all players $k\neq i$ such that every good is bought by at least one of them with non-zero
	probability, payoff function of player $i$, namely $u_i^\smi$, is strictly concave.
\end{lemma}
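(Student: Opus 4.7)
The plan is to prove strict concavity by computing the Hessian of $u_i^\smi$ with respect to $\sss_i$ and showing it is negative definite on the relevant subspace. Writing the payoff as
\[
u_i^\smi(\sss_i) ~=~ \sum_{\sss_{-i}\in S_{-i}} \Bigl(\Pi_{k\neq i}\sigma_k(\sss_k)\Bigr)\, F_{\sss_{-i}}(\sss_i),\qquad F_{\sss_{-i}}(\sss_i)\,=\,\sum_{j=1}^m v_{ij}\,\frac{s_{ij}}{s_{ij}+c_j(\sss_{-i})},
\]
where $c_j(\sss_{-i}):=\sum_{k\neq i}s_{kj}$, exhibits $u_i^\smi$ as a nonnegative combination of the deterministic payoff functions $F_{\sss_{-i}}$. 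Since the $j$-th summand of each $F_{\sss_{-i}}$ depends only on $s_{ij}$, the Hessian $\nabla^2 F_{\sss_{-i}}$ is diagonal, and hence so is $\nabla^2 u_i^\smi$.

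Next I would compute entry by entry: for each good $j$,
\[
\frac{\partial^2}{\partial s_{ij}^2}\,\frac{v_{ij}\,s_{ij}}{s_{ij}+c_j}~=~-\frac{2\,v_{ij}\,c_j}{(s_{ij}+c_j)^3},
\]
which is $\le 0$ for $\sss_i\ge 0$, and strictly negative whenever $v_{ij}>0$ and $c_j>0$. Averaging over $\sss_{-i}$ in the support of $\smi$, the $(j,j)$ diagonal entry of $\nabla^2 u_i^\smi$ equals $-2v_{ij}\,\E_{\smi}\!\bigl[c_j/(s_{ij}+c_j)^3\bigr]$. By the hypothesis of the lemma (for each good $j$ some player $k\neq i$ bids positively on $j$ with positive probability under $\smi$), we have $c_j(\sss_{-i})>0$ on a positive-measure event; hence this entry is strictly negative whenever $v_{ij}>0$. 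Therefore the Hessian is a diagonal matrix whose entries are strictly negative on the coordinates with $v_{ij}>0$ and zero on the rest.

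The minor subtlety, which I would handle explicitly, is that coordinates $j$ with $v_{ij}=0$ contribute nothing to $u_i^\smi$, so strict concavity in the literal sense holds only on the subspace of ``effective'' coordinates $\{j:v_{ij}>0\}$. This suffices for the intended application in \textbf{Theorem \ref{thm:nomixed}}: in any best response player $i$ must place zero bid on every good $j$ with $v_{ij}=0$ (any money there is wasted), so best responses live in the effective subspace, where the Hessian is negative definite and the maximum of $u_i^\smi$ subject to the budget simplex is therefore unique. Once uniqueness of the pure best response is established, any mixed Nash equilibrium strategy of player $i$ must place all mass on this single bid, forcing every Nash equilibrium to be pure. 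No step relies on anything beyond standard calculus and the perfect-competition assumption; the main obstacle is just being careful about the irrelevant coordinates when asserting strict concavity.
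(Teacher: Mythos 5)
Your proposal follows the same line of attack as the paper: write $u_i^{\smi}$ as a convex combination of the deterministic payoffs $F_{\sss_{-i}}$, compute the (diagonal) Hessian termwise, and conclude negative definiteness from the competition hypothesis. The computations match the paper's exactly.

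You in fact handle one point more carefully than the paper does. The paper asserts flatly that each diagonal entry of the Hessian is strictly negative and concludes strict concavity on all of $\Real^m$, but this is not literally true: for any good $j$ with $v_{ij}=0$ the factor $v_{ij}$ kills the corresponding entry, so the Hessian has a zero on the diagonal and $u_i^{\smi}$ is merely affine (constant) in $s_{ij}$. Your observation---that strict concavity holds only on the subspace spanned by the coordinates with $v_{ij}>0$, and that this suffices because every best response places zero bid on the remaining coordinates---is exactly the repair the paper's argument needs to support the downstream use in Theorem~\ref{thm:nomixed}. So: same decomposition, same Hessian computation, same conclusion, but your version patches a small gap by making the restriction to the effective coordinates explicit.
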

\begin{proof}
	We first show that $u_i^\smi$ is strictly concave on $\Rplus^m$.
	Take the derivative of $u_i^\smi$ with respect to $s_{ij}$ for each $j$,
	\[
	\frac{\partial u_i^\smi }{\partial s_{ij}} = \sum_{\sss_{-i} \in S_{-i}} (\Pi_{k\neq i} \ssigma_i(\sss_k)) v_{ij} \frac{\sum_{k\neq i}
		s_{kj}}{(s_{ij} + \sum_{k \neq i} s_{kj})^2} 
	\]
	
	Differentiating the above w.r.t. $s_{ig}$s for each good $g$, we get:
	
	\[
	\frac{\partial^2 u_i^\smi }{\partial s_{ij} \partial s_{ig}} = 0,\ \forall g \neq j;\ \ \ \ 
	\frac{\partial^2 u_i^\smi }{\partial s_{ij}^2} = -2 \sum_{\sss_{-i} \in S_{-i}} (\Pi_{k\neq i} \ssigma_i(\sss_k)) v_{ij}
	\frac{\sum_{k\neq i} s_{kj}}{(s_{ij}+\sum_{k \neq i} s_{kj})^3}
	\]
	
	Since bids are non-negative and at least one player other than $i$ is bidding on good $j$ with non-zero probability as per
	$\smi$, the second term in above expression is strictly negative. Therefore, the hessian is diagonal matrix with negative entries in
	diagonal, and hence negative definite. Thus, function $u^\smi_i$ is strictly concave on $\Real^m$. Therefore it remains strictly
	concave on convex subset $S_i\subset\Real^m$ as well.
\end{proof}

Strict concavity of $u^\smi_i$ established in Lemma \ref{lem:sc} implies that every player has a unique best response against any
strategy of the opponents and it is pure. At Nash equilibrium since only strategies in best response can have non-zero probability, it
follows that all Nash equilibria have to be pure. Thus we get the next theorem. 

\begin{theorem}\label{thm:linPureNE}
	In a market with linear utilities, every Nash equilibrium of the
	corresponding trading post game is pure. 
\end{theorem}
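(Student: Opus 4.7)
The plan is to derive \thmref{thm:linPureNE} from \lemref{lem:sc} via a standard uniqueness-of-best-reply argument: strict concavity of the expected payoff in one's own bid forces the best-reply correspondence to be single-valued and pure, which in turn forces every Nash equilibrium to be pure.

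First I would fix an arbitrary opponent profile $\smi$ and regard player $i$'s expected payoff $u_i^{\smi}:S_i\to\Real$ as in Equation~\eqref{eq:rlin}. The key observation is that for each good $j$, the summand $v_{ij}\, s_{ij}/(s_{ij}+\sum_{k\neq i} s_{kj})$ depends on player $i$'s bid only through the single coordinate $s_{ij}$. Consequently the Hessian of $u_i^{\smi}$ is diagonal, and a direct computation (the one carried out in the proof of \lemref{lem:sc}) shows that the $j$-th diagonal entry equals $-2\,\E_{\smi}\!\left[v_{ij}\frac{\sum_{k\neq i} s_{kj}}{(s_{ij}+\sum_{k\neq i} s_{kj})^3}\right]$, which is strictly negative whenever some opponent bids a positive amount on good $j$ with positive probability. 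Since the feasible set $S_i$ is convex and every nonzero direction $d$ in its tangent space yields $d^{\top} H d=\sum_j H_{jj} d_j^2<0$, the function $u_i^{\smi}$ is strictly concave on $S_i$.

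Once strict concavity is in hand, $u_i^{\smi}$ has a unique maximizer on $S_i$, so player $i$'s best-reply correspondence against $\smi$ is single-valued; in particular its unique element is a pure bid. At any Nash equilibrium $\ssigma$, the support of $\ssigma_i$ must be contained in $i$'s set of best replies against $\smi$, and uniqueness forces $\ssigma_i$ to place probability one on a single pure bid. Applying this argument to every $i$ yields \thmref{thm:linPureNE}.

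The main obstacle, and the part that needs care, is verifying that the hypothesis of \lemref{lem:sc} --- that for every good $j$ some opponent of $i$ bids on $j$ with positive probability --- actually holds at a candidate mixed equilibrium. This is not immediate from perfect competition alone. However, if in a purported equilibrium some good $j$ received zero total opponent bid almost surely, then some other player $k\neq i$ with $v_{kj}>0$ (guaranteed by perfect competition) could profitably deviate by shifting an arbitrarily small amount of her budget onto $j$, acquiring essentially the whole of $j$ while losing only an infinitesimal fraction elsewhere. Ruling out such degeneracies at equilibrium lets us invoke \lemref{lem:sc} and close the argument.
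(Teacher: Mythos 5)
Your overall strategy is exactly the paper's: invoke \lemref{lem:sc} to get strict concavity of $u_i^\smi$ in player $i$'s own bid, hence a unique (pure) best reply against any opponent profile, hence every mixed equilibrium must place all probability on a single pure bid. That part matches the paper cleanly, and the Hessian computation you describe is the one in the proof of \lemref{lem:sc}.

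The place where you deviate --- the verification that the hypothesis of \lemref{lem:sc} holds at a candidate equilibrium --- has a gap. You rule out the degenerate case by saying: if no opponent of $i$ ever bids on good $j$, then some $k \neq i$ with $v_{kj}>0$ could shift an arbitrarily small $\epsilon$ onto $j$ and \emph{acquire essentially the whole of $j$} while losing only $O(\epsilon)$ elsewhere. This is correct when \emph{no one at all} (including $i$) bids on $j$. But ``zero total opponent bid'' does not preclude $i$ itself from bidding substantially on $j$ with probability one; in that case $k$'s $\epsilon$-bid captures only a fraction $\epsilon/(\epsilon + s_{ij})$ of $j$, which is $O(\epsilon)$ and cannot dominate the $O(\epsilon)$ loss elsewhere, so no clean contradiction emerges. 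The paper handles precisely this residual scenario with a different deviation: when $i$ is the \emph{sole} bidder on $j$, it is $i$ (not $k$) who has the profitable move --- reduce the bid on $j$ (still winning all of it) and redeploy the savings on other goods $i$ values. To make your proof airtight you would need to combine both arguments, splitting according to whether $i$ bids on $j$ or not, rather than relying on the single $k$-based deviation.
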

\begin{proof}
	At any given strategy profile, if only player $i$ is biding on a good $j$ with non-zero probability, then it can not be Nash equilibrium
	since that $i$ can reduce its bid on good $j$ to very small amount while still getting it fully, and use this saved money to buy other
	goods. 
\end{proof}
We observe that the equilibria of Trading Post with linear utilities are not necessarily unique.
\begin{example}\label{ex:lin}
	Consider a market with four buyers and two goods. Players 1 and 2 only want the first and second good, respectively, while players 
	3 and 4 like both goods equally. Then the bid profiles $\sss_1=(1,0), \sss_2(0,1), \sss_3=(1-\epsilon,\epsilon), \sss_4=(\epsilon,1-\epsilon)$ are in
	NE for any $0\le \epsilon \le 1$. 
\end{example}

\subsection{Perfect Complements (Leontief utilities)}
In case of Leontief utilities, the payoff of player $i$ is $\min_j \frac{x_{ij}}{v_{ij}}$, where $x_{ij}$ is the amount of good $j$
player $i$ gets. Like the linear case, for Leontief valuations too we will show uniqueness of best response, 
however the approach is different because the utility function $u_i^\smi$ (defined below) is
no more strictly concave. 

\begin{equation}\label{eq:rleon}
u_i^\smi(\sss_i) = \sum_{\sss_{-i} \in S_{-i}} (\Pi_{k\neq i} \ssigma_i(\sss_k)) \min_j\frac{1}{v_{ij}} 
\frac{s_{ij}}{s_{ij} +\sum_{k \neq i} s_{kj} },\ \ \forall \sss_i \in S_i
\end{equation}

We can show that $u_i^\smi$ is concave using the concavity of $\frac{s_{ij}}{s_{ij} + \sum_{k \neq i}
	s_{kj}}$ w.r.t. $s_{ij}$, however it is not strictly concave in general. 
Instead we will show that function $u_i^\smi$ has a unique optimum over all the possible strategies of player $i$ in game
$TP(\Delta)$. This will suffice to show that all equilibria are pure. 

Since every good is liked by at least two players, relevant bid profiles $\sss_{-i}$ are
only those where every good is bid on by some player $k$ other than $i$ since bidding zero fetches zero amount of the good. To capture
this we we define {\em valid} strategy profiles.

\begin{definition}
	Profile $\sss_{-i}$ is {\em valid} if $\forall j,\ \sum_{k\neq i} s_{kj} >0$. 
	Similarly, mixed-profiles $\smi$ is said to be {\em valid} if each pure profile $\sss_{-i}$ with $P(\sss_{-i})=\Pi_{k \neq i}
	\ssigma_k(\sss_k) >0$ is valid. 
\end{definition}

\begin{lemma}\label{lem:valid}
	If $\ssigma$ is a Nash equilibrium of game $TP(\Delta)$ for any $\Delta>0$, then profile $\smi$ is valid
	for each $i \in \CA$.
\end{lemma}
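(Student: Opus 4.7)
The plan is proof by contradiction. Assume $\smi$ is not valid for some player $i$: then some pure profile $\sss_{-i}$ lies in the support of $\smi$ (with $P(\sss_{-i})>0$) and some good $j^*$ satisfies $\sum_{k\neq i} s_{kj^*}=0$, so no player other than $i$ places a bid on $j^*$ in this realization. Perfect competition guarantees that at least two players strictly value $j^*$, so at least one such player $k^*$ is different from $i$, with $v_{k^*, j^*}>0$. Since $\smi$ is a product distribution and $P(\sss_{-i})>0$, the pure strategy $\sss_{k^*}$ (in which $s_{k^*, j^*}=0$) must lie in the support of the mixed strategy $\ssigma_{k^*}$.

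The core of the argument compares $\sss_{k^*}$ with a carefully chosen deviation. On one hand, playing $\sss_{k^*}$ yields $k^*$ exactly zero utility against any opponents: since $s_{k^*, j^*}=0<\Delta$, the effective bid of $k^*$ on $j^*$ is zero, so $x_{k^*, j^*}=0$, and the Leontief utility is bounded above by $x_{k^*, j^*}/v_{k^*, j^*}=0$. Hence the expected utility of $\sss_{k^*}$ against $\ssigma_{-k^*}$ equals $0$. On the other hand, consider the deviation $\sss'_{k^*}$ that bids exactly $\Delta$ on every good $j$ with $v_{k^*, j}>0$ and distributes any leftover budget among those same goods; this is feasible in the regime where $TP(\Delta)$ is studied, since $\Delta<1/m$ and $B_{k^*}\ge 1$ imply $m\Delta\le B_{k^*}$. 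Under $\sss'_{k^*}$, the effective bid on every demanded good is at least $\Delta$, so $x_{k^*, j}>0$ for every $j$ with $v_{k^*, j}>0$, and $k^*$ secures strictly positive Leontief utility against every opponent profile. In particular, the expected utility of $\sss'_{k^*}$ against $\ssigma_{-k^*}$ is strictly positive.

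This strictly dominates $\sss_{k^*}$ in expectation, contradicting the requirement that every pure strategy in the support of $\ssigma_{k^*}$ be a best response at Nash equilibrium; therefore $\smi$ must be valid. The only subtle step is ensuring the deviation $\sss'_{k^*}$ is feasible and strictly better, which is precisely where the positivity of the entrance fee $\Delta$ and the normalization $B_i\ge 1$ are needed: without $\Delta>0$, an infinitesimal bid on a demanded good would not be prevented, and the clean separation between zero-utility and uniformly positive-utility strategies for $k^*$ that drives the contradiction would fail.
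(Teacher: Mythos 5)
Your proof is correct and follows essentially the same strategy as the paper's: locate a realization $\sss_{-i}$ in the support of $\smi$ where some good $j^*$ receives no bids from players other than $i$, use perfect competition to find a player $k^*\neq i$ with $v_{k^*,j^*}>0$ who is in the support at a strategy $\sss_{k^*}$ with zero bid on $j^*$ (hence zero Leontief utility against every opponent profile), and exhibit a strictly improving deviation for $k^*$, contradicting equilibrium. The only difference is the choice of deviation: the paper shifts $\Delta$ from a single good $j'$ with $s_{k^*,j'}\geq 2\Delta$ onto $j^*$, whereas you bid $\Delta$ on every demanded good and spread the remainder; your version is slightly more robust since it handles the case where $\sss_{k^*}$ has several zero bids at once, and its feasibility check ($m\Delta\leq B_{k^*}$) is cleaner than the paper's informal ``assuming $\Delta$ small'' appeal.
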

\begin{proof}
	For some $i$ if $\smi$ is not valid, then there exists $\sss_{-i}$ with $P(\sss_{-i})>0$ and good $j$ such that no one is bidding on it
	at $\sss_{-i}$. Let $k \neq i$ is an player with $v_{kj}>0$. Clearly she gets zero utility whenever she plays $\sss_k$. Instead if she 
	replaces $\sss_k$ with $\ttt$ where $t_j =\Delta$, $t_{j'} = s_{kj'} - \Delta$ where $s_{kj'}\ge 2\Delta$ (assuming $\Delta$ to be
	small there is such a good), and $t_d=s_{kd}, \forall d\neq j,j'$, will give her strictly better utility.
\end{proof}

Due to Lemma \ref{lem:valid} it suffice to consider only valid strategy profiles, both mixed as well as pure.

\begin{lemma}\label{lem:leonConcave}
	Given a valid strategy profile $\sss_{-i}$ of $TP(\Delta)$ for $\Delta>0$ and good $j$, consider function $f_j =
	\frac{1}{v_{ij}}\frac{s_{ij}}{s_{ij}+ \sum_{k \neq i} s_{kj}}$. $f_j$ seen as a
	function of $s_{ij}$ is strictly concave, and as
	a function of $(s_{i1},\dots,s_{in})$ it is concave.  \end{lemma}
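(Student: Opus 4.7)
The plan is to verify both claims by direct calculation, exploiting the fact that once the opponents' profile $\sss_{-i}$ is fixed, the quantity $S_j := \sum_{k \neq i} s_{kj}$ is a positive constant (positive precisely because $\sss_{-i}$ is valid). So $f_j$ is really a univariate function in $s_{ij}$ that has been trivially embedded into $\mathbb{R}^m$.

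For the first claim (strict concavity in $s_{ij}$), I would simply compute
\[
f_j(s_{ij}) \;=\; \frac{1}{v_{ij}}\cdot\frac{s_{ij}}{s_{ij}+S_j} \;=\; \frac{1}{v_{ij}}\Bigl(1-\frac{S_j}{s_{ij}+S_j}\Bigr),
\]
so that
\[
f_j'(s_{ij}) \;=\; \frac{1}{v_{ij}}\cdot\frac{S_j}{(s_{ij}+S_j)^2}, \qquad f_j''(s_{ij}) \;=\; -\frac{2 S_j}{v_{ij}(s_{ij}+S_j)^3}.
\]
Since $S_j>0$ by validity, $v_{ij}>0$ (the claim is nontrivial only for goods in $i$'s demand, where the Leontief normalization $v_{ij}\ge 1$ applies), and $s_{ij}\ge 0$, the second derivative is strictly negative on the entire feasible range. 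This gives strict concavity in the scalar variable $s_{ij}$.

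For the second claim (joint concavity in $(s_{i1},\dots,s_{im})$), I would note that $f_j$ does not depend on any coordinate other than $s_{ij}$. Hence the Hessian of $f_j$ with respect to $(s_{i1},\dots,s_{im})$ has a single nonzero entry, namely $f_j''(s_{ij})<0$ in the $(j,j)$ slot, and zeros everywhere else. This Hessian is negative semidefinite (its only nonzero eigenvalue is negative), which gives concavity on $\mathbb{R}^m$, and in particular on the simplex $S_i$. Note that strict concavity must fail in the joint variables, because $f_j$ is affine (in fact constant) along any direction that does not change $s_{ij}$, and this is consistent with the statement of the lemma, which only claims strict concavity in the single coordinate $s_{ij}$.

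There is no real obstacle here; the only thing to be careful about is to invoke validity of $\sss_{-i}$ to ensure $S_j>0$, since otherwise $f_j$ would reduce to the discontinuous expression $s_{ij}/s_{ij}$ at $s_{ij}=0$ and the derivatives above would be meaningless. The subsequent lemma on uniqueness of best response will presumably combine this single-coordinate strict concavity for each $j$ with the minimum structure in $u_i^{\smi}$ (from equation~\eqref{eq:rleon}) to argue that the best response is unique, so it is important to keep the strictness in the univariate claim intact.
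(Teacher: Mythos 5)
Your proof is correct and follows essentially the same route as the paper: compute the second derivative of $f_j$ in $s_{ij}$, observe it is strictly negative because $\sum_{k\neq i}s_{kj}>0$ by validity, and then note the full Hessian is diagonal with a single negative entry, hence negative semidefinite, giving joint concavity. The paper records exactly these derivative computations and draws the same conclusions.
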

\begin{proof}
	Taking double derivative of $f_j$ we get, 
	\[
	\begin{array}{c}
	\frac{\partial^2 f_j }{\partial s_{ij}^2} = \frac{-2}{v_{ij}} \frac{\sum_{k\neq i} s_{kj}}{(s_{ij}+\sum_{k \neq i} s_{kj})^3}\\
	\forall g \neq j,\ \frac{\partial^2 f_j }{\partial s_{ij} \partial s_{ig}} = 0 \ \ \ \mbox{ and } \frac{\partial^2 f_j }{\partial s_{ig}^2} = 0
	\end{array}
	\]
	First equality implies strict concavity w.r.t. $s_{ij}$ since $\sum_{k\neq i} s_{kj}\ge \Delta>0$ given that there is another player who 
	would want good $j$. Both equalities together imply that Hessian of $f_j$ is negative semi-definite and there by concavity w.r.t.
	$\sss_i$. 
\end{proof}

Next we show uniqueness of best response for player $i$ against any given $\smi$. For simplicity, we will abuse notation and use $u_i(\sss_i,\sss_{-i})$ to denote $u_i(\frac{s_{i,1}}{\sum_{k} s_{k,1}},\dots,\frac{s_{i,n}}{\sum_{k} s_{k,n}})$

\begin{lemma}\label{lem:leonUniqueOpt}
	Given a valid mixed-strategy profile $\smi$ of all players $k\neq i$ in trading post game $TP(\Delta)$ for any $\Delta>0$, payoff 
	function of player $i$, namely $u_i^\smi$, has a unique optimum. 
\end{lemma}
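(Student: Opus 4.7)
I would argue by contradiction: suppose $\sss_i \neq \sss_i'$ are two distinct maximizers of $u_i^{\smi}$, both attaining value $V$. By the concavity of $u_i^{\smi}$ from Lemma~\ref{lem:leonConcave}, the midpoint $\sss_i'' = (\sss_i + \sss_i')/2$ is also a maximizer. Write $\phi(\sss_i,\sss_{-i}) = \min_j f_j(\sss_i,\sss_{-i})$, so that $u_i^{\smi}(\sss_i) = \sum_{\sss_{-i}} P(\sss_{-i})\,\phi(\sss_i,\sss_{-i})$ is a nonnegative-weighted sum of concave terms. Since $u_i^{\smi}(\sss_i'') = \tfrac{1}{2}(u_i^{\smi}(\sss_i) + u_i^{\smi}(\sss_i'))$ and each summand is pointwise $\geq$ the midpoint of its endpoints, equality must hold termwise: for every $\sss_{-i}$ with $P(\sss_{-i}) > 0$, $\phi(\sss_i'',\sss_{-i}) = \tfrac{1}{2}[\phi(\sss_i,\sss_{-i}) + \phi(\sss_i',\sss_{-i})]$; call this common value $V_{\sss_{-i}}$.

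The main structural observation is to partition coordinates into $J^{=} = \{j : s_{ij} = s'_{ij}\}$ and $J^{\neq} = M \setminus J^{=}$ (with $J^{\neq} \neq \emptyset$ since $\sss_i \neq \sss_i'$), and to show that at $\sss_i''$ the argmin of $\phi(\cdot,\sss_{-i})$ is contained in $J^{=}$. This uses Lemma~\ref{lem:leonConcave}, which says each $f_j$ depends only on $s_{ij}$ and is strictly concave in it: for any $j \in J^{\neq}$, strict concavity at unequal endpoints gives $f_j(\sss_i'',\sss_{-i}) > \tfrac{1}{2}[f_j(\sss_i,\sss_{-i}) + f_j(\sss_i',\sss_{-i})] \geq V_{\sss_{-i}}$, where the last step uses $f_j \geq \phi$ pointwise. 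So every coordinate in $J^{\neq}$ sits strictly above the minimum at $\sss_i''$, which in particular forces $J^{=} \neq \emptyset$.

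I would then exhibit a strictly improving deviation from $\sss_i''$. Pick any $j^* \in J^{\neq}$ and let $\vec{y} := \sss_i'' - \epsilon\,e_{j^*} + \tfrac{\epsilon}{|J^{=}|}\sum_{j \in J^{=}} e_j$ for small $\epsilon > 0$. For every $\sss_{-i}$ with $P(\sss_{-i}) > 0$: (i) $f_{j^*}(\vec{y},\sss_{-i})$ stays strictly above $V_{\sss_{-i}}$ by continuity, since $f_{j^*}(\sss_i'',\sss_{-i}) > V_{\sss_{-i}}$; (ii) every $f_j$ with $j \in J^{=}$ strictly increases, by strict monotonicity of $f_j$ in $s_{ij}$, so in particular the old argmin values cross above $V_{\sss_{-i}}$; (iii) every $f_j$ with $j \in J^{\neq}\setminus\{j^*\}$ is unchanged and already exceeds $V_{\sss_{-i}}$ by the previous paragraph. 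Taking the minimum in $j$ gives $\phi(\vec{y},\sss_{-i}) > V_{\sss_{-i}}$ strictly, and summing against $P(\sss_{-i})$ yields $u_i^{\smi}(\vec{y}) > V$, a contradiction.

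The main obstacle is ensuring that $\vec{y}$ is actually admissible in $\mathcal{TP}(\Delta)$, i.e., that the decreased bid $s''_{ij^*} - \epsilon$ remains at least $\Delta$. For this I would argue that any $j^* \in J^{\neq}$ must satisfy $v_{ij^*} > 0$, since otherwise bidding anything positive on $j^*$ is wasteful and both optima would force $s_{ij^*} = s'_{ij^*} = 0$, contradicting $j^* \in J^{\neq}$. Leontief utilities then force zero utility the moment the effective bid on any liked good drops below $\Delta$, so any positive-utility optimum must satisfy $s_{ij^*}, s'_{ij^*} \geq \Delta$; since they are unequal, the larger strictly exceeds $\Delta$ and hence $s''_{ij^*} > \Delta$, leaving genuine room for the deviation. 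A secondary conceptual point is that the whole argument is carried out at the midpoint rather than at an endpoint: strict concavity only pushes $J^{\neq}$ coordinates off the argmin in the open interior of the segment, whereas at $\sss_i$ or $\sss_i'$ themselves some $j \in J^{\neq}$ could tie for the minimum and no simple local improvement would then be available.
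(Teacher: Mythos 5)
Your proof is correct and draws on the same two technical ingredients as the paper's---concavity of $u_i^{\smi}$ (so the segment from $\sss_i$ to $\sss'_i$ is all optimal) and strict concavity of each $f_j$ in $s_{ij}$ from Lemma~\ref{lem:leonConcave}---but is organized differently. The paper argues at the \emph{endpoint} $\sss_i$: it fixes a single coordinate $j^*$ with $s_{ij^*}>s'_{ij^*}$ and case-splits on whether $j^*$ is an argmin coordinate for some $\sss_{-i}$ in the support. If it never is, a local transfer away from $j^*$ directly improves $u_i^{\smi}$; if it is for some $\sss_{-i}$, the paper argues $u_i^{\smi}$ is strictly concave along the segment at $\sss_i$, contradicting its constancy there. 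You work instead at the \emph{midpoint} $\sss_i''$, extract term-by-term tightness in the concavity inequality, and deduce from strict concavity that \emph{every} coordinate in $J^{\neq}$ is strictly above the min at $\sss_i''$, not just one designated $j^*$. That stronger structural fact is what removes the case split: the argmin at $\sss_i''$ sits entirely inside $J^=$, so a single explicit transfer $\vec{y}$ strictly improves every term simultaneously. You also supply the missing reasoning behind the paper's terse claim ``Clearly, $s_{ij^*}>\Delta$'': one must argue $v_{ij^*}>0$ (else both optima bid zero on $j^*$), and that any positive-utility optimum in $\mathcal{TP}(\Delta)$ then bids at least $\Delta$ there, so the midpoint strictly exceeds $\Delta$. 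The only caveats---shared by the paper---are that $J^=$ should really be read as $J^=\cap\{j:v_{ij}>0\}$ when constructing the transfer, and that the uniform choice of $\epsilon$ over $\sss_{-i}$ tacitly treats the support of $\smi$ as finite, consistent with the paper writing $u_i^{\smi}$ as a sum rather than an integral.
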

\begin{proof}
	First it is easy to see using Lemma \ref{lem:leonConcave} that each of the term inside summation of $u_i^\smi$ is a concave function
	w.r.t. $\sss_i$, since minimum of a set of concave functions is a concave function. And by the same argument $u_i^\smi$ itself is
	concave because it is summation of a set of concave functions. 
	
	To the contrary suppose there are two optimum $\sss_i, \sss'_i\in S_i$, $\sss_i\neq \sss'_i$. Clearly $\sum_j s_{ij}=\sum_j s'_{ij} =
	m_i$.  Therefore, there exists a good where bid at $\sss$ is more than that at $\sss'$. Let $j^*\in \{j\ |\ s_{ij}>s'_{ij}\}$. Clearly,
	$s_{ij^*}>\Delta$. 
	
	\begin{claim}
		Given $\sss$, if there is a good $j$ with $s_{ij}>\Delta>0$ and $u_i(\sss_i, \sss_{-i})< \frac{1}{v_{ij}}\frac{s_{ij}}{\sum_{k \neq i}
			s_{kj} + s_{ij}}$, then $\sss_i$ is not a best response to $\sss_{-i}$. 
	\end{claim}
	\begin{proof}
		It is easy to see that there exists $\delta$, such that for $t_{j}=s_{ij}-n\delta\ge \Delta$, $t_{k}=s_{ik}+\delta,\ \forall g\neq
		j$, we have $u_i(\sss) < u_i(\ttt,\sss_{-i})$. 
	\end{proof}
	
	$\forall \sss_{-i}$ with $P(\sss_{-i})>0$ if we have $u_i(\sss_i, \sss_{-i})< \frac{1}{v_{ij^*}}\frac{s_{ij^*}}{\sum_{k \neq i}
		s_{kj} + s_{ij^*}}$, then from the above claim it follows that $\sss_i$ is not an optimal solution of $u_i^\smi$. 
	
	Otherwise for some $\sss_{-i}$ we have $u_i(\sss_i,\sss_{-i})=\frac{1}{v_{ij^*}}\frac{s_{ij^*}}{\sum_{k \neq i}
		s_{kj} + s_{ij^*}}$. Due to concavity of $u^\smi_i$ we have that entire line-segment from $\sss_i$ to $\sss'_i$ is optimal. Call this
	line segment $\CL$. On this line-segment bid on good $j$ is strictly changing (decreasing). Since $u_i(\sss_i,\sss_{-i})=\min_j
	\frac{1}{v_{ij}}\frac{s_{ij}}{\sum_{k \neq i}
		s_{kj} + s_{ij}}$ is governed by the utility obtained from good $j^*$, it is strictly concave on $\CL$ at $\sss_i$. 
	
	Furthermore, at a point if a set of functions are concave and one of them is strictly concave then their summation is strictly concave.
	Therefore, $u_i^\smi$ is strictly concave at $\sss_i$ on $\CL$ and hence either $\sss_i$ is not optimum or other points on $\CL$ are not
	optimum. In either case we get a contradiction.
\end{proof}

Since at Nash equilibrium only optimal strategies can have non-zero probability, the next theorem follows using Lemmas \ref{lem:valid}
and \ref{lem:leonUniqueOpt}.

\begin{theorem}\label{thm:leoPureNE}
	For market with Leontief utilities, 
	every Nash equilibrium of the corresponding $\Delta>0$ trading post game $TP(\Delta)$ is pure. 
\end{theorem}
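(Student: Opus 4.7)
The plan is to combine the two lemmas that immediately precede the statement, namely Lemma \ref{lem:valid} and Lemma \ref{lem:leonUniqueOpt}, together with the standard best-response characterization of Nash equilibria. The high-level idea is: in any mixed Nash equilibrium, every pure strategy in the support of a player's mixed strategy must be a best response against the opponents' mixed profile; if the best response is unique and pure, then the mixed strategy must be degenerate, i.e., pure.

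Concretely, I would proceed as follows. Fix a Nash equilibrium $\ssigma = (\ssigma_1, \ldots, \ssigma_n)$ of $TP(\Delta)$ with $\Delta > 0$, and fix an arbitrary player $i \in N$. The first step is to apply Lemma \ref{lem:valid} to conclude that the induced opponent profile $\smi$ is valid, i.e., for every pure profile $\sss_{-i}$ in the support of $\smi$ and every good $j$, we have $\sum_{k \neq i} s_{kj} > 0$. The second step is to invoke Lemma \ref{lem:leonUniqueOpt}: since $\smi$ is valid, player $i$'s expected payoff function $u_i^{\smi} : S_i \to \Real$ attains its maximum on $S_i$ at a unique point, call it $\sss_i^*$.

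The third step is the standard observation that in a Nash equilibrium, every pure strategy $\sss_i$ in the support of $\ssigma_i$ must satisfy $u_i^{\smi}(\sss_i) = \max_{\sss \in S_i} u_i^{\smi}(\sss) = u_i^{\smi}(\sss_i^*)$; otherwise, $i$ could strictly improve by reassigning the probability mass on $\sss_i$ to $\sss_i^*$. Combined with the uniqueness of the maximizer, this forces $\mathrm{supp}(\ssigma_i) = \{\sss_i^*\}$, so $\ssigma_i$ is the pure strategy $\sss_i^*$. Since $i$ was arbitrary, every player is playing a pure strategy, and hence $\ssigma$ is pure.

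There is essentially no obstacle here, as the work has already been done in Lemmas \ref{lem:valid} and \ref{lem:leonUniqueOpt}. The only care needed is verifying that the best-response argument is valid for Trading Post with a minimum bid $\Delta$: the expected utility of a mixed strategy is a convex combination of the expected utilities of its pure components (against the fixed mixed profile $\smi$), so if some pure strategy in the support is strictly suboptimal then the mixed strategy is strictly suboptimal, contradicting the Nash property. This is routine and immediate from linearity of expectation.
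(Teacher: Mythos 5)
Your proposal matches the paper's own argument: the theorem is proved precisely by combining Lemma~\ref{lem:valid} (validity of $\smi$ at any Nash equilibrium) with Lemma~\ref{lem:leonUniqueOpt} (uniqueness of the maximizer of $u_i^{\smi}$), plus the standard fact that at Nash equilibrium only best responses carry positive probability. You have simply spelled out the same short chain of reasoning in more detail, which is correct and complete.
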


The Trading Post game with Leontief valuations does not always have unique PNE, as can be seen from the next example.

\begin{example}\label{ex:leo}
	Let there be an instance of Trading Post with players $N = \{1, 2\}$, items $M = \{1,2\}$, and Leontief valuations $v_{i,j} = 1$, $\forall i,j
	\in \{1, 2\}$.  Then every strategy profile of the form $v_{1,1} = v_{2,1} = a$, with $a \in (0,1)$ is a pure Nash equilibrium of
	Trading Post, since both players get the items in the optimal ratios. 
\end{example}

\subsection{Beyond Perfect Substitutes and Complements: Concave Valuations}
As stated in preliminaries section, valuation function of each player $i$ namely $u_i:\Real_+^m
\rightarrow \Real$ is concave, non-negative, and non-decreasing in general, where $m=|M|$ is the number of goods in the market. 
Atypical assumption on $u_i$ is that it is {\em non-satiated} \cite{AD54}, {\em i.e.,} $u_i(\xx) > u_i(\yy)$ if $x_j\ge y_j,\ \forall j$ and $x_j>y_j$ for some $j$. We assume following stronger notion of perfect competition.

\medskip

\noindent{\bf Enough Competition.} For every good $j$, there are two players $k\neq i$, such that $\frac{\partial u_i}{\partial x_{ij}}(\xx)$ and $\frac{\partial u_k}{\partial x_{kj}}(\yy)$ are infinity at bundles $\xx,\yy \in \Real^m_+$ if $x_j=0$ and $y_j=0$ respectively. 

\medskip

In Trading Post mechanism, since strategy of a player is to specify money bid on each good, for $\sss_i\in S_i$ and $\sss_{-i}\in S_{-i}$ let us define
\begin{equation}\label{eq:us}
v_i(s_i,s_{-i}) = v_i\left(\frac{s_{i,1}}{s_{i,1}+\sum_{k \neq i} s_{k,1}},\dots,\frac{s_{i,m}}{s_{i,m}+ \sum_{k \neq i} s_{k,m}}\right)
\end{equation}

Then, function $u^\smi_i$ can be written as follows. 

\begin{equation}\label{eq:rcon}
u_i^\smi (\sss_i) = \sum_{\sss_{-i} \in S_{-i}} (\Pi_{k\neq i} \ssigma_i(\sss_k)) v_i(\sss_i,\sss_{-i}),
\ \forall \sss_i \in S_i
\end{equation}

Next we will show that function $u_i^\smi$ is strictly concave on the entire domain of $\Real^{m}$ given that $\smi$ is a NE, therefore it is strictly
concave on $S_i$ as well. For this, first we show that $u_i$ is (strictly) concave in $\sss_i$.

\begin{lemma}\label{lem:ui_sconv}
	For a fixed $\sss_{-i}\in S_{-i}$, function $v_i$ is concave in $\sss_i$, and is strictly concave if $\sss_{-i}$ satisfies $\sum_{k \neq i} s_{ij} >0, \forall j \in M$. 
\end{lemma}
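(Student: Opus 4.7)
The plan is to express $v_i$ as a composition and then apply the standard concavity-of-composition argument. Specifically, for each good $j$, define $c_j := \sum_{k\neq i} s_{kj}$ and let $\phi_j: \mathbb{R}_+ \to [0,1]$ by $\phi_j(t) := t/(t+c_j)$ (with $\phi_j \equiv 0$ when $c_j = 0$ and $t=0$). Then by definition
\[ v_i(s_i, s_{-i}) = u_i\bigl(\phi_1(s_{i,1}), \ldots, \phi_m(s_{i,m})\bigr). \]
A direct computation gives $\phi_j'(t) = c_j/(t+c_j)^2 \geq 0$ and $\phi_j''(t) = -2c_j/(t+c_j)^3$, so each $\phi_j$ is concave and non-decreasing on $\mathbb{R}_+$, with $\phi_j$ being strictly concave precisely when $c_j > 0$.

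Next I would invoke the standard composition lemma: if $u: \mathbb{R}^m \to \mathbb{R}$ is concave and non-decreasing and each $\phi_j: \mathbb{R} \to \mathbb{R}$ is concave, then $u(\phi_1(x_1), \ldots, \phi_m(x_m))$ is concave in $x$. The proof is a two-step chain: for $s_i, t_i \in \mathbb{R}_+^m$ and $\lambda \in (0,1)$, concavity of each $\phi_j$ together with monotonicity of $u_i$ (coordinate-wise) gives
\[ v_i(\lambda s_i + (1-\lambda) t_i) \;\geq\; u_i\bigl(\lambda \phi(s_i) + (1-\lambda)\phi(t_i)\bigr), \]
and then concavity of $u_i$ gives this is $\geq \lambda v_i(s_i) + (1-\lambda) v_i(t_i)$. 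This establishes the first half of the lemma.

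For the strict concavity claim, assume $c_j > 0$ for every $j$ and take $s_i \neq t_i$. There exists some coordinate $j^*$ with $s_{ij^*} \neq t_{ij^*}$, so strict concavity of $\phi_{j^*}$ gives a strict inequality $\phi_{j^*}(\lambda s_{ij^*} + (1-\lambda) t_{ij^*}) > \lambda \phi_{j^*}(s_{ij^*}) + (1-\lambda) \phi_{j^*}(t_{ij^*})$, while the other coordinates satisfy the weak inequality. Thus the vector argument to $u_i$ is coordinate-wise $\geq$ that of the right-hand side and strictly larger in coordinate $j^*$. To lift this strict input gap to a strict output gap, I would appeal to the \emph{enough competition} hypothesis stated just before the lemma: since two players besides $i$ place positive bids on good $j^*$ and $u_i$ has unbounded marginal value at zero in each coordinate, one concludes $u_i$ is strictly increasing in coordinate $j^*$ throughout the relevant region of $(0,1)^m$, which turns the chain's first weak inequality into a strict one and yields $v_i(\lambda s_i + (1-\lambda) t_i) > \lambda v_i(s_i) + (1-\lambda) v_i(t_i)$.

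The main obstacle is this last monotonicity-lifting step: a generic concave non-decreasing $u_i$ may be flat in some coordinate on part of its domain, so strictness of each $\phi_j$ alone does not suffice. The enough-competition assumption is exactly what is needed to rule this out on the interior of $S_i$, and I expect the bulk of the proof's care to be spent justifying that this assumption prevents $u_i$ from being locally flat in a direction along which $\phi$ strictly increased.
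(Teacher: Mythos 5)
Your decomposition and the weak concavity argument are exactly right and coincide with the paper's approach: define $\phi_j$ (which the paper calls $g_j$), note each is concave and non-decreasing, and chain concavity of $u_i$ with monotonicity. You also correctly locate the obstacle in the strict case: strict concavity of each $\phi_j$ produces an input vector to $u_i$ that is coordinate-wise at least as large and strictly larger in some coordinate $j^*$, but a merely non-decreasing concave $u_i$ could be flat in that direction, so something more is needed.

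Where you go wrong is in what you use to close that gap. You invoke \emph{enough competition}, which says that for every good $j$ there exist two players whose marginal utility for good $j$ is unbounded at zero. That is a statement about two particular players' derivatives at the boundary; it neither identifies $i$ as one of those players nor says anything about $u_i$ away from the boundary (a concave function can have infinite slope at $0$ and still plateau later). It therefore does not deliver strict monotonicity of $u_i$ in coordinate $j^*$ on the interior. What the paper actually uses is the separate standing assumption, stated just before the lemma, that $u_i$ is \emph{non-satiated}: $u_i(\xx) > u_i(\yy)$ whenever $x_j \ge y_j$ for all $j$ with strict inequality for some $j$. Non-satiation is precisely what turns the strict coordinate-wise gap in the $\phi$ outputs into a strict gap in $u_i$'s value. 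If you replace your enough-competition invocation with the non-satiation (equivalently, strict monotonicity) assumption on $u_i$, the remainder of your argument matches the paper's proof and is correct.
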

\begin{proof}
	We will show a general claim, and then apply it to our setting. Let $[m]$ denote the set $\{1,\dots,m\}$.
	\begin{claim}
		Given concave, non-decreasing, non-satiated function $f:\Real^m_+\rightarrow \Real$, and $m$ single variate concave functions $g_i:\Real_+\rightarrow \Real_+$ for $i\in[m]$, composition function $f(g_1(x_1),\dots,g_m(x_m))$ is concave. Further if either $f$ or all $g_i$'s are strictly concave then $f(g_1(x_1),\dots,g_m(x_m))$ is strictly concave.
	\end{claim}
	\begin{proof}
		Let $\xx,\yy \in \Real^m_+$, and $\xx'$ and $\yy'$ be such that $x'_i = g_i(x_i)$ and $y'_i=g_i(y_i)$ for all $i \in [m]$. 
		Let $\lambda \in (0, 1)$ be a constant and $\zz=\lambda \xx + (1-\lambda) \yy$, then by concavity of $g_i$'s we get $g_i(z_i)\ge \lambda x'_i + (1-\lambda)y'_i,\forall i\in[m]$. Using this we get, 
		\[
		\begin{array}{l}
		\lambda f(g_1(x_1),\dots,g_m(x_m)) +(1-\lambda) f(g_1(y_1),\dots,g_m(y_m)) \\
		\ \ \ \ \ \  \ \ \ \ \ \ \ \ \ \ \ \ \begin{array}{cl}
		=& \lambda f(\xx') + (1-\lambda) f(\yy') \\
		\le & f(\lambda \xx' + (1-\lambda) \yy')  \ \ \ \ \ \ \ \ (\because \mbox{ $f$ is concave})\\
		\le & f(g_1(z_1),\dots,g_m(z_m))\ \ \ (\because \mbox{ $g_i$s are concave and $f$ is non-decreasing}) \\
		\end{array}
		\end{array}
		\]
		Above, second inequality become strict if $f$ is strictly concave, and the third inequality becomes strict if $g_i$'s are strictly concave due to $f$ being non-satiated.  
	\end{proof}
	
	For our purpose, set $f=v_i$ which is concave, non-decreasing, and non-satiated. Given $\sss_{-i}$ let $D_j = \sum_{k \neq i} s_{ij}$ and set $g_j(s_{ij})=\frac{s_{ij}}{s_{ij} + D_j},\ \forall j\in M$. Clearly, where $\frac{0}{0}$ is considered as $0$.
	\[
	\frac{\partial g_j}{\partial s_{ij}} = \frac{D_j}{(s_{ij}+D_j)^2}\ge 0,\ \ \ \ \frac{\partial^2 g_j}{\partial s_{ij}^2} = \frac{-2D_j}{(s_{ij}+D_j)^3} \le 0
	\]
	
	Thus, $g_j$ is concave, and is strictly concave if $D_j>0$. Since, $v_i(.,\sss_i) = f(g_1(s_{i1}),\dots,g_m(s_{im}))$, proof follows using the above claim.
\end{proof}

Using the property of $v_i$ established in the above lemma, next we will show that Hessian of $u_i^\smi$ is negative definite almost always.

\begin{lemma}\label{lem:sc}
	Given a mixed-strategy profile $\smi$ of all players $k\neq i$ such that there is an $\sss'_{-i}\in S_{-i}$ played with positive probability where $\sum_{k\neq i} s'_{k,j} > 0,\ \forall j$, then payoff function of players $i$, namely $u_i^\smi$, is strictly concave.
\end{lemma}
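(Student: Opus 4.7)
The plan is to decompose $u_i^\smi$ into a probability-weighted sum of the pure-profile functions $v_i(\cdot,\sss_{-i})$ and then invoke Lemma~\ref{lem:ui_sconv} on each summand. Concretely, I would write
$$u_i^\smi(\sss_i) \;=\; \sum_{\sss_{-i}\in S_{-i}} P(\sss_{-i})\, v_i(\sss_i,\sss_{-i}),\qquad P(\sss_{-i})=\prod_{k\neq i}\sigma_k(\sss_k),$$
and read off two properties from Lemma~\ref{lem:ui_sconv}: every summand $v_i(\cdot,\sss_{-i})$ is concave in $\sss_i$ on $S_i$, and whenever the column-sums $\sum_{k\neq i}s_{k,j}$ are strictly positive for every good $j$, the summand is in fact strictly concave in $\sss_i$.

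By hypothesis, there exists at least one profile $\sss'_{-i}$ in the support of $\smi$ (so $P(\sss'_{-i})>0$) with $\sum_{k\neq i}s'_{k,j}>0$ for every $j$, so the corresponding summand is strictly concave and enters the sum with a positive coefficient; the remaining summands are at worst concave. I would then invoke the elementary fact that any nonnegative combination of concave functions in which at least one strictly concave summand carries a positive weight is itself strictly concave: for distinct $\sss_i,\sss''_i\in S_i$ and $\lambda\in(0,1)$, Jensen's inequality holds weakly for each $\sss_{-i}\neq \sss'_{-i}$ and strictly for $\sss'_{-i}$, and summing with the weights $P(\sss_{-i})$ yields strict concavity of $u_i^\smi$ on $S_i$.

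The main obstacle---and really the only subtle point---is how to handle \emph{degenerate} pure profiles $\sss_{-i}$ in the support of $\smi$ for which some column-sum $\sum_{k\neq i}s_{k,j}$ vanishes: on such profiles the Trading Post allocation rule is discontinuous at $s_{i,j}=0$, so the corresponding $v_i(\cdot,\sss_{-i})$ need not be concave on the full boundary of $S_i$. The cleanest way to sidestep this is to carry out the argument on the relatively open set $\{\sss_i\in S_i : s_{i,j}>0\ \forall j\}$, which under the \textbf{enough competition} assumption already contains every candidate best response of player~$i$; on this open set every summand is well-defined, continuous, and concave by the proof of Lemma~\ref{lem:ui_sconv}, so the strict-concavity argument goes through verbatim. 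Since a strictly concave function on a convex set has at most one maximizer, this immediately yields uniqueness (and hence purity) of best responses, which is the use Lemma~\ref{lem:sc} is ultimately put to in showing that every Nash equilibrium of Trading Post with concave valuations is pure.
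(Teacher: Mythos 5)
Your decomposition into $\sum_{\sss_{-i}} P(\sss_{-i})\, v_i(\cdot,\sss_{-i})$ and the appeal to Lemma~\ref{lem:ui_sconv} is exactly the paper's plan; the only difference is that the paper phrases the final step through Hessians---each $H(\sss_{-i})$ is negative semi-definite, the one at $\sss'_{-i}$ is negative definite, and a nonnegative combination of NSD matrices with one PD summand weighted positively is PD---whereas you phrase the same fact through the Jensen inequality. These are equivalent, so the route is essentially identical. Two small remarks on your last paragraph: first, the proposal slightly overstates the problem at degenerate profiles. With the paper's convention $\tfrac{0}{0}=0$, the map $g_j(s_{i,j}) = s_{i,j}/(s_{i,j}+D_j)$ with $D_j=0$ jumps \emph{upward} at $0$ and is therefore still concave (for $\lambda\in(0,1)$, $g_j(\lambda x + (1-\lambda)y)=1 \ge \lambda g_j(x)+(1-\lambda) g_j(y)$ whenever one of $x,y$ is positive), so the degenerate summands \emph{are} concave on all of $S_i$; what actually fails is differentiability at the boundary, which is only a problem for the Hessian formulation, not yours. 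Second, invoking \emph{enough competition} here is out of place: the lemma's hypothesis is weaker (only one support profile need have positive column-sums) and does not presuppose it; enough competition is only used later, in Theorem~\ref{thm:conPureNE}, to guarantee that the hypothesis holds at every Nash equilibrium. Neither point affects correctness---your Jensen-based argument is self-contained precisely because it never needs second derivatives.
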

\begin{proof}
	Fix any $\sss_i \in S_i$, and let $H$ be the Hessian of $u_i^\smi$ at $\sss_i$. By definition of strictly concave functions it suffices to show that $H$ is negative definite. Let $H(\sss_{-i})$ be the Hessian of $v_i(.,\sss_{-i})$ at $\sss_i$ for each $\sss_{-i} \in S_{-i}$. Let $\alpha(\sss_i)=(\Pi_{k\neq i} \ssigma_i(\sss_k))$, then by definition of $u_i^{\smi}$ from (\ref{eq:rcon}) Hessian of $u_i^\smi$ at $\sss_i$ is $H=\sum_{\sss_{-i} \in S_{-i}} \alpha(\sss_i) H(\sss_{-i})$.

	By Lemma \ref{lem:ui_sconv} since each of these $v_i(.,\sss_{-i})$ is concave, $H(\sss_{-i})$ is negative semi-definite. Furthermore, the one with $\sum_{k\neq i} s'_{k,j}>0,\ \forall j$ and $\alpha(\sss'_{-i})>0$ due to the hypothesis gives strictly concave $v_i(.,\sss'_{-i})$ (Lemma \ref{lem:ui_sconv}), and hence $H(\sss'_{-i})$ is negative definite. For any $\xx \in \Real^m$, we have 
	\[
	\xx^T H \xx = \xx^T \left(\sum_{\sss_{-i} \in S_{-i}} \alpha(\sss_i) H(\sss_{-i})\right) \xx = \sum_{\sss_{-i} \in S_{-i}} \alpha(\sss_i) (\xx^TH(\sss_{-i})\xx) <0
	\]
	
	The last inequality follows from the fact that $\xx^TH(\sss_{-i})\xx \le 0,\ \forall \sss_{-i}\in S_{-i}$, and $\xx^TH(\sss'_{-i})\xx<0$. By definition of negative definite matrices, the proof follows. 
\end{proof}

Using {\em enough competition} condition and strict concavity of $u^\smi_i$ established in Lemma \ref{lem:sc}, next we show the main result. 

\begin{theorem}\label{thm:conPureNE}
	In a market with concave utilities and enough competition, every Nash equilibrium of the
	corresponding trading post game is pure. 
\end{theorem}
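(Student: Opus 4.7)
The plan is to show that for each player $i$ and each mixed opponents' profile $\smi$ arising in a Nash equilibrium $\ssigma$, the expected payoff $u_i^{\smi}$ is strictly concave on $S_i$. A strictly concave function on a compact convex set has a unique maximizer, so every pure strategy in the support of $\sigma_i$ would have to coincide with this unique maximizer, forcing $\sigma_i$ to be pure. The strict concavity is supplied by Lemma~\ref{lem:sc}, which only asks that the support of $\smi$ contain at least one pure profile $\sss'_{-i}$ with $\sum_{k\neq i}s'_{k,j}>0$ for every good $j$. The heart of the proof is therefore to leverage enough competition to force opponents to bid positively on every good in every pure profile in the support.

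For each player $i$ and good $j$, enough competition produces a player $k=k(i,j)\neq i$ whose marginal utility $\partial u_k/\partial x_{k,j}$ blows up at every bundle with $x_{k,j}=0$. I claim that at the NE $\ssigma$, every pure strategy $\sss_k$ in the support of $\sigma_k$ satisfies $s_{k,j}>0$. Suppose for contradiction that $\sss_k$ is in the support and $s_{k,j}=0$. Since $\sss_k$ sums to $B_k>0$, there is a coordinate $j'$ with $s_{k,j'}>0$. Consider the deterministic deviation $\sss'_k=\sss_k+\epsilon(e_j-e_{j'})$. For any fixed opponent profile $\sss_{-k}$, concavity of $v_k$ together with the infinite marginal at zero implies that the gain from moving $k$'s share of good $j$ from $0$ to $\epsilon/(\epsilon+D_j)$, with $D_j=\sum_{k'\neq k}s_{k',j}$, dominates any linear function of $\epsilon$ for small $\epsilon$; in the degenerate case $D_j=0$, player $k$ outright captures all of good $j$ for a strictly positive constant gain. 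The loss on coordinate $j'$ is at most $O(\epsilon)$ by concavity and monotonicity of $v_k$. Choosing $\epsilon$ small enough uniformly over the support of $\sigma_{-k}$, the pointwise strict improvement translates into a strict increase of $k$'s expected utility, contradicting the NE property of $\ssigma$.

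Consequently, for every $i$ and $j$, player $k(i,j)$ bids strictly positively on $j$ in every pure strategy in the support of $\sigma_{k(i,j)}$. Therefore every pure profile $\sss_{-i}$ in the support of $\smi$ satisfies $\sum_{k\neq i}s_{k,j}\geq s_{k(i,j),j}>0$ for every good $j$. The hypothesis of Lemma~\ref{lem:sc} is met, so $u_i^{\smi}$ is strictly concave on the convex set $S_i$; its maximizer is unique, and $\sigma_i$ must be concentrated on that single pure strategy. Applying this to every $i$ shows that $\ssigma$ is pure.

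The main obstacle is the inner claim that the distinguished opponent $k(i,j)$ bids positively on $j$ in every pure strategy in her support. The subtlety is that the deviation from a single pure strategy in $\sigma_k$ has to beat $\sigma_{-k}$ in expectation, not merely pointwise, so one needs a choice of shifting coordinate $j'$ and bid size $\epsilon$ that works uniformly over the support of $\sigma_{-k}$. This is handled by the super-linear gain coming from the blow-up of $\partial u_k/\partial x_{k,j}$ at $x_{k,j}=0$, which dominates the linear loss uniformly in the opponents' bids, combined with the separate and easier argument in the degenerate case $D_j=0$.
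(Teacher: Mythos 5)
Your proof follows the same strategy as the paper's: use the enough-competition assumption to ensure every pure opponent profile in the support of $\smi$ has strictly positive spending on every good, then invoke the strict-concavity lemma on $u_i^\smi$ to deduce a unique best response for each player, which forces every Nash equilibrium to be pure. The one place you go further is in explicitly justifying, via the perturbation argument exploiting the blow-up of $\partial u_k/\partial x_{k,j}$ at $x_{k,j}=0$, that the distinguished opponent $k(i,j)$ bids strictly positively on good $j$ in every pure strategy in her equilibrium support --- a step the paper asserts as an immediate consequence of enough competition without supplying the argument.
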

\begin{proof}
	Let $\ssigma=(\ssigma_1,\dots,\ssigma_n)$ be a Nash equilibrium profile, and to the contrary suppose it is mixed. Let $i$ be an player who is playing a mixed strategy. At Nash equilibrium since only strategies in best response can have non-zero probability, there are $\sss_i \neq \sss'_i \in S_i$ such that $u_i^\smi$ is maximized at both. 
	
	Due to enough competition for every good $j$ there is an player $k\neq i$ who always bids on good $j$, i.e., $\forall \sss_k\in S_k$ with $\ssigma_k(\sss_k)>0$, we have $s_{kj}>0$. Therefore, at $\smi$ every $\sss_{-i}$ with non-zero probability satisfies $\sum_{k\neq i} s_{kj}>0,\ \forall j$. Using this Lemma \ref{lem:ui_sconv} implies $u_i^\smi$ is strictly concave and there by has a unique maximum. A contradiction to both $\sss_i$ and $\sss'_i$ being maximizer of $u_i^\smi$. 
\end{proof}

\section{Fairness Guarantees} 

\textsc{Theorem} \ref{thm:TP_fairness} (restated): 
\emph{Every Nash equilibrium of the 
	\begin{itemize}
		\item Fisher market mechanism with concave utilities
		\item Trading Post mechanism with concave and strictly increasing utilities 
	\end{itemize}
	is weighted proportional. Moreover, for every $0 < \Delta < 1/m$, every Nash equilibrium of the parameterized Trading Post game $TP(\Delta)$ with concave and strictly increasing utilities is approximately proportional, guaranteeing at least $\frac{B_i}{\mathcal{B}}\left(1 - \rho_i\right)$ of the optimum for each player $i$, where $\rho_i = \frac{\Delta \cdot(m-1)}{B_i}$, $B_i$ is the player's budget, $\mathcal{B}$ the sum of budgets, and $m$ the number of items.}
\begin{proof}
	We consider first the Fisher market. Let $\vec{u}$ be the true utilities of the players and $\vec{u}^*$ a pure Nash equilibrium of the market. Given a bundle $\vec{x}$, 
	denote by $u_i(\vec{x})$ the (true) utility of player $i$ when receiving $\vec{x}$. Given utilities $\vec{w}$, denote by $u_i(\vec{w})$ the (true) utility of player $i$ for the Fisher market equilibrium allocation obtained when the players declare $\vec{w}$ as their utilities.
	
	A strategy that is always available to player $i$ when facing $\vec{u}^*_{-i}$ is its true valuation, $\vec{u}_i$. Since $\vec{u}^*$ is a pure Nash equilibrium, $u_i(\vec{u}_i^*, \vec{u}_{-i}^*) \geq u_i(\vec{u}_i, \vec{u}_{-i}^*)$. Moreover, player $i$ can always afford to purchase the bundle $\frac{B_i}{\mathcal{B}} \cdot \vec{1}$, which consists of a fraction of $\frac{B_i}{\mathcal{B}}$ of each good $j$. When $i$ is honest, the market equilibrium allocation must give a bundle worth at least as much as
	$\frac{B_i}{\mathcal{B}} \cdot \vec{1}$, and so $u_i(\vec{u}_i, \vec{u}_{-i}^*) \geq u_i(\frac{B_i}{\mathcal{B}} \cdot \vec{1})$. Since the utilities are concave and non-negative, we have that 
	$u_i(\frac{B_i}{\mathcal{B}} \cdot \vec{1}) \geq \frac{B_i}{\mathcal{B}} \cdot u_i(\vec{1})$. By combining the previous inequalities, we get
	$$
	u_i(\vec{u}^*) \geq u_i(\vec{u}_i, \vec{u}_{-i}^*) \geq
	u_i \left(\frac{B_i}{\mathcal{B}} \cdot \vec{1} \right)
	\geq \frac{B_i}{\mathcal{B}} \cdot u_i(\vec{1})
	$$
	
	
	For Trading Post, 
	we first show that the PNE of the game with no minimum bid achieve weighted proportionality when the utilities are concave and strictly increasing. Note that for valuations such as Leontief, Nash equilibria exist only when the corresponding Fisher market prices are strictly positive; we handle these cases later through the parameterized game.
	
	Recall that for Trading Post we normalize the budgets such that $B_i \geq 1$. Let $\eqbb$ be a PNE strategy profile. For any player $i$, let $D_j = \sum_{k\neq i} \tilde{b}_{k,j}$ be the sum of bids from the other players at good $j$.
	We wish to argue that player $i$ has a ``safe'' strategy that guarantees it a fraction of $B_i/\mathcal{B}$ from each good.
	Define strategy $\vec{y} = (y_1, \ldots, y_m)$ of player $i$, such that
	$y_j = \frac{B_i \cdot D_j}{\mathcal{B} - B_i}.$ for each item $j$. Since the utilities are strictly monotonic, it follows that in any PNE we have $D_j > 0$ (since otherwise a player could gain by halving its bid on item $j$, still win the item, and increase the fractions from other items),
	and so $y_j > 0$ for all $j$. It can be verified that $\sum_{j=1}^{m} y_j  = B_i$ and that 
	the fraction received by player $i$ from every good $j$ is:
	\[
	\frac{y_j}{y_j + D_j} = \frac{\left(\frac{B_i \cdot D_j}{\mathcal{B} - B_i}\right)}{\left(\frac{B_i \cdot D_j}{\mathcal{B} - B_i}\right) + D_j} = \frac{B_i \cdot D_j}{B_i \cdot D_j + D_j \cdot (\mathcal{B} - B_i)} = \frac{B_i}{\mathcal{B}}.
	\]
	Thus $\vec{y}$ is a safe strategy that guarantees player $i$ a fraction of $B_i/\mathcal{B}$ of every good. Since $\eqbb$ is a PNE, strategy $\vec{y}$ is not an improvement, which together with concavity of the utilities gives: 
	$$u_i(\eqbb) \geq u_i(\vec{y},\tilde{b}_{-i}) \geq 
	u_i \left(\frac{B_i}{\mathcal{B}} \cdot \vec{1}\right) \geq \frac{B_i}{\mathcal{B}} \cdot u_i(\vec{1}).$$
	
	Consider now the parameterized Trading Post game. For any $0 < \Delta < 1/m$, let $\eqbb$ be a PNE strategy profile and the quantities $D_j$ and $\vec{y}$ as defined above. Let $S_i = \{j \in [m] \; | \; y_j < \Delta\}$ be the set of items on which the safe strategy $\vec{y}$ is below the minimum bid. Note it cannot be the case that $|S_i| = m$ since $\Delta < 1/m$, $\sum_j y_j = B_i$, and $B_i \geq 1$, thus $|S_i| \leq m-1$. Let $B_i' = B_i - \Delta \cdot |S_i|$ and define a modified strategy profile $\vec{z} = (z_1, \ldots, z_m)$ for player $i$:
	\[
	z_j =\left\{
	\begin{array}{ll}
	\Delta, & \mbox{if} \; j \in S_i \\                
	\frac{B_i' \cdot D_j}{\mathcal{B} - B_i'},  & \mbox{otherwise}
	\end{array}
	\right.
	\]
	We show first that the strategy $\vec{z}$ is feasible, by not exceeding $i$'s budget:
	\begin{eqnarray*}
		\sum_{j=1}^{m} z_j & = & \Delta \cdot |S_i| + \sum_{j \not \in S_i} z_j = \Delta \cdot |S_i| + \sum_{j \not \in S_i} \frac{B_i' \cdot D_j}{\mathcal{B} - B_i'} = \Delta \cdot |S_i| + \left( \frac{B_i - \Delta \cdot |S_i|}{\mathcal{B} - B_i + \Delta \cdot |S_i|} \right) \cdot \left( \sum_{j \not \in S_i} D_j \right) \\
		& \leq & \Delta \cdot |S_i| + \left( \frac{B_i - \Delta \cdot |S_i|}{\mathcal{B} - B_i + \Delta \cdot |S_i|} \right) \cdot (\mathcal{B} - B_i) \leq B_i
	\end{eqnarray*}
	Clearly for each item $j \in S_i$, player $i$ receives a fraction of at least $B_i/\mathcal{B}$---this fraction was guaranteed at the bid $y_j$, and $z_j = \Delta > y_j$. From the items $j \not \in S_i$, player $i$ gets a fraction of:
	\[
	\frac{z_j}{z_j + D_j} = \frac{\left( \frac{B_i' \cdot D_j}{\mathcal{B} - B_i'}\right)}{\left( \frac{B_i' \cdot D_j}{\mathcal{B} - B_i'}\right) + D_j} = \frac{B_i'}{\mathcal{B}} = \frac{B_i - \Delta \cdot |S_i|}{\mathcal{B}} \geq \frac{B_i - \Delta \cdot (m-1)}{\mathcal{B}} = \frac{B_i}{\mathcal{B}} \left(1 - \frac{\Delta \cdot (m-1)}{B_i}\right)
	\]
	Let $\rho_i = \frac{\Delta \cdot (m-1)}{B_i}$.
	By concavity and strict monotonicity, we have:
	$$
	u_i(\eqbb) \geq u_i(\vec{z}, \tilde{b}_{-i}) \geq u_i\left( \frac{B_i}{\mathcal{B}} \left(1 - \rho_i\right) \cdot \vec{1}\right) \geq  \frac{B_i}{\mathcal{B}} \left(1 - \rho_i\right) \cdot u_i(\vec{1})
	$$
	Also note the fraction converges to $B_i/\mathcal{B}$ as $\Delta \to 0$. 
	This completes the proof.
\end{proof}


\end{document}